\pgfplotsset{compat=1.14}
\definecolor{myParula01Blue}{RGB}{0,114,189}
\definecolor{myParula02Orange}{RGB}{217,83,25}
\definecolor{myParula03Yellow}{RGB}{237,177,32}
\definecolor{myParula04Purple}{RGB}{126,47,142}
\definecolor{myParula05Green}{RGB}{119,172,48}
\definecolor{myParula06LightBlue}{RGB}{77,190,238}
\definecolor{myParula07Red}{RGB}{162,20,47}
\tikzset{myparula11/.style={color=myParula01Blue,solid,mark=+,mark options={solid}}}
\tikzset{myparula12/.style={color=myParula01Blue,densely dashed,mark=x,mark options={solid}}}
\tikzset{myparula13/.style={color=myParula01Blue,densely dotted,mark=o,mark options={solid}}}
\tikzset{myparula14/.style={color=myParula01Blue,dashdotted,mark=triangle,mark options={solid}}}
\tikzset{myparula15/.style={color=myParula01Blue,dashdotdotted,mark=square,mark options={solid}}}
\tikzset{myparula21/.style={color=myParula02Orange,solid,mark=+,mark options={solid}}}
\tikzset{myparula22/.style={color=myParula02Orange,densely dashed,mark=x,mark options={solid}}}
\tikzset{myparula23/.style={color=myParula02Orange,densely dotted,mark=o,mark options={solid}}}
\tikzset{myparula24/.style={color=myParula02Orange,dashdotted,mark=triangle,mark options={solid}}}
\tikzset{myparula25/.style={color=myParula02Orange,dashdotdotted,mark=square,mark options={solid}}}
\tikzset{myparula31/.style={color=myParula03Yellow,solid,mark=+,mark options={solid}}}
\tikzset{myparula32/.style={color=myParula03Yellow,densely dashed,mark=x,mark options={solid}}}
\tikzset{myparula33/.style={color=myParula03Yellow,densely dotted,mark=o,mark options={solid}}}
\tikzset{myparula34/.style={color=myParula03Yellow,dashdotted,mark=triangle,mark options={solid}}}
\tikzset{myparula35/.style={color=myParula03Yellow,dashdotdotted,mark=square,mark options={solid}}}
\tikzset{myparula41/.style={color=myParula04Purple,solid,mark=+,mark options={solid}}}
\tikzset{myparula42/.style={color=myParula04Purple,densely dashed,mark=x,mark options={solid}}}
\tikzset{myparula43/.style={color=myParula04Purple,densely dotted,mark=o,mark options={solid}}}
\tikzset{myparula44/.style={color=myParula04Purple,dashdotted,mark=triangle,mark options={solid}}}
\tikzset{myparula45/.style={color=myParula04Purple,dashdotdotted,mark=square,mark options={solid}}}
\tikzset{myparula51/.style={color=myParula05Green,solid,mark=+,mark options={solid}}}
\tikzset{myparula52/.style={color=myParula05Green,densely dashed,mark=x,mark options={solid}}}
\tikzset{myparula53/.style={color=myParula05Green,densely dotted,mark=o,mark options={solid}}}
\tikzset{myparula54/.style={color=myParula05Green,dashdotted,mark=triangle,mark options={solid}}}
\tikzset{myparula55/.style={color=myParula05Green,dashdotdotted,mark=square,mark options={solid}}}
\tikzset{myparula61/.style={color=myParula06LightBlue,solid,mark=+,mark options={solid}}}
\tikzset{myparula62/.style={color=myParula06LightBlue,densely dashed,mark=x,mark options={solid}}}
\tikzset{myparula63/.style={color=myParula06LightBlue,densely dotted,mark=o,mark options={solid}}}
\tikzset{myparula64/.style={color=myParula06LightBlue,dashdotted,mark=triangle,mark options={solid}}}
\tikzset{myparula65/.style={color=myParula06LightBlue,dashdotdotted,mark=square,mark options={solid}}}
\tikzset{myparula71/.style={color=myParula07Red,solid,mark=+,mark options={solid}}}
\tikzset{myparula72/.style={color=myParula07Red,densely dashed,mark=x,mark options={solid}}}
\tikzset{myparula73/.style={color=myParula07Red,densely dotted,mark=o,mark options={solid}}}
\tikzset{myparula74/.style={color=myParula07Red,dashdotted,mark=triangle,mark options={solid}}}
\tikzset{myparula75/.style={color=myParula07Red,dashdotdotted,mark=square,mark options={solid}}}
\pgfplotsset{
    mysimpleplot/.style = {
        every axis plot/.prefix style={thick},
        width=1.05\linewidth,
        height=0.75\linewidth,
        title style={font=\footnotesize,align=center},
        legend cell align=left,
        legend style={font=\footnotesize},
        legend columns=3,
        legend style={
            at={(0.5,1)},
            yshift=0.3em,
            anchor=south,
            draw=none,
            /tikz/every even column/.append style={
                column sep=0.3em
            },
            cells={
                align=left
            }
        },
        grid=both,
        minor tick num=4,
        major grid style={solid,draw=gray!50},
        minor grid style={densely dotted,draw=gray!50},
        label style={font=\footnotesize,align=center},
        tick label style={font=\footnotesize},
    },
}
\algnewcommand{\algorithmicswitch}{\textbf{switch}}
\algnewcommand{\algorithmiccase}{\textbf{case}}
\algnewcommand{\algorithmicon}{\textbf{on}}
\algnewcommand{\algorithmicrealfunction}{\textbf{function}}
\algrenewcommand{\algorithmicdo}{}
\algrenewcommand{\algorithmicthen}{}
\algnewcommand{\algorithmicgoto}{\textbf{goto}}%
\algnewcommand{\Goto}[1]{\algorithmicgoto~\ref{#1}}%
\algnewcommand{\algorithmicassert}{\textbf{assert}}%
\algnewcommand{\Assert}[1]{\algorithmicassert~{#1}}%
\algnewcommand{\algorithmicbreak}{\textbf{break}}%
\algnewcommand{\Break}[0]{\algorithmicbreak}%
\algnewcommand{\algorithmicwaiton}{\textbf{wait on}}%
\algnewcommand{\WaitOn}[1]{\algorithmicwaiton~{#1}}%
\algnewcommand{\LineComment}[1]{\State \(\triangleright\) \textit{#1}}
\algnewcommand{\InlineRequire}[1]{\textbf{require} {#1}}
\theoremstyle{plain}
\newtheorem{theorem}{Theorem}
\newtheorem*{theorem*}{Theorem}
\newtheorem{corollary}{Corollary}
\newtheorem{lemma}{Lemma}
\newtheorem{proposition}{Proposition}
\theoremstyle{definition}
\newtheorem{definition}{Definition}
\newcommand{\Prob}[1]{\ensuremath{\operatorname{Pr}\left[#1\right]}}
\newcommand{\poly}{\ensuremath{\mathrm{poly}}}
\newcommand{\negl}{\ensuremath{\mathrm{negl}}}
\newcommand{\cf}[0]{cf.\xspace}
\newcommand{\ie}[0]{\emph{i.e.}\xspace}
\newcommand{\eg}[0]{\emph{e.g.}\xspace}
\newcommand{\ind}[1]{\ensuremath{\mathds{1}\!\left\{#1\right\}}}
\newcommand{\TPivot}[1]{\mathsf{Pivot}(#1)}
\newcommand{\Adverse}[1]{\mathsf{Adv}(#1)}
\newcommand{\Unique}[1]{\mathsf{Unique}(#1)}
\newcommand{\WeakPivot}[1]{\mathsf{WeakPivot}_{w}(#1)}
\newcommand{\grprom}[1]{\ensuremath{\mathsf{MaxDL}_{\bwslot,#1}}}
\newcommand{\grpromempty}{\ensuremath{\mathsf{MaxDL}_{\bwslot}}}
\newcommand{\grpromemptycustom}[1]{\ensuremath{\mathsf{MaxDL}_{#1}}}
\newcommand{\Bint}[1]{\ensuremath{\mathcal{B}\!\left(#1\right]}}
\newcommand{\Lint}[1]{\ensuremath{\mathcal{U}\!\left(#1\right]}}
\newcommand{\Nint}[1]{\ensuremath{\mathcal{A}\!\left(#1\right]}}
\newcommand{\exec}{\mathcal{E}^{\rho,\beta,\Th}}
\newcommand{\len}[1]{\ensuremath{|#1|}}
\newcommand{\trunc}[1]{\ensuremath{^{\lceil #1}}}
\newcommand{\dlrule}{\ensuremath{\mathcal{D}}}
\newcommand{\dlrulefresh}{\ensuremath{\mathcal{D}_{\mathsf{fresh}}}}
\newcommand{\dlruleeqavoid}{\ensuremath{\mathcal{D}_{\mathsf{lhc-ea}}}}
\newcommand{\dlruleblock}{\ensuremath{\mathcal{D}_{\mathsf{blist}}}}
\newcommand{\pa}{\ensuremath{p_\mathrm{A}}}
\newcommand{\pu}{\ensuremath{p_\mathrm{U}}}
\newcommand{\Th}[0]{\ensuremath{T_{\mathrm{h}}}}
\newcommand{\PivotCondition}[0]{\ensuremath{\mathsf{PivotCondition}}}
\newcommand{\SecurityCondition}[0]{\ensuremath{\pu=\frac12 p(1+\epsilon_1)}\xspace}
\newcommand{\SecurityConditionFull}{\ensuremath{(1-\beta)\rho e^{-\rho} = \frac{1-e^{-\rho}}{2}(1+\epsilon_1)}}
\newcommand{\DeltaHeader}{\ensuremath{\Delta_{\mathrm{h}}}}
\newcommand{\Tp}[0]{\ensuremath{\gamma}}
\newcommand{\FrequentPivots}{\mathsf{FreqPivots}_{\gamma}}
\newcommand{\FewBlockOpps}{\mathsf{ShortPrefixes}_{\bwslot}}
\newcommand{\BoundedEq}{\mathsf{FewLongChains}_{\downloadLimit}}
\newcommand{\Td}{\ensuremath{T}}
\newcommand{\Tb}{\ensuremath{T_{\mathrm{b}}}}
\newcommand{\greatblock}{great block\xspace}
\newcommand{\node}[1]{\ensuremath{#1}}
\newcommand{\bwslot}{\ensuremath{K}}
\newcommand{\bwtime}{\ensuremath{C}}
\newcommand{\Tconf}{\ensuremath{T_{\mathrm{conf}}}}
\newcommand{\Tlive}{\ensuremath{T_{\mathrm{live}}}}
\newcommand{\protocol}{\ensuremath{\Pi^{\rho,\tau,\Tconf}}}
\newcommand{\protocolpc}{\ensuremath{\Pi^{\rho,\tau,\Tconf,m}_{\mathrm{pc}}}}
\newcommand{\calC}{\ensuremath{\mathcal{C}}}
\newcommand{\TP}{\ensuremath{\mathrm{TP}}}
\newcommand{\idleslots}{\ensuremath{\phi_{\mathrm{idle}}}}
\newcommand{\bwpassive}{\ensuremath{\phi_{\mathrm{p}}}}
\newcommand{\tpinc}{\ensuremath{\theta_{\mathrm{inc}}}}
\newcommand{\constFrequentPivots}{\ensuremath{\alpha_1}}
\newcommand{\constPivotRace}{\ensuremath{\alpha_1'}}
\newcommand{\constPivotInterval}{\ensuremath{\alpha_1''}}
\newcommand{\constNumBlocks}{\ensuremath{\alpha_2}}
\newcommand{\constBoundedEq}{\ensuremath{\alpha_3}}
\DeclareMathOperator*{\argmax}{arg\,max}
\DeclareMathOperator*{\argmin}{arg\,min}
\newcommand{\Env}{\mathcal Z}
\newcommand{\Adv}{\mathcal A}
\newcommand{\TRUE}{\ensuremath{\mathtt{true}}}
\newcommand{\FALSE}{\ensuremath{\mathtt{false}}}
\newcommand{\txs}{\ensuremath{\mathsf{txs}}}
\newcommand{\LOG}[2]{%
    \ensuremath{\mathsf{LOG}_{#1}^{#2}}
}
\newcommand{\Ftree}{\mathcal F_\mathrm{headertree}^{\rho}}
\newcommand{\Fparallel}{\mathcal F_\mathrm{parallel}^{\rho,m}}
\newcommand{\NIL}{\mathtt{unknown}}
\newcommand{\INVALID}{\mathtt{invalid}}
\newcommand{\Tree}{\mathcal{T}}
\newcommand{\Chain}{\mathcal C}
\newcommand{\HeaderTree}{\mathrm{h}\mathcal{T}}
\newcommand{\DownloadedChain}{\mathrm{d}\Chain}
\newcommand{\dC}{\DownloadedChain}
\newcommand{\TxsMap}{\mathrm{blkTxs}}
\newcommand{\downloadLimit}{\ensuremath{\bwslot}}
\newcommand{\Lottery}{\mathrm{lottery}}
\renewcommand{\index}{\mathsf{idx}}
\newcommand{\primindex}{\mathsf{pri}}
\newcommand{\maxt}{\mathsf{tmax}}
\tikzset{blockchain/.style={
        x=1.5cm,
        y=0.6cm,
        node distance=0.5cm,
        block/.style = {
            minimum width=0.25cm,
            minimum height=0.25cm,
            draw,
            shade,
            top color=white,
            bottom color=black!10,
        },
        block-adv1/.style = {
            block,
            bottom color=myParula01Blue!50,
            draw=myParula01Blue!50!black
        },
        block-adv2/.style = {
            block,
            bottom color=myParula07Red!50,
            draw=myParula07Red!50!black,
        },
        block-adv3/.style = {
            block,
            bottom color=myParula05Green!50,
            draw=myParula05Green!50!black,
        },
        block-green/.style = {
            block,
            bottom color=myParula05Green!50,
            draw=myParula05Green!50!black,
        },
        block-red/.style = {
            block,
            bottom color=myParula07Red!50,
            draw=myParula07Red!70!black,
        },
        block-gray/.style = {
            block,
            bottom color=black!30,
        },
        block-big/.style = {
            minimum width=0.7cm,
            minimum height=0.7cm,
            draw,
            shade,
            top color=white,
            bottom color=black!10,
        },
        branch/.style = {
            minimum width=0.1cm,
            minimum height=0.1cm,
            draw,
            circle,
            inner sep=0,
            fill=black,
        },
        link/.style = {
            -latex,
        },
        hiddenlink/.style = {
            dashed,
        },
        hiddenlink-adv1/.style = {
            hiddenlink,
            draw=myParula01Blue!50!black,
        },
        hiddenlink-adv2/.style = {
            hiddenlink,
            draw=myParula07Red!50!black,
        },
        hiddenlink-adv3/.style = {
            hiddenlink,
            draw=myParula05Green!50!black,
        },
        label/.style = {
        },
        label-adv1/.style = {
            label,
            text=myParula01Blue!50!black,
        },
        label-adv2/.style = {
            label,
            text=myParula07Red!50!black,
        },
        label-adv3/.style = {
            label,
            text=myParula05Green!50!black,
        },
    }
}
\pgfplotsset{
    discard if not/.style 2 args={
        x filter/.code={
            \edef\tempa{\thisrow{#1}}
            \edef\tempb{#2}
            \ifx\tempa\tempb
            \else
                
            \fi
        }
    },
    myresultsplot01/.style={
        legend style={
            at={(0.5,1.15)},
            anchor=north,
            legend columns=-1,
            draw=none,
            /tikz/every even column/.append style={column sep=1em},
            cells={align=left},
            font=\small,
        },
        enlarge x limits=0.15,
        ybar,
        bar width=6mm,
        xtick=data,
        width=\linewidth,
        height=0.6\linewidth,
        ymin=0,
        grid=both,
        minor y tick num=4,
        minor x tick num=1,
        major grid style={solid,draw=gray!50},
        minor grid style={densely dotted,draw=gray!50},
        label style={font=\small},
        tick label style={font=\small},
    },
}
\newcommand{\xmark}{\ding{56}}%
\newcommand*\circledScriptGray[1]{\tikz[baseline=(char.base)]{
            \node[shape=circle,draw=black,text=black,inner sep=2pt,fill=black!10] (char) {\scriptsize #1};}}
\newcommand*\circledScriptGraySlim[1]{\tikz[baseline=(char.base)]{
            \node[shape=circle,draw=black,text=black,inner sep=1pt,fill=black!10] (char) {\scriptsize #1};}}
\begin{document}

\title{Longest Chain Consensus Under Bandwidth Constraint}

\author{Joachim Neu}
\email{jneu@stanford.edu}
\affiliation{\country{}}
\author{Srivatsan Sridhar}
\email{svatsan@stanford.edu}
\affiliation{\country{}}
\author{Lei Yang}
\email{leiy@csail.mit.edu}
\affiliation{\country{}}
\author{David Tse}
\email{dntse@stanford.edu}
\affiliation{\country{}}
\author{Mohammad Alizadeh}
\email{alizadeh@csail.mit.edu}
\affiliation{\country{}}

\thanks{JN, SS and LY contributed equally and are listed alphabetically.}

\begin{abstract}
Spamming attacks are a serious concern for consensus protocols, as witnessed by recent outages of a major blockchain, Solana. They cause congestion and excessive message delays in a \emph{real} network due to its bandwidth constraints. In contrast, longest chain (LC), an important family of consensus protocols, has previously only been proven secure assuming an \emph{idealized} network model in which all messages are delivered within bounded delay. This model-reality mismatch is further aggravated for Proof-of-Stake (PoS) LC where the adversary can spam the network with equivocating blocks. Hence, we extend the network model to capture bandwidth constraints, under which nodes now need to choose carefully which blocks to spend their limited download budget on. To illustrate this point, we show that `download along the longest header chain', a natural download rule for Proof-of-Work (PoW) LC, is insecure for PoS LC. We propose a simple rule `download towards the freshest block', formalize two common heuristics `not downloading equivocations' and `blocklisting', and prove in a unified framework that PoS LC with any one of these download rules is secure in bandwidth-constrained networks. In experiments, we validate our claims and showcase the behavior of these download rules under attack. By composing multiple instances of a PoS LC protocol with a suitable download rule in parallel, we obtain a PoS consensus protocol that achieves a constant fraction of the network's throughput limit even under worst-case adversarial strategies.
\end{abstract}

\maketitle

\vspace{-.7em}
\section{Introduction}
\label{sec:introduction}

\paragraph{Consensus}
In the state machine replication (SMR) formulation
of the consensus problem,
a group of \emph{nodes}
aim to order \emph{transactions}
received from the environment into a common \emph{ledger}.
For this purpose, nodes exchange messages and perform computations
as prescribed by the consensus protocol.
Consensus is made non-trivial
by an adversary
who has some control over message delays,
controls a certain fraction of nodes,
and can cause them to deviate from the protocol
in an arbitrary (\emph{Byzantine}) manner
in a concerted effort to disturb consensus.
\emph{Secure} consensus is achieved if the resulting transaction
ledgers across different honest nodes and points in time
are \emph{consistent}
so that it is meaningful to speak of
\emph{the} single common ledger (which is \emph{safe}),
and if that ledger is \emph{live} in the sense that
every transaction gets assigned a position in the ledger
soon after it is input to honest nodes for the first time.

\paragraph{Nakamoto's Longest Chain Protocol}
In the seminal Bitcoin whitepaper \cite{nakamoto_paper},
Satoshi Nakamoto describes the \emph{longest chain} (LC) consensus protocol.
In this protocol,
honest nodes broadcast blocks to each other.
A block contains a list of transactions,
a nonce, and a reference to a parent block,
resulting in chains of blocks
up to a root genesis block that is common knowledge.
A block is \emph{valid} if a cryptographic hash of it is smaller than a certain
fixed threshold, and if the transactions it contains have been legitimized by
the owners of the affected assets
and are consistent with respect
to transactions preceding it as ordered in the same block and its ancestor blocks.
Every node adds valid blocks it receives to its local copy of the block tree.
Nodes also aim to produce new blocks.
For this purpose they bundle recently received transactions
together with a reference to the block at the tip of the longest chain
in their local block tree
and use brute force search to determine
a nonce such that the resulting block is valid (\ie, the hash inequality is satisfied).
Newfound valid blocks are broadcast to other nodes, completing the process.
Each node outputs as ledger the transactions as ordered in the prefix
of the block that is $k$-deep in the longest chain
of its local block tree.

Besides being remarkably simple,
Nakamoto's LC consensus protocol has two outstanding properties.
First, it enables consensus in a \emph{permissionless} setting
by using \emph{proof-of-work} (PoW) puzzles as a Sybil resistance mechanism \cite{pow1,pow2}.
The bottleneck to block production is finding nonces which
lead to valid blocks which satisfy the hash inequality,
and
as long as the majority of hash power at every point in time
is controlled by honest nodes, honest nodes output a secure ledger 
\cite{backbone,pss16}.
Second, the LC can tolerate \emph{dynamic participation} in the sense
that the ledger remains secure
even as the total hash power participating in the protocol as well as its distribution among participants varies over time.

\paragraph{Proof-of-Stake Longest Chain}
A drawback of Nakamoto's PoW LC is the high electricity consumption
and as a result a tendency for centralization of nodes at places of relatively low electricity cost.
To overcome the drawbacks of PoW LC while retaining its advantages,
protocols such as Ouroboros \cite{kiayias2017ouroboros,david2018ouroboros,badertscher2018ouroboros}
and Sleepy Consensus \cite{sleepy,snowwhite} preserve the operating principle
of the LC
but
replace PoW with \emph{proof-of-stake} (PoS) lotteries, where a party is assigned
random block production
opportunities in proportion to the amount of stake it holds in the system, effectively
substituting `one CPU, one vote' by `one coin, one vote'.
For this purpose,
nodes use synchronized clocks
to count time slots of a predetermined duration.
For every time slot, nodes evaluate
a block production lottery associated
with their cryptographic identity.
For instance in \cite{david2018ouroboros,badertscher2018ouroboros},
nodes get to produce a new valid block
if the output of a \emph{verifiable random function} (VRF)
is below a threshold proportional to
the node's stake.

\paragraph{Proof-of-Stake Longest Chain Under Bandwidth Constraint}
While PoS LC behaves in some aspects similar to PoW LC,
it differs drastically in others.
For instance, in PoS, block production opportunities can be `reused'
in the sense that when a node is eligible to produce a block in a certain time slot,
it can in fact create many equivocating but equally valid blocks for the same time slot,
each potentially with a different set of transactions
and/or attached to a different parent block.
This problem arises because block production `lottery tickets'
in PoS
can not depend on the proposed block's transactions.
Otherwise an adversary could increase
its chances to produce a block by trying
various sets of transactions (\emph{grinding}). 
Similarly, the PoS lotteries can not depend on the parent block, as the adversary could extend several chains at once to increase their chance of block production (\emph{nothing-at-stake} attack \cite{pos_paper}).
In PoW however, each block production opportunity corresponds to a unique block (a combination of transaction set, parent block, and nonce), 
thus the rate of block production opportunities simultaneously bounds
the rate at which new valid blocks can be created.

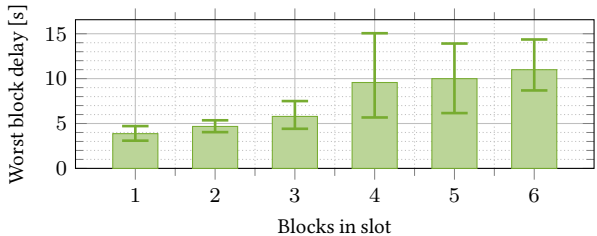
\begin{figure}
    \centering
    \begin{tikzpicture}
        \begin{axis}[
                myresultsplot01,
                ylabel={Worst block delay [s]},
                xlabel={Blocks in slot},
                height=0.42\linewidth,
            ]

            \addplot+ [
                color=myParula05Green,
                fill=myParula05Green!50,
                error bars/.cd,
                y explicit,
                y dir=both,
                error bar style={
                    line width=1pt,
                },
                error mark options={
                    rotate=90,
                    mark size=5pt,
                    line width=1pt
                },
            ] table [
                x=blocks,
                y expr=\thisrow{mean},
                y error minus expr=\thisrow{mean}-\thisrow{p10},
                y error plus expr=\thisrow{p90}-\thisrow{mean},
            ] {figures/network_delay_vs_block_rate.txt};
            
            \legend{};

        \end{axis}
    \end{tikzpicture}%
    \vspace{-0.3em}%
    \caption{Time taken ($10$-th percentile, mean, $90$-th percentile) for all nodes to download all blocks mined in a slot, when different number of new blocks are produced and broadcast in a slot.
    The delay
    increases as the number of blocks 
    is increased,
    showing that network delay 
    is not
    independent of network load. We use Cardano's Ouroboros implementation. Details of the experimental setup are given in Appendix~\ref{sec:appendix-experiment-details-fig1}.}
    \label{fig:networkdelayblockrate}
\end{figure}

Previous analysis \cite{david2018ouroboros,sleepy,dem20} shows that this difference is immaterial
in the synchronous network model where the message propagation delay between honest nodes
is controlled by the adversary, but below a known upper bound $\Delta$. Under such a network model, PoS LC and PoW LC
behave the same in terms of security, transaction throughput and confirmation latency.
This model, however, is over-idealized in that it assumes a fixed
delay upper bound for every single message, 
even when
many
messages are transmitted simultaneously (which may be under normal execution or due to adversarial actions). The model does not capture notions of capacity
and congestion which have a significant impact on the behavior of real networks.
In fact, an increase in network delay with increasing network load
(via increased block size) has been demonstrated previously for Bitcoin \cite{btcnetworkdelay}.
Similarly, increasing the network load (via increasing the number of blocks per slot)
leads to increased network delay in
our experiments (see Figure~\ref{fig:networkdelayblockrate})
with
Cardano's Ouroboros implementation---a PoS protocol.
Once we enrich the network model to capture such phenomena,
the difference in the behavior of PoW LC and PoS LC with respect to
reuse of block production opportunities strikes.
The possibility of producing (infinitely) many equivocating valid blocks per
block production opportunity opens up new adversarial strategies in which
the adversary aims to exhaust limited network resources with useless spam
in an attempt to disturb consensus.
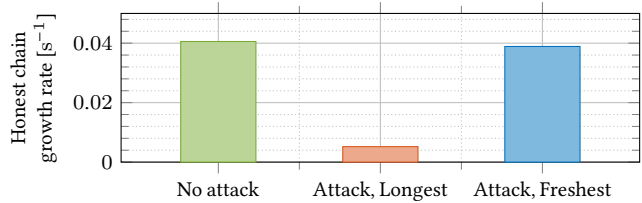
\begin{figure}
    \centering
    \begin{tikzpicture}
        \begin{axis}[
                myresultsplot01,
                ylabel={Honest chain\\growth rate [$\mathrm{s}^{-1}$]},
                xtick={1,2,3},%
                xticklabel style={align=center},
                xticklabels={No attack, {Attack, Longest}, {Attack, Freshest}},
                enlarge x limits=0.3,
                bar width=10mm,
                legend style={
                    at={(0,1)},
                    anchor=north west,
                    legend columns=-1,
                    draw=black,
                    /tikz/every even column/.append style={column sep=1em},
                    cells={align=center},
                    xshift=0.5em,
                    yshift=-0.5em,
                },
                ymin=0.00,
                ymax=0.05,
                yticklabel style={
                    /pgf/number format/fixed,
                    /pgf/number format/precision=2
                },
                scaled y ticks=false,
                height=0.42\linewidth,
                label style={
                    align=center,
                },
            ]

            \addplot+ [
                bar shift=0pt,
                color=myParula05Green,
                fill=myParula05Green!50,
            ] coordinates {
                (1,0.0405555)
            };

            \addplot+ [
                bar shift=0pt,
                color=myParula02Orange,
                fill=myParula02Orange!50,
            ] coordinates {
                (2,0.00519444)
            };

            \addplot+ [
                bar shift=0pt,
                color=myParula01Blue,
                fill=myParula01Blue!50,
            ] coordinates {
                (3,0.0388888)
            };

        \end{axis}
    \end{tikzpicture}%
    \vspace{-0.5em}%
    \caption{The honest chain growth rate in three scenarios: without spamming attack; under attack while downloading the longest header chain first (priority rule in Cardano's block download logic); under attack while downloading the freshest block first (introduced in this work). Details of the experimental setup are given in Section~\ref{sec:spam-demo}. 
    For a 
    trace of the chain growth 
    in the same experiment,
    see Figure~\ref{fig:spam-trace}.}
    \label{fig:networkdelayspamming}
\end{figure}
This protrudes
in another experiment (see Figure~\ref{fig:networkdelayspamming})
where
nodes run PoS LC with
our implementation of Cardano's block
download logic as per \cite{cardano-network}.
Adversarial spamming (through block equivocations) causes significant network traffic at the victim nodes, leaving insufficient bandwidth for the victims to download honest blocks. As a result, block production on the honest chain stalls, and the victim node can be easily fooled by a longer chain from the adversary,
potentially resulting in a safety violation.
\paragraph{Modelling Bandwidth Constraints}
We model a bandwidth constrained network as follows.
Recall that blocks in Nakamoto consensus consist of
a list of transactions as \emph{block content}, and the information pertaining to
the PoS/PoW lottery and the block tree structure (reference to parent block)
as \emph{block header}.
Since a block's header is small compared to its content,
we assume that block headers propagate with a known delay upper bound $\DeltaHeader$
between honest nodes.
At any point after obtaining a block header, a node can request the corresponding
block content from the network.
Since a block's content is large, every honest node can only
download a limited number of blocks' contents per time slot.
This model is inspired by the peer-to-peer network designs
used for blockchain protocols. For instance, in the Cardano network
\cite{cardano-network-protocols,cardano-network-spec},
each node advertises its block header chain to its peers,
which in turn decide based on the block headers
which block contents to fetch.
Without a carefully designed \emph{download rule} for the protocol
to determine which blocks
honest nodes should spend their scarce bandwidth on, 
we will see that
consensus cannot be achieved with PoS LC.

\begin{figure}
    \centering
    \begin{tikzpicture}
        \footnotesize
        
        \begin{scope}[blockchain,x=1cm,y=0.8cm]

            \coordinate (G) at (-1,0);
            
            \node [block-gray] (b0) at (0,0) {};
            \node [anchor=south east,yshift=0.6em] at (b0) {$b_0$};
            
            \node [block-red] (bA1b) at (1,2.0) {};
            \node [anchor=south,yshift=0.6em,xshift=0em] at (bA1b) {$b^{(1)}$};
            \node [anchor=north west,inner sep=0] at (bA1b) {\normalsize\textcolor{myParula07Red}{\xmark{}}};
            \node [block-red] (bA1B) at (2,2.0) {};
            \node (advLabelTop) [anchor=south,yshift=0.6em] at (bA1B) {$B^{(1)}$};
            
            \node [block-red] (bA2b) at (1,1.0) {};
            \node [anchor=south,yshift=0.6em,xshift=0em] at (bA2b) {$b^{(2)}$};
            \node [anchor=north west,inner sep=0] at (bA2b) {\normalsize\textcolor{myParula07Red}{\xmark{}}};
            \node [block-red] (bA2B) at (2,1.0) {};
            \node [anchor=south,yshift=0.6em] at (bA2B) {$B^{(2)}$};
            
            \node (bA3b) at (1,0.5) {$\vdots$};
            \node (bA3B) at (2,0.5) {$\vdots$};
            
            \node [block-green] (bH1) at (3,-0.5) {};
            \node (honLabelTop) [anchor=south,yshift=0.6em,xshift=0em] at (bH1) {$b_1$};
            \node [block-green] (bH2) at (4,-1.5) {};
            \node (honLabelBottom) [anchor=south,yshift=0.6em,xshift=0em] at (bH2) {$b_2$};
            
            \node (bH3) at (4,-2) {$\vdots$};

            \draw [link] (b0) -- (G);
            \draw [link] (bA1B) -- (bA1b);
            \draw [link] (bA1b) -- (b0);
            \draw [link] (bA2B) -- (bA2b);
            \draw [link] (bA2b) -- (b0);
            \draw [link] (bH1) -- (b0);
            \draw [link] (bH2) -- (b0);

            \draw [decorate,decoration={brace,amplitude=4pt}]
                (advLabelTop.north east) -- (bA3B.south east -| advLabelTop.north east)
                node [midway,anchor=west,xshift=4pt,align=left]
                {Adversarial equivocations luring\\ honest nodes away from down-\\ loading honest blocks (\textcolor{myParula07Red}{\xmark{}}: blocks\\ with invalid content)};

            \draw [decorate,decoration={brace,amplitude=4pt}]
                (honLabelTop.north east -| honLabelBottom.north east) -- (bH3.south east -| honLabelBottom.north east)
                node [midway,anchor=west,xshift=4pt,align=left]
                {Honest blocks are\\ never downloaded\\because of shorter\\ header chain};
                
            \node (labelb0) [align=center] at (-0.5,-1.75) {Last valid block\\downloaded by\\all honest nodes};
            \draw [link,dashed,shorten >=0.75em,shorten <=0.75em,black!40] (labelb0) -- (b0);
            
            \begin{scope}%
                \draw [Latex-] (6,3) -- (-1.5,3) node [above right] {\emph{Time slots}};
                
                \draw [] (0,3) ++(0,-0.2) -- ++(0,0.4) node [above] {$t_0$};
                \draw [] (1,3) ++(0,-0.2) -- ++(0,0.4) node [above] {$t$};
                \draw [] (2,3) ++(0,-0.2) -- ++(0,0.4) node [above] {$t'$};
                \draw [] (3,3) ++(0,-0.2) -- ++(0,0.4) node [above] {$t_1$};
                \draw [] (4,3) ++(0,-0.2) -- ++(0,0.4) node [above] {$t_2$};
            \end{scope}

        \end{scope}
    \end{tikzpicture}
    \caption{In PoS LC with `download along the longest header chain' rule,
    an adversary can stall consensus indefinitely if it
    has two consecutive block production opportunities $t < t'$ at which it creates
    infinitely many equivocating chains $b_0 \leftarrow b^{(i)} \leftarrow B^{(i)}$
    where $b^{(i)}$ have invalid content.
    The blocks of later honest block production opportunities
    $... > t_2 > t_1 > t' > t$ are never downloaded
    by other honest nodes, because they prioritize the longer adversarial
    header chains, wasting their bandwidth downloading each $b^{(i)}$ only
    to discard it immediately thereafter because of invalid content.}
    \label{fig:poslclongestheaderchainattack}
\end{figure}
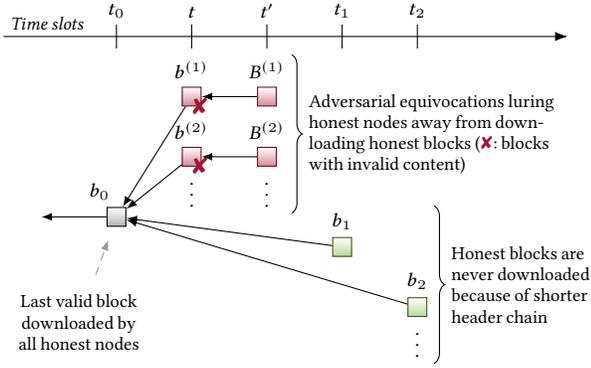

\paragraph{The `Download Along The Longest Header Chain' Rule}
Given that in LC, honest nodes extend the
longest chain,
a natural download rule is `download along the longest header chain',
\ie, based on the block tree structure obtained from block headers,
a node identifies the longest (header) chain, and prioritizes downloading the blocks
along that chain.
Indeed, Bitcoin does exactly that
\cite{btcdevp2pnetworkheadersfirst}.
Cardano's Ouroboros implementation also follows this paradigm in broad strokes
\cite{cardanoforkchoice1,cardano-network-spec,cardano-network-protocols}
for chain selection \cite{cardanoforkchoice2} and block downloads \cite{cardanoforkchoice3}.
As long as the block production rate is low relative to the download bandwidth,
this (and other rules that ensure that nodes download a block at most once) work well for PoW LC,
simply because the 
number of distinct blocks is limited by the number of block production opportunities.

Unfortunately, as illustrated in Figure~\ref{fig:poslclongestheaderchainattack},
this download rule fails for PoS LC in that the resulting protocol is not secure, even if the block production rate is low and the adversary controls a small minority of the stake.
The reason is that the adversary can use consecutive adversarial
block production opportunities (at $t$ and $t'$ in Figure~\ref{fig:poslclongestheaderchainattack})
to produce infinitely many equivocating chains
($b_0 \leftarrow b^{(i)} \leftarrow B^{(i)}$ in Figure~\ref{fig:poslclongestheaderchainattack}).
To avoid honest nodes building on these equivocating chains, the adversary fills
$b^{(i)}$ with invalid content, which honest nodes can only
detect after they have already wasted their scarce bandwidth to download it.
As a result, honest nodes produce blocks off $b_0$
in their block production opportunities
($b_1, b_2, ...$ at $t_1 < t_2 < ...$ in Figure~\ref{fig:poslclongestheaderchainattack}),
but these are never downloaded by other honest nodes because
the adversarial header chains are longer
and thus of higher download priority.
This is clearly an attack on liveness but it also implies an attack on safety because the adversary could now build a chain longer than the honest parties (who are stalled) even though the adversary owns very little stake.
The impact of this attack is seen in our experiments with a PoS LC node implementing this download rule (Figure~\ref{fig:networkdelayspamming}).

The above attack suggests that securing PoS LC under bandwidth constraints requires a carefully designed download rule.
In practice, protocols follow various heuristics
to attempt to mitigate
spamming/equivocation attacks.
However, a rigorous analysis is usually missing.
Our goal in this work is to identify simple download rules that can be proven secure in the bandwidth constrained model.

In the attack
in Figure~\ref{fig:poslclongestheaderchainattack},
we observe that even though new honest blocks are being proposed, 
the download rule prioritizes older adversarial equivocating blocks. 
If honest nodes downloaded the `fresher' blocks proposed in more recent time slots $t_1, t_2, ...$ instead, then this attack would not succeed. 
This intuition extends beyond the specific attack of Figure~\ref{fig:poslclongestheaderchainattack}.
We would like that
whenever an honest node proposes a block, other honest nodes download that block and its prefix `soon'.
This way,
honest nodes have a chance to produce blocks extending it,
and to align their block production efforts toward a particular chain.
This is arguably the key to LC security and central to prior
security analysis \cite{dem20,sleepy}
on which we build.
This insight naturally motivates the following simple download rule.

\paragraph{The `Download Towards The Freshest Block' Rule}
We propose a simple download rule for PoS LC,
`download towards the freshest block',
\ie, in every time slot an honest node identifies the block proposed in the most
recent time slot based on the header information,
and downloads any missing blocks in its prefix, including
that freshest block.
Thus,
when an honest node proposes a block, within the same time slot, other honest nodes prioritize downloading that block and its prefix.
The length of the prefix cannot be too long since valid chains cannot contain equivocations.
By making the time slot long enough to allow downloading the whole prefix, this rule directly satisfies our desired property that
honest nodes download honestly proposed blocks `soon' (within the same time slot).
This property is the key step in prior security analysis, thus allowing us to use prior techniques to prove the security of PoS LC with this download rule.
In particular, this download rule
avoids the attack of Figure~\ref{fig:poslclongestheaderchainattack}
and the honest chain's growth rate remains unaffected by this spamming attack (Figure~\ref{fig:networkdelayspamming}).
Importantly, note that the freshest block rule is only a download priority rule. Honest nodes still propose blocks extending their longest valid downloaded chain.

\paragraph{Other Download Rules}
More generally, we identify other download rules with the property that whenever an honest block is proposed, all honest nodes download that block and its prefix `soon'. Thus, we develop a unified framework to prove security of PoS LC with any download rule with this property. We consider the following two commonly proposed heuristics against equivocations, formalize them and give a rigorous proof of security.
\begin{enumerate}
    \item `Equivocation avoidance': We modify `download along the longest header chain' such that an honest node refrains from downloading a chain whose tip is an equivocating block header (\ie, it has seen another block header from the same slot and validator).
    A rule of this kind can be seen used in PoS Ethereum \cite{eth2-spec-p2p}.
    \item `Blocklisting': An honest node avoids downloading any chain whose tip is proposed by a validator that has equivocated before (in its view of block headers). Note that this notion of blocklisting only affects the download priority rule. It does not invalidate a block, as doing so independently at each node risks introducing split views, and doing so consistently would require consensus 
    in the first place.
\end{enumerate}
Due to the simplicity and efficiency of the `download towards the freshest block' rule, and because it directly satisfies the key property that enables our security proofs, we use this rule as a running example to illustrate our model and the analysis. We then extend this analysis to the other two rules.

\paragraph{Our Contributions}
By means of experiments and a concrete attack strategy, we show that the bounded delay model fails to capture network congestion and spamming attacks.
We show that using suitable download rules, we can provably secure PoS LC 
in networks with bandwidth constraint in which
the adversary can 
(\emph{inter alia})
spam the network with equivocating blocks
at an arbitrary rate,
withhold blocks,
and
release blocks with invalid content
that honest nodes discard after downloading.
We identify a key property of a download rule that enables it to secure PoS LC. 
We use this to develop a unified framework to prove the security of PoS LC with any download rule that satisfies this property.
We propose a simple rule `download towards the freshest block' that satisfies this property. We also formalize heuristics in the form of the `equivocation avoidance' and `blocklisting' rules for which we provide a rigorous security proof using our 
framework.
We show that parallel composition of multiple instances of PoS LC with a secure download rule
(inspired by \cite{near-optimal-thruput})
yields a consensus protocol that achieves a constant fraction of the
network's throughput limit even in the worst case.

\paragraph{Related Work}
Network-level attacks on Bitcoin have been studied in \cite{revisit-network-level-attacks,bitcoin_hijack}.
Eclipse attacks on peer-to-peer networks, where an adversary uses several IP addresses to occupy all connections maintained by a victim node
and thus cut said node off from the network, have been studied in \cite{eclipse,eclipse2,eclipse3,eclipse4} and in the context of Bitcoin in \cite{eclipse1}.
The authors of \cite{bftgossip} show that if one can connect with consensus validators that are pseudo-randomly chosen every few slots based on their stake,
then one can secure PoS LC against Sybil attacks and eclipse attacks on the network layer.
These earlier works 
share their focus on network topology, an important aspect not captured
by the bounded delay network model.
Our work instead focuses on bandwidth constraints,
an orthogonal feature of real networks
not captured by the bounded delay model.
However, our works share the philosophy of co-designing consensus and network layer protocols.

The impact of spamming was seen recently in the temporary shutdown of a PoS protocol Solana \cite{solana} on multiple occasions in 2021-2022 \cite{solanastop,solanastop2,solanastopagain}.
These shutdowns were 
reportedly 
due to an increase in the transaction load in the network, 
and the ``lack of prioritization of network-critical messaging caused the network to start forking'' \cite{solanastop}.
These incidents indicate that
messages that are critical for consensus among honest nodes (\eg, blocks)
must be appropriately prioritized during periods of congestion.
Consensus-critical blocks are easily
prioritized at the network level over
less critical
transaction requests,
as the two are different kinds.
Thus, this work focuses instead on 
the
design of a download rule with which
the consensus protocol
assists the network
in prioritizing
consensus-critical blocks
over 
similarly looking
spam blocks.

In practice, implementations show awareness of
and attempt to mitigate
equivocation-based spamming attacks using
various heuristics.
However, their efficacy and side effects
are often not fully understood.
For instance, Cardano's Ouroboros implementation
disconnects from peers once they propagate invalid or equivocating blocks
\cite{cardanoforkchoice1,cardano-network-spec,cardano-network-protocols}.
However,
an adversary can boost the impact of its attack 
by creating more Sybil network peers
(recall that there is no relation between consensus validators
and peers in the underlying communication network),
so that disconnected peers are likely replaced by
new adversarial peers, ready to waste more of the honest node's resources
\cite{eclipse2,eclipse3,eclipse4}.

Slashing is routinely proposed as a solution to mitigate spamming with equivocations, as such attacks can be attributed to specific validators \cite{casper,weaksubjectivity,forensics,availability-accountability}. 
Typical crypto-economic guarantees are of the form ``if human intervention is needed to recover from a safety attack, then 33\% of stake is slashable`` \cite{gasper,forensics,availability-accountability}.
However, the attack in Figure~\ref{fig:poslclongestheaderchainattack} only requires two consecutive block production opportunities, which can be obtained by an adversary with a very small fraction of stake. 
Hence in this case, slashing would impose a very small penalty for an attack that violates security and potentially incurs large costs due to human intervention and other losses.
Instead, we take the approach of preventing attacks in the first place by using download rules that are proven secure. Once security is proven, slashing can be employed as an additional measure to disincentivize equivocation-based spamming.

The need for careful modelling of bandwidth constraints in the context of high throughput protocols was identified in \cite{prism,near-optimal-thruput}.
Earlier works \cite{prism,ghost,btcnetworkdelay} note that the network delay 
increases with the 
message size
(\ie, block size in this case).
In this model, it is assumed that as long as the network load is less than the bandwidth,
every message is downloaded within a given delay bound which depends on the message size but is independent of total network load. 

In the PoS context, \cite{near-optimal-thruput} captures congestion due to increased network load by modelling the inbox of each node as a queue. Each message undergoes a propagation delay before being added to the recipient's inbox queue. The recipient can retrieve messages from their queue at a rate limited by their bandwidth,
resulting in a queuing delay.
However, the security result \cite[Theorem~1]{near-optimal-thruput}
still assumes a bounded (propagation+queuing) delay.
This assumption is only shown to hold under honest executions when the adversary does not corrupt any nodes and does not send or delay any messages \cite[Theorem~3]{near-optimal-thruput}, and therefore the security claim does not hold for all adversarial strategies.
In particular, this excludes adversaries that can spam the network using equivocating blocks and cause attacks such as in Figure~\ref{fig:poslclongestheaderchainattack}.
The model we use is a variant of that in \cite{near-optimal-thruput} with the difference that nodes can inspect a small segment (block header) at the beginning of every message in their queue and decide based on that which message (block content) to prioritize for download (subject to the bandwidth constraint). This modification allows us to prove security against a general adversary, even with unbounded equivocations.

Although our work is the first to prove PoS LC secure under bandwidth constraints, our analysis builds on tools from several years of security analysis for LC protocols
\cite{backbone,pss16,ren,sleepy,david2018ouroboros,pos_paper,dem20,tight_bitcoin}, 
particularly the concept of pivots \cite{sleepy}
(\cf Nakamoto blocks \cite{dem20}).

\paragraph{Outline}
We state the PoS LC protocol augmented with a download rule and introduce our formal
model for bandwidth constrained networks in Section~\ref{sec:modelprotocol}.
In Section~\ref{sec:securityargument}, we provide a high-level description of our unified framework for proving security of PoS LC with different download rules 
under bandwidth constraints.
In Section~\ref{sec:analysis}, we show the key steps towards this unified proof, and analyze the `download towards the freshest block' rule.
We present experimental evidence for the robustness
and superior performance
of
the `freshest block' rule
in Section~\ref{sec:experiments}.
We formalize and analyze the `equivocation avoidance' and `blocklisting' rules in Section~\ref{sec:other_download_rules}.
Finally, we sketch in Section~\ref{sec:parallel}
how to use PoS LC with a suitable download rule as a building block
to obtain a consensus protocol with a constant fraction of the network's throughput limit in the worst case.

\section{Protocol and Model}
\label{sec:modelprotocol}

\begin{algorithm}[t]
    \caption{Idealized PoS LC consensus protocol $\protocol$ with a download rule
    (helper functions: Appendix~\ref{sec:appendix-helperfunctions-pseudocode}, $\Ftree$: Algorithm~\ref{algo:pseudocode-Ftree}, $\Env$: Appendix~\ref{sec:appendix-environment})}
    \label{algo:pseudocode-pos-lc}
    \begin{algorithmic}[1]
        \small
        \On{$\Call{init}{\mathsf{genesisHeaderChain}, \mathsf{genesisTxs}}$}
        \LineComment{Initialize header tree $\HeaderTree$, longest downloaded chain $\DownloadedChain$, and mapping from block headers
        to contents (lists of transactions) $\TxsMap$}
            \State $\HeaderTree, \DownloadedChain \gets \{\mathsf{genesisHeaderChain}\}, \mathsf{genesisHeaderChain}$
            \State $\TxsMap[\DownloadedChain] \gets \mathsf{genesisTxs}$
            \Comment{Unset entries of $\TxsMap$ are $\NIL$}
        \EndOn
        \On{$\Call{receivedHeaderChain}{\Chain}$} \Comment{Called by $\Env$ or $\Adv$}
            \label{loc:pseudocode-pos-lc-receiveheader}
            \State \Assert{$\Ftree.\Call{verify}{\Chain}$}
            \Comment{Validate header chain (Algorithm~\ref{algo:pseudocode-Ftree})}
            \State $\HeaderTree \gets \HeaderTree \cup \operatorname{prefixChainsOf}(\Chain)$ \Comment{Add $\Chain$ and its prefixes to $\HeaderTree$}
            \State $\Env.\Call{broadcastHeaderChain}{\Chain}$
        \EndOn
        \On{$\Call{receivedContent}{\Chain, \txs}$} \Comment{Called by $\Env$ or $\Adv$}
            \label{loc:pseudocode-pos-lc-receivecontent}
            \LineComment{Defer processing the content until we received the corresponding header chain $C$, and its prefixes' 
            contents
            are downloaded and valid}
            \State \textbf{defer until} $\Chain \in \HeaderTree$
            \State \textbf{defer until} $\forall \Chain' \prec \Chain\colon \TxsMap[\Chain'] \not\in \{\NIL, \INVALID\}$
            \State \Assert{$\Chain.\mathsf{txsHash} = \operatorname{Hash}(\txs)$}
            \If{$\operatorname{txsAreSemanticallyValidWrtPrefixesOf}(\Chain, \txs)$}
                \State $\TxsMap[\Chain] \gets \txs$
                \State $\Env.\Call{uploadContent}{\Chain, \txs}$
            \Else
                \State $\TxsMap[\Chain] \gets \INVALID$
            \EndIf
            \LineComment{Update the longest downloaded chain among downloaded valid chains}
            \State $\Tree' \gets \HeaderTree \setminus \{ \Chain' \in \HeaderTree \mid \TxsMap[\Chain'] \in \{\NIL,\INVALID\} \}$
            \State $\DownloadedChain \gets \argmax_{\Chain \in \Tree'} \len{\Chain}$
            
        \EndOn
        \On{$\Call{scheduleContentDownload}{\null}$}
            \label{loc:pseudocode-pos-lc-downloadrule}
            \LineComment{Pick next block to download according to download rule (\cf Algs.~\ref{algo:freshest-block-rule}, \ref{algo:longest-header-chain-equivoc-avoid-rule})}
            \If{$\Chain \neq \bot$ \textbf{with} $\Chain \gets \operatorname{downloadRule}(\HeaderTree,\TxsMap)$} 
                \State $\Env.\Call{requestContent}{\Chain}$
            \EndIf
            \LineComment{$\Call{receivedContent}{}$ will be triggered by $\Env$ on successful download}
        \EndOn

        \For{time slots $t \gets 1,...,\Th$ of duration $\tau$}
            \Comment{PoS LC protocol main loop}
            \label{loc:pseudocode-pos-lc-mainloop}
            \State $\txs \gets \Env.\Call{receivePendingTxsSemanticallyValidWrt}{\DownloadedChain}$
            \LineComment{Produce and disseminate a new block if eligible, see Alg.~\ref{algo:pseudocode-Ftree}}
            \If{$\Chain' \neq \bot$ \textbf{with} $\Chain' \gets \Ftree.\Call{extend}{t,\DownloadedChain,\txs}$}
                \State $\Env.\Call{uploadContent}{\Chain', \txs}$
                \State $\Env.\Call{broadcastHeaderChain}{\Chain'}$
            \EndIf
            \LineComment{Download block contents (starting after $\DeltaHeader$ time into the slot)}
            \While{end of current time slot $t$ not reached}
                \State $\Call{scheduleContentDownload}{ }$
            \EndWhile
            \State $\Env.\Call{outputLedger}{\DownloadedChain^{\lceil \Tconf}}$
            \Comment{Ledger of node $i$ at time $t$: $\LOG{i}{t}$}
        \EndFor
    \end{algorithmic}
\end{algorithm}

\begin{algorithm}[t]
    \caption{`Freshest block' download rule}
    \label{algo:freshest-block-rule}
    \begin{algorithmic}[1]
        \small
        \RealFunction{$\operatorname{downloadFreshestBlock}(\HeaderTree, \TxsMap)$}
            \label{loc:freshest-block-rule-lines}
            \LineComment{Ignore chains with invalid content in any block}
            \State $\Tree \gets \{ \Chain \in \HeaderTree \mid \forall \Chain' \preceq \Chain\colon \TxsMap[\Chain'] \neq \INVALID \}$
            \LineComment{Find the chain ending in the freshest block (\ie, from most recent slot)}
            \State $\Chain \gets \argmax_{\Chain' \in \Tree} \Chain'.\mathsf{time}$
            \LineComment{Find the first not downloaded block on that chain
            (if non-existent: $\bot$)}
            \label{loc:pseudocode-pos-lc-empty-chain}
            \State $\Chain \gets \argmin_{\Chain' \preceq \Chain\colon \TxsMap[\Chain'] = \NIL} |\Chain'|$
            \State \Return{$\Chain$}
        \EndRealFunction
    \end{algorithmic}
\end{algorithm}

\begin{algorithm}[t]
    \caption{Idealized functionality $\Ftree$: block production lottery and header block chain structure
    (\cf Appendix~\ref{sec:appendix-helperfunctions-pseudocode})}
    \label{algo:pseudocode-Ftree}
    \begin{algorithmic}[1]
        \small
        \On{$\Call{init}{\mathsf{genesisHeaderChain},\mathsf{numNodes}}$}
            \State $N \gets \mathsf{numNodes}$
            \State $\Tree \gets \{\mathsf{genesisHeaderChain}\}$
            \Comment{Global set of valid header chains}
        \EndOn
        \On{$\Call{isLeader}{P,t}$ \textbf{from} $\mathcal A$ (only for adversarial $P$) or $\Ftree$}
        \LineComment{Abstraction of proof-of-stake lottery: each node is chosen leader in each slot with probability $\rho/N$ independent of other nodes and slots}
            \If{$\Lottery[P,t] = \bot$}
                \label{loc:rho_definition_in_alg}
                \State $\Lottery[P,t] \overset{\$}{\gets} (\TRUE \text{ with probability $\rho/N$, else } \FALSE)$
                \label{loc:pseudocode-Ftree-blockproductionlottery}
            \EndIf
            \State \Return{$\Lottery[P,t]$}
        \EndOn
        \On{$\Call{extend}{t',\Chain,\txs}$ \textbf{from} node $P$ (possibly adversarial) \textbf{at} time slot $t$}
            \label{loc:pseudocode-Ftree-binding}
            \LineComment{New header chain is valid if parent chain $\Chain$ is valid, $P$ is leader for slot $t'$, and $t'$ is later than the tip of $\Chain$ and is not in the future}
            \If{$(\Chain\in\Tree) \land \Call{isLeader}{P,t'} \land (\Chain.\mathsf{time} < t' \leq t)$} 
            \LineComment{Produce a new block header extending $\Chain$}
            \State $\Chain' \gets \Chain\|\operatorname{newBlock}(\mathsf{time}\colon t',\mathsf{node}\colon P,\mathsf{txsHash}\colon \operatorname{Hash}(\txs))$ 
                \State $\Tree \gets \Tree \cup \{\Chain'\}$ \Comment{Register  new header chain in header tree}
                \State \Return{$\Chain'$}
            \EndIf
            \State \Return{$\bot$}
        \EndOn
        \On{$\Call{verify}{\Chain}$}
            \State \Return{$\Chain \in \Tree$} \Comment{Header chain is valid if 
            previously added to header tree}
        \EndOn
    \end{algorithmic}
\end{algorithm}

\begin{figure}
    \centering
    \begin{tikzpicture}[x=1.8cm]
        \footnotesize
        
        \node [align=left,anchor=west] at (-2.5,0) {\textsc{Participants}};
        \node [align=left,anchor=west] at (-2.5,-2) {\textsc{Content}\\\textsc{Repository}};
        \node [align=left,anchor=west] at (-2.5,1) {\textsc{Header}\\\textsc{Broadcasting}};
        
        \node [inner sep=0] (u1) at (-1,0) {\includegraphics[width=0.75cm]{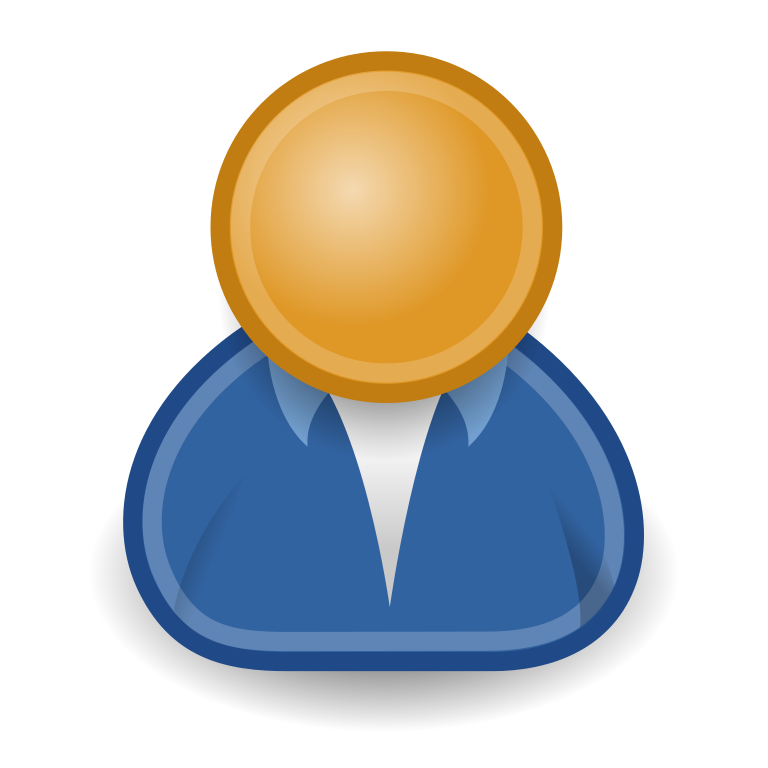}};
        \node [inner sep=0] (u2) at (0,0) {\includegraphics[width=0.75cm]{figures/icon-honest.png}};
        \node [inner sep=0] (u3) at (1,0) {\includegraphics[width=0.75cm]{figures/icon-honest.png}};
        \node [inner sep=0] (a) at (2,0) {\includegraphics[width=0.75cm]{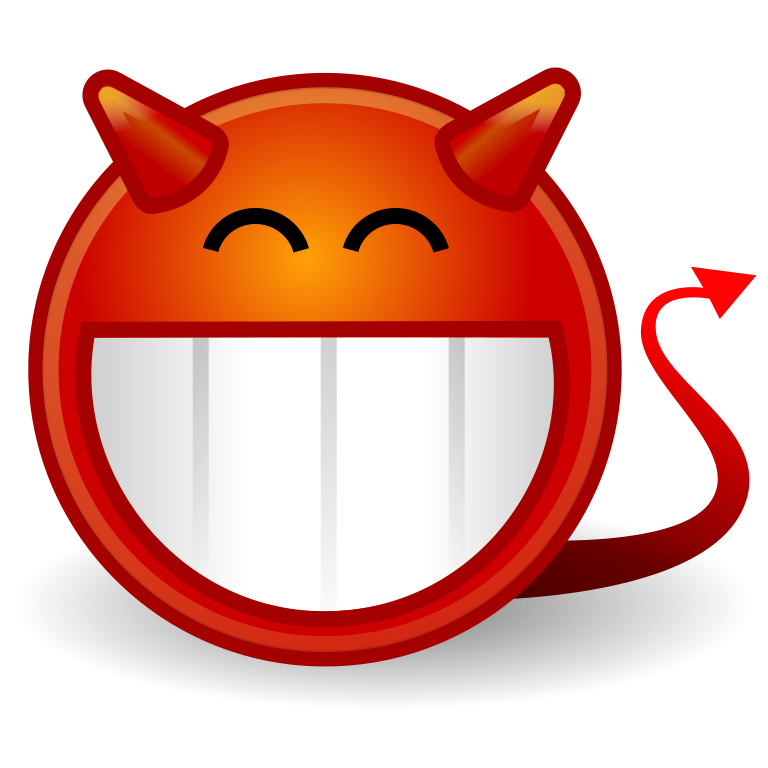}};
        
        \def\TMPBLOCKCONTENT{\node [anchor=center] at (0.5,-0.65) {\tiny$\txs_2$};}
        
        \node (blkNew) at (-1.3,-0.7) {\tikz[x=10pt,y=10pt]{ \draw [fill=gray!80] (0,0) rectangle (1,-0.3); \draw [fill=gray!20] (0,-0.3) rectangle (1,-1); \TMPBLOCKCONTENT }};
        \node (blkNewSplit) at (-0.7,-0.7) {\tikz[x=10pt,y=10pt]{ \draw [fill=gray!80] (0,0.2) rectangle (1,-0.1); \draw [fill=gray!20] (0,-0.3) rectangle (1,-1); \TMPBLOCKCONTENT }};
        
        \draw [-latex] (blkNew) -- (blkNewSplit);
        \draw [-latex] (blkNewSplit) |- (0,1) node [above] {Delay $\leq\DeltaHeader$} -- (u2) node [midway,right] {\tikz[x=10pt,y=10pt]{ \draw [fill=gray!80] (0,0) rectangle (1,-0.3); }};
        \draw [-latex] (blkNewSplit) |- (1,1) -- (u3) node [midway,left] {\tikz[x=10pt,y=10pt]{ \draw [fill=gray!80] (0,0) rectangle (1,-0.3); }};
        
        \draw [-latex,myParula07Red] [bend right] (a) to (u3);
        \node [yshift=1.5em,anchor=south] at (1.5,0) {\tikz[x=10pt,y=10pt]{ \draw [fill=gray!80] (0,0) rectangle (1,-0.3); }};
        \draw [-latex,myParula07Red] [bend left] (a) to (u3);
        \node [yshift=-1.5em,anchor=north] at (1.5,0) {\tikz[x=10pt,y=10pt]{ \draw [fill=gray!20] (0,-0.3) rectangle (1,-1); }};
        
        \begin{scope}[yshift=-2em]
            \node (repoHeader) at (0.25,-1.25) {\scriptsize\textbf{Header}};
            \node (repoContent) at (0.75,-1.25) {\scriptsize\textbf{Content}};
            \node at (0.25,-1.5) {\scriptsize$\Chain_1$};
            \node at (0.75,-1.5) {\scriptsize$\txs_1$};
            \node at (0.25,-1.75) {\scriptsize$\Chain_2$};
            \node at (0.75,-1.75) {\scriptsize$\txs_2$};
            \node at (0.25,-2) {\scriptsize$\vdots$};
            \node (repoDots) at (0.75,-2) {\scriptsize$\vdots$};
        \end{scope}
        
        \draw [-latex] (blkNewSplit) -- (repoHeader) node [midway,below left,xshift=2pt,yshift=2pt] {\tikz[x=10pt,y=10pt]{ \draw [fill=gray!20] (0,-0.3) rectangle (1,-1); \TMPBLOCKCONTENT }};
        
        \draw [-latex] ([xshift=-2pt]u2.south) -- ([xshift=-2pt]repoHeader.north) node [pos=0.3,left] {\scriptsize Req. $\Chain_2$};
        \draw [-latex] ([xshift=2pt]repoHeader.north) -- ([xshift=2pt]u2.south) node [pos=0.7,above right,xshift=-2pt,yshift=-2pt] {\tikz[x=10pt,y=10pt]{ \draw [fill=gray!20] (0,-0.3) rectangle (1,-1); \TMPBLOCKCONTENT }};
        
        \draw [-latex] ([xshift=-2pt]u3.south) -- ([xshift=-2pt]repoContent.north) node [pos=0.3,left] {\scriptsize Req. $\Chain_2$};
        \draw [-latex] ([xshift=2pt]repoContent.north) -- ([xshift=2pt]u3.south) node [pos=0.7,below right,xshift=-2pt,yshift=2pt] {\tikz[x=10pt,y=10pt]{ \draw [fill=gray!20] (0,-0.3) rectangle (1,-1); \TMPBLOCKCONTENT }};
        
        \draw [myParula02Orange] (repoHeader.north) ++ (45:0.25) arc(45:135:0.25);
        \draw [myParula02Orange] (repoContent.north) ++ (45:0.25) arc(45:135:0.25);
        \node [myParula02Orange] at (0.5,-1.15) {\tiny Bandwidth};
        
        \draw ($(repoHeader.north west)+(0,-0.05)$) rectangle (repoDots.south east -| repoContent.north east);
        
        \node [] at (-1,-1.0) {\circledScriptGray{1}};
        \node [] at (-0.5,-1.8) {\circledScriptGray{3}};
        \node [] at (0.5,1.3) {\circledScriptGray{2}};
        \node [] at (0.60,-0.3) {\circledScriptGray{4}};
        \node [] at (1.5,0) {\circledScriptGray{5}};
        
    \end{tikzpicture}
    \caption[]{In our model, block headers are propagated with a known delay upper bound $\DeltaHeader$,
    while block content is subject to a bandwidth constraint.
    \circledScriptGraySlim{1} An honest node produces a new valid block, consisting of header and content.
    \circledScriptGraySlim{2} Block headers are broadcast ($\Env.\Call{broadcastHeaderChain}{}$) and arrive
    at honest nodes ($\protocol.\Call{receivedHeaderChain}{}$) within at most $\DeltaHeader$ delay.
    \circledScriptGraySlim{3} Block content is submitted to an idealized `repository' ($\Env.\Call{uploadContent}{}$). A hash of the corresponding block content is included in the block header. 
    \circledScriptGraySlim{4} Upon request ($\Env.\Call{requestContent}{}$), the content of a certain block is obtained
    from the `repository' ($\protocol.\Call{receivedContent}{}$),
    subject to a constraint on the rate of downloaded block contents.
    \circledScriptGraySlim{5} An adversary can push block headers and block content to honest nodes
    independent of delay and bandwidth constraints. 
    See Appendix~\ref{sec:appendix-environment} for details on $\Env$.}
    \label{fig:model-network-illustration}
\end{figure}

\paragraph{Model Main Features}
For ease of exposition,
we assume a static set
of $N$ active \emph{nodes},
each with a cryptographic identity corresponding to one unit of stake.
Our analysis can be 
extended to the 
setting
of heterogeneous and dynamic stake using tools from \cite{david2018ouroboros,snowwhite}.
Nodes' cryptographic identities are common knowledge.
We are interested in the large system regime $N\to\infty$.
A \emph{static} adversary $\Adv$ chooses a set of nodes
(up to a fraction $\beta$ of all nodes,
where $\beta$ is common knowledge) to corrupt
before the randomness of the protocol is drawn
and the execution commences.
Uncorrupted \emph{honest} nodes follow the protocol as specified
at all times,
corrupted \emph{adversarial} nodes deviate from the protocol
in an arbitrary \emph{Byzantine} manner coordinated by the adversary
in an attempt to inhibit consensus.
For simplicity, we have assumed that all nodes are always \emph{awake}.
Our analysis builds on techniques from 
\cite{sleepy} and the refined machinery therein
can be used to extend our analysis to the setting of
asleep/awake honest nodes.

\paragraph{Protocol Main Features}
Pseudocode of an idealized PoS LC Nakamoto consensus protocol
parameterized by
a download rule
is provided in Algorithm~\ref{algo:pseudocode-pos-lc}
(\cf~\cite[Figure~3]{sleepy}).
The `download towards the freshest block' rule is given in Algorithm~\ref{algo:freshest-block-rule}.
Implementation details of the block production lottery
and the handling of the blockchain data structure
are abstracted away
in the idealized functionality $\Ftree$
provided in Algorithm~\ref{algo:pseudocode-Ftree}
(\cf~\cite[Figure~2]{sleepy}).
An index of the helper functions used in the pseudocode
is provided in Appendix~\ref{sec:appendix-helperfunctions-pseudocode}.
With specific implementations of $\Ftree$,
a variety of PoS LC protocols can be modelled
such as protocols from the Ouroboros family
\cite{kiayias2017ouroboros,david2018ouroboros,badertscher2018ouroboros}
and the Sleepy Consensus \cite{sleepy,snowwhite} family.
A more formal description
of the environment $\Env$
(idealized functionality modeling the 
network) is given in 
Appendix~\ref{sec:appendix-helperfunctions-pseudocode}.
In the main loop of the PoS LC protocol
(Algorithm~\ref{algo:pseudocode-pos-lc},
lines~\ref{loc:pseudocode-pos-lc-mainloop}ff.)
the node attempts in every time slot
(which is of duration $\tau$)
to produce a new block containing
transactions $\txs$
and to extend the longest downloaded chain (denoted $\DownloadedChain$)
in the node's local view.
If successful, the block content $\txs$
and the resulting new block header chain $\Chain'$
are provided to the environment $\Env$
for dissemination to all nodes.

\paragraph{Dissemination of Block Headers and Contents}
The network model and dissemination of headers and contents is
illustrated in Figure~\ref{fig:model-network-illustration}.
Block header chains broadcast via
$\Env.\Call{broadcastHeaderChain}{}$
are delivered by the environment $\Env$
to every node with a delay determined by $\Adv$,
up to a delay upper bound $\DeltaHeader$ that is common knowledge.
Once an honest node receives a new valid header chain
(Algorithm~\ref{algo:pseudocode-pos-lc},
lines~\ref{loc:pseudocode-pos-lc-receiveheader}ff.),
the node adds it to its local
header tree
$\HeaderTree$.
Block content uploaded via
$\Env.\Call{uploadContent}{}$
is kept by $\Env$ in an idealized
repository. 
In every time slot, honest nodes use a download rule to select block headers for which they wish to request the content (Algorithm~\ref{algo:pseudocode-pos-lc}, line~\ref{loc:pseudocode-pos-lc-downloadrule}).
Honest nodes can request the content
for a particular header via
$\Env.\Call{requestContent}{}$.
If available, the content requested from the repository will be
delivered by $\Env$ to the honest node
by triggering the callback $\Call{receivedContent}{}$
(Algorithm~\ref{algo:pseudocode-pos-lc},
lines~\ref{loc:pseudocode-pos-lc-receivecontent}ff.).
We set the slot duration as 
$\tau = \DeltaHeader + \frac{\bwslot}{\bwtime}$ such that 
all honest nodes receive block headers proposed at the start of the current slot, and thereafter
$\Env$ delivers at most $\bwslot$ block contents
requested from the repository
to each honest node per time slot, thereby constraining the bandwidth to $\bwtime$ blocks per second.%
\footnote{Unlike \cite{near-optimal-thruput}, we do not model the upload bandwidth because honest nodes only send very few messages in our protocol.}
Upon verifying that the content matches the hash
in the block header and that the $\txs$ are
valid with respect to the ledger determined
by the block's prefix, the node adds $\txs$
to its local view. Otherwise, the block is marked
as $\INVALID$, to prevent downloading it or any
of its descendants in the future.
Finally, the node updates its longest downloaded chain.

\paragraph{`Download Towards The Freshest Block' Rule}
Motivated by the earlier arguments in
Section~\ref{sec:introduction},
we introduce the
`download towards the freshest block' download rule
(Algorithm~\ref{algo:freshest-block-rule}).
In this download rule, first the header tree $\HeaderTree$
is pruned by invalid blocks and their descendants.
Then, the first $\NIL$ block in the prefix of the
freshest block is requested.
Ties are broken by the adversary.

\paragraph{Adversarial Strategies And Powers}
Adversarial strategies and powers include
but are not limited to:
reusing block production opportunities
to produce multiple blocks (\emph{equivocations},
by calling $\Ftree.\Call{extend}{}$ multiple times,
each with a different $\txs$ or a different parent chain $\Chain$);
extending any chain using past block production opportunities
as long as the purported block production time slots
along any chain
are strictly increasing;
releasing block headers late or selectively to honest nodes;
proactively pushing block headers or block content to honest nodes irrespective of delay or bandwidth constraints (by triggering the node's respective
$\Call{receivedHeaderChain}{}$ or $\Call{receivedContent}{}$ callback);
withholding the content of blocks;
including invalid $\txs$ in blocks;
breaking ties in chain selection and the download rule.

\paragraph{Reality Check}
Note that in practice the prioritization of blocks 
according to some download rule
does not have to
take place only at the endpoints of the network or be limited to block content.
Rather, 
it
can also be applied to block headers
and by intermediary nodes of the underlying communication
or peer-to-peer gossip overlay network
as they forward blocks.
This effectively shifts 
the download rule
from the edge into the network.
Honest participants focus their resources on
what the scheduling logic determines as `high importance' traffic,
and 
save it from being drowned out by adversarial spam.
The result is that
headers of
the blocks
which might be of interest to an honest node based on the prioritization
stipulated by the download rule
will be made available to that honest node
by the network within reasonable delay despite adversarial interference.
Because of this, we believe that our model
leads to protocols that can fare well
under bandwidth constraints and spamming in practice.

Various constructions are used to realize $\Ftree$ in 
real-world protocols,
depending on the desired properties.
The block production lottery
(Algorithm~\ref{algo:pseudocode-Ftree},
line~\ref{loc:pseudocode-Ftree-blockproductionlottery})
is typically implemented by checking whether the output
of a random function is below a certain threshold.
Against static adversaries, a collision resistant hash function suffices \cite{sleepy};
against adaptive adversaries, a verifiable random function (VRF) is used \cite{kiayias2017ouroboros}.
Although the ideal functionality $\Ftree$ relies on the knowledge of $N$ to tune the threshold $\rho/N$, in PoS realizations such as in \cite{david2018ouroboros} the factor $1/N$ is replaced by the fraction of the total stake owned by the node as per the confirmed segment of the blockchain.%
\footnote{
In our simplified model, each node owns one unit of stake, which is the same as $1/N$ fraction of the total stake where $N$ is the number of nodes.
}
The binding between a block and the production opportunity
it stems from
(Algorithm~\ref{algo:pseudocode-Ftree},
line~\ref{loc:pseudocode-Ftree-binding})
is established using digital signatures.

The idealized repository maintained by $\Env$ is just a way to abstract block dissemination in 
a peer-to-peer gossip network. In reality, each node
requests their peers for the block content
(using information from the block header),
and honest peers respond with the content.
Correspondingly,
the idealized repository indexes block content by the block header, and delivers it upon request, if available.
Note that block content
associated with a particular block header
may be unavailable when requested by some honest node
at one point in time (\eg, if the adversary
did not make it available),
but available when requested by another honest node
at a later time (\eg, if the adversary made it
available in the meantime).
Thus, the block header does not ensure data availability
or consistency among honest nodes' download
attempts,
unlike in stronger primitives such as
verifiable information dispersal \cite{avid,avidfp,dispersedledger,semiavidpr}.
By modelling the network as an idealized repository, we abstract out details such as the network topology and data availability that are orthogonal to the issue being considered here: the bandwidth constraint.

\section{High Level Security Argument}
\label{sec:securityargument}

Our proofs follow the techniques of \cite{sleepy}
and \cite{dem20}.
The key difference between these techniques and our proof is that the former assume that the block propagation delay is always bounded by a constant $\Delta$. In our case, we first prove that
under `suitable' download rules and protocol parameters,
with overwhelming probability,
a large fraction of honestly proposed blocks are downloaded by all honest nodes within bounded delay.

To this effect, we consider \emph{uniquely successful} time slots, in which there is exactly one honest block proposal (any other slots with block proposals are called \emph{adversarial}). 
For a given download rule and protocol execution, 
we define a property $\grpromempty$ (shorthand for Maximum Download) by which
under all adversarial strategies
and throughout the execution,
the block proposed in a uniquely successful slot is downloaded 
by all honest nodes within 
the first $\bwslot$ blocks downloaded since the beginning of that slot (Definition~\ref{def:grprom-definition}).
If the time slot is long enough to allow downloading $\bwslot$ blocks, then the block proposed in
any uniquely successful slot will be downloaded by all honest nodes before the
end of the same time slot.
Then, each block proposed in a uniquely successful slot 
increases the minimum length of all honest nodes' longest downloaded chains.
(Lemma~\ref{prop:chain_growth}).
This is the key stepping stone of earlier security proofs of LC \cite{sleepy,dem20}.
We then employ techniques of \cite{sleepy} to prove that 
PoS LC 
with the right parameters
and a download rule such that $\grpromempty$ holds with overwhelming probability
is secure
(Theorem~\ref{thm:security}).
This gives a general framework where 
in order to prove security of PoS LC with a specific download rule, one only needs to show that the download rule satisfies $\grpromempty$ with overwhelming probability.

The property $\grpromempty$ suggests a natural download rule.
In a uniquely successful slot, the block proposed in that slot can be identified as the unique freshest block. Thus, downloading towards the freshest block allows an honest node to download the block proposed in that slot most straightforwardly.
If the prefix of the freshest block contains less than $\bwslot$ blocks that have not been downloaded yet, 
then $\grpromempty$ will be satisfied.
Thus, for a suitably chosen $\bwslot$,
the block proposed in a uniquely successful slot will be downloaded within the same time slot
with overwhelming probability
(Lemmas~\ref{lem:loner_download},~\ref{lem:num_blocks_bound}).
In Section~\ref{sec:other_download_rules}, we apply a similar analysis to two other download rules, `equivocation avoidance' and `blocklisting',
to show that they too satisfy $\grpromempty$ with overwhelming probability for suitable $\bwslot$ (Lemmas~\ref{prop:unique_download_longest_header_chain_rule},~\ref{lem:bounded_equivocations}).

In Corollary~\ref{cor:security-freshest-block}, we identify the parameter values under which the protocol $\protocol$
with the freshest block download rule
is secure for a given desired resilience $\beta$
(similarly for `equivocation avoidance' and `blocklisting' in Corollary~\ref{thm:security_eq_avoid}).
For the rate of uniquely successful slots to exceed the rate of adversarial slots, we require that 
the rate of block production per slot,
$\rho$, be bounded as a function of $\beta$, so that most slots with honest block proposals are also uniquely successful slots.
A similar constraint exists in the synchronous model \cite{sleepy,dem20,david2018ouroboros} where the product of the block production rate and network delay $\Delta$ is bounded by a function of $\beta$.
Next, we require that the per slot bandwidth $\bwslot = \Omega(\kappa)$ (where $\kappa$ is the security parameter) so that 
$\grpromempty$
is satisfied 
with overwhelming probability.
This implies that the time slot $\tau = \DeltaHeader + \frac{\bwslot}{\bwtime} = \Omega(\kappa)$. 
This is similar to \cite{near-optimal-thruput} where under a bandwidth constrained model, the probabilistic delay bound increases with the security parameter.
Finally, the confirmation time $\Tconf = \Omega(\kappa^2)$ slots,
similar to that in the synchronous model \cite{sleepy}.

\section{Security Proof}
\label{sec:analysis}

\subsection{Definitions}
\label{sec:analysis-definitions}

The PoS LC protocol $\protocol$
has three parameters. The length of each time slot is $\tau$ seconds, the average number of nodes eligible to propose a block per time slot is $\rho$, and the confirmation latency is $\Tconf$ slots. 
The network has the following additional parameters. 
Each honest node has a download bandwidth of $\bwtime$ block contents per second 
(for convenience, we fix the size of the block content).
Henceforth, we fix $\tau=\DeltaHeader + \frac{\bwslot}{\bwtime}$ such that
each honest node can download the content of $\bwslot$ blocks in one time slot after receiving the headers proposed in that slot.
The adversary controls $\beta$ fraction of the stake. 
We denote by $\kappa$ the security parameter.
An event $E_{\kappa}$ will be said to occur with \emph{overwhelming} probability if $\Prob{E_{\kappa}} \geq 1 - \negl(\kappa)$.
Here, a function $f(\kappa)$ is said to be negligible or $\negl(\kappa)$ if for all $n>0$, there exists $\kappa_n^*$ such that for all $\kappa > \kappa_n^*$, $f(\kappa) < \frac{1}{\kappa^n}$.

Define the random variables $H_t$ and $A_t$ for $t=1,2,...$ to be the number of honest and adversarial nodes eligible to propose a block in slot $t$, respectively.
We consider the regime where the number of nodes $N\to\infty$ and each of them holds an equal fraction of the total stake. In this setting,
by the Poisson approximation to a binomial random variable,
we have $H_t\overset{\text{i.i.d.}}{\sim}\mathrm{Poisson}((1-\beta)\rho)$
and $A_t\overset{\text{i.i.d.}}{\sim}\mathrm{Poisson}(\beta\rho)$, all independent of each other.
An execution $\exec$ of 
time horizon $\Th$
is defined as the sequence $\{H_t,A_t\}_{t\leq\Th}$.
Denote by $\dC_i(t)$ the longest fully downloaded chain of an honest node $\node{i}$ at the end of slot $t$. Let $\len{b}$ denote the height of a block $b$. We will also use the same notation $\len{\calC}$ to denote the length of a chain $\calC$. Define $L_i(t)=\len{\dC_i(t)}$ and $L_{\min}(t) = \min_i L_i(t)$.
At the end of each slot, honest node $\node{i}$ outputs the ledger $\LOG{i}{t} = \dC_i(t)\trunc{\Tconf}$, which consists of a list of transactions as ordered in all blocks in $\dC_i(t)$ with time slot up to $t-\Tconf$.

For a given execution of a consensus protocol, we define
the following two properties:
\begin{itemize}
    \item 
\emph{Safety:}
For all adversarial strategies, for all time slots $t,t'$ and honest nodes $\node{i}, \node{j}$, $\LOG{i}{t}\preceq\LOG{j}{t'}$ or $\LOG{j}{t'}\preceq\LOG{i}{t}$.
\item
\emph{Liveness with parameter $\Tlive$:}
For all adversarial strategies, if a transaction {\sf tx} is received by all honest nodes 
before slot $t$,
then for all honest nodes $\node{i}$ and slots $t'\geq t+\Tlive$, $\mathsf{tx}\in\LOG{i}{t'}$.
\end{itemize}
A consensus protocol is \emph{secure} 
over a time horizon $\Th$
with parameter $\Tlive$ 
if it satisfies safety and liveness 
with parameter $\Tlive$
with overwhelming probability over executions
of time horizon $\Th$.

\begin{definition}
A slot $t$ is called 
\emph{successful} if $A_t+H_t>0$,
\emph{uniquely successful} if $A_t=0$ and $H_t=1$,
and adversarial if 
it is successful but not uniquely successful.
Define the predicates $\Unique{t}$ as true iff slot $t$ is uniquely successful and $\Adverse{t}$ as true iff slot $t$ is adversarial.
\end{definition}

For $s>r$, denote by $\Bint{r,s}$, $\Lint{r,s}$ and $\Nint{r,s}$, the number of successful, uniquely successful, and adversarial slots in the interval $(r,s]$ respectively.
\begin{IEEEeqnarray}{C}
    \Lint{r,s} \triangleq \sum_{t=r+1}^s \ind{\Unique{t}}, \quad
    \Nint{r,s} \triangleq \sum_{t=r+1}^s \ind{\Adverse{t}} \IEEEeqnarraynumspace
\end{IEEEeqnarray}
and $\Bint{r,s} = \Lint{r,s}+\Nint{r,s}$. 
When $r = s$, then $(r,s] = \emptyset$ and thus $\Bint{r,s} = \Lint{r,s} = \Nint{r,s} = 0$.
We define the following constants:
\begin{IEEEeqnarray}{rClCl}
    p &\triangleq& \Prob{A_t+H_t>0} &=& 1-e^{-\rho},\\
    \pu &\triangleq& \Prob{\Unique{t}} &=& (1-\beta)\rho e^{-\rho},\\
    \pa &\triangleq& \Prob{\Adverse{t}} &=& p - \pu
\end{IEEEeqnarray}

\begin{definition}
\label{def:grprom-definition}
    For a given download rule $\dlrule$, execution $\exec$ and $r<s \leq \Th$,  $\grprom{(r,s]}(\exec, \dlrule)$ holds iff
    for all adversarial strategies,
    for all uniquely successful slots in $(r,s]$, the block proposed in that slot is downloaded by all honest nodes 
    within the first $\bwslot$ blocks downloaded in that slot.
\end{definition}
We abbreviate $\grprom{(0,\Th]}(\exec,\dlrule)$ as $\grpromempty(\exec,\dlrule)$.
The inputs $\exec$ and $\dlrule$ to predicates are omitted where obvious.

\subsection{General Proof Overview}
\label{sec:analysis-overview}

\begin{lemma}
\label{prop:chain_growth}
    Let
    a download rule $\dlrule$,
    an execution $\exec$ and
    $t_0 < s \leq \Th$ 
    be
    such that $\grprom{(t_0,s]}(\exec,\dlrule)$ holds.
    Let $t_1, ..., t_m$ be the uniquely successful slots in $(t_0,s]$. Then, 
    \begin{enumerate}
        \item \label{item:inc_height} For all $j \geq 1$, $\len{b_j} > \len{b_{j-1}}$, where $b_j$ is the block proposed in $t_j$.
        \item For all $0\leq j \leq m$ and $t_j \leq t\leq s$,
        \begin{IEEEeqnarray}{C}
            L_{\min}(t)-L_{\min}(t_j) \geq \Lint{t_j,t}
        \end{IEEEeqnarray}
    \end{enumerate}
\end{lemma}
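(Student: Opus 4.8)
The plan is to derive both parts from a single per-slot growth claim that I run inductively over the uniquely successful slots, together with the monotonicity of $L_{\min}$. Throughout I assume $\grprom{(t_0,s]}(\exec,\dlrule)$, as in the hypothesis.

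First I would record that $L_{\min}(\cdot)$ is non-decreasing in $t$: each honest node's collection of fully downloaded valid chains only grows over time (downloaded contents are never discarded, and a block marked $\INVALID$ was never part of a valid downloaded chain), so each $L_i(t)$, and hence $L_{\min}(t)=\min_i L_i(t)$, is non-decreasing. The heart of the argument is then the following per-slot claim: for every uniquely successful slot $t^{*}\in(t_0,s]$, writing $b^{*}$ for the unique honest block proposed in $t^{*}$ and $u$ for its honest leader, (i) $\len{b^{*}}\ge L_{\min}(t^{*}-1)+1$, and (ii) $L_{\min}(t^{*})\ge\len{b^{*}}$. For (i) I would use that in the main loop of Algorithm~\ref{algo:pseudocode-pos-lc} block production precedes the in-slot downloads, so at the moment $u$ produces $b^{*}$ its longest downloaded chain is $\dC_u(t^{*}-1)$, of height $L_u(t^{*}-1)\ge L_{\min}(t^{*}-1)$; since $b^{*}$ extends that chain, $\len{b^{*}}=L_u(t^{*}-1)+1$. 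For (ii) I would invoke $\grprom{(t_0,s]}$: it guarantees that within the first $\bwslot$ content downloads of slot $t^{*}$ every honest node obtains $b^{*}$, and since $\tau=\DeltaHeader+\bwslot/\bwtime$ leaves room for exactly $\bwslot$ downloads after the headers have arrived, this happens before the end of slot $t^{*}$. The deferred-processing logic (Algorithm~\ref{algo:pseudocode-pos-lc}, lines~\ref{loc:pseudocode-pos-lc-receivecontent}ff.) accepts a block's content only after its entire prefix has been downloaded and validated, and because $b^{*}$ is honestly produced its prefix is a valid downloaded chain; hence every honest node ends slot $t^{*}$ holding a fully downloaded valid chain of height $\len{b^{*}}$, giving $L_{\min}(t^{*})\ge\len{b^{*}}$.

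With this claim in hand, Part~\ref{item:inc_height} follows directly. For $j\ge1$, applying (i) to $t^{*}=t_j$ and using $t_{j-1}\le t_j-1$ with monotonicity gives $\len{b_j}\ge L_{\min}(t_{j-1})+1$; then (ii) at $t_{j-1}$ (or, for the base case $j=1$, the convention $\len{b_0}:=L_{\min}(t_0)$) gives $L_{\min}(t_{j-1})\ge\len{b_{j-1}}$, so $\len{b_j}\ge\len{b_{j-1}}+1>\len{b_{j-1}}$. For Part~2, fix $0\le j\le m$ and $t_j\le t\le s$, put $k=\Lint{t_j,t}$, and let $t_{j+1}<\cdots<t_{j+k}\le t$ be the uniquely successful slots in $(t_j,t]$. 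For each $i\in\{1,\dots,k\}$, combining (ii) with (i) and then monotonicity (using $t_{j+i-1}\le t_{j+i}-1$) yields $L_{\min}(t_{j+i})\ge\len{b_{j+i}}\ge L_{\min}(t_{j+i}-1)+1\ge L_{\min}(t_{j+i-1})+1$; chaining these $k$ inequalities gives $L_{\min}(t_{j+k})\ge L_{\min}(t_j)+k$, and a final application of monotonicity ($t_{j+k}\le t$) upgrades this to $L_{\min}(t)\ge L_{\min}(t_j)+\Lint{t_j,t}$, as required.

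The step I expect to require the most care is part (ii) of the per-slot claim, namely translating the combinatorial guarantee of $\grprom{(t_0,s]}$ (``$b^{*}$ is among the first $\bwslot$ downloaded blocks of the slot'') into the protocol-level statement that, by the end of slot $t^{*}$, the honest block \emph{together with its entire validated prefix} has been incorporated into every honest node's longest downloaded chain. This is where the precise choice $\tau=\DeltaHeader+\bwslot/\bwtime$, the intra-slot ordering of header delivery, block production, and content downloads, and the prefix-before-content deferral rule must all be used consistently. The remainder is routine bookkeeping via monotonicity of $L_{\min}$ and induction over the uniquely successful slots.
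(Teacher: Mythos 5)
Your proof is correct and follows essentially the same route as the paper's: both arguments rest on the two facts that an honest leader extends its longest downloaded chain (so $\len{b^*}\ge L_{\min}(t^*-1)+1$) and that $\grprom{(t_0,s]}$ forces $b^*$ and its prefix to be downloaded by all honest nodes before the slot ends (so $L_{\min}(t^*)\ge\len{b^*}$), then chain these over the uniquely successful slots using monotonicity of $L_{\min}$. Your write-up is merely more explicit than the paper's about the per-slot claim and the handling of the $j=1$ edge case; no substantive difference.
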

\begin{proof}
    Part (\ref{item:inc_height}) is easily seen by the fact that honest nodes propose on their longest valid downloaded chain, $b_{j-1}$ is downloaded before $b_j$ is proposed, and is valid because it was proposed by an honest node.
    Now, fix a $j$ such that $0\leq j \leq m$. 
    If $j=m$, then $\Lint{t_m,t}=0$ and $L_{\min}(t) \geq L_{\min}(t_m)$ for all $t_m \leq t \leq s$ because $L_{\min}$ is non-decreasing.
    For $j<m$, since honest nodes propose on their longest downloaded chain, $\len{b_{j+1}}\geq L_{\min}(t_{j+1}-1)+1 \geq L_{\min}(t_j)+1$.
    From part (\ref{item:inc_height}) and that 
    the blocks from uniquely successful slots in $(t_j,t]$ are downloaded before the end of their respective slots, we conclude that $L_{\min}(t)\geq \len{b_{j+1}} + \Lint{t_j,t}-1 \geq L_{\min}(t_j) + \Lint{t_j,t}$.
\end{proof}

\begin{theorem}
\label{thm:security}
For all $\bwslot \in \mathbb{N}$
and download rules $\dlrule$ such that
\begin{IEEEeqnarray}{c}
    \Prob{\exec \colon \lnot \grpromempty(\exec,\dlrule)} \leq \negl(\kappa), \IEEEeqnarraynumspace
\end{IEEEeqnarray}
if
$\SecurityConditionFull$
for some $\epsilon_1 \in (0,1)$,
$\tau = \DeltaHeader + \frac{\bwslot}{\bwtime}$ and 
$\Tconf=\Omega\left((\kappa+\ln\Th)^2\right)$,
then
the protocol $\protocol$ with download rule $\dlrule$
is secure with
$\Tlive=\Omega\left((\kappa+\ln\Th)^2\right)$.
\end{theorem}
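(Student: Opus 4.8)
The plan is to reduce to a good event and then transplant the pivot-based longest-chain analysis of \cite{sleepy,dem20}, using Lemma~\ref{prop:chain_growth} as the drop-in replacement for the $\Delta$-synchrony chain-growth guarantee on which those analyses rely. First I would condition on the event $\{\exec : \grpromempty(\exec,\dlrule)\}$, which by hypothesis holds with overwhelming probability; since security is a statement that holds with overwhelming probability, it suffices to establish safety and liveness deterministically on this event intersected with a second concentration event, incurring only an additional $\negl(\kappa)$ failure term. On this event Lemma~\ref{prop:chain_growth} guarantees that every uniquely successful slot strictly advances $L_{\min}$, so uniquely successful slots play exactly the role of convergence opportunities in \cite{sleepy}, while adversarial slots bound the adversary's chain-building budget.

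The drift comes from the security condition: rewriting $\SecurityConditionFull$ gives $\pu=\tfrac{p}{2}(1+\epsilon_1)$ and hence $\pu-\pa=p\,\epsilon_1>0$, \ie the quantity $\Lint{r,s}-\Nint{r,s}$ behaves as a walk with strictly positive drift. I would define a typical-execution event on which this walk dominates over every subinterval of the horizon, \ie $\Lint{r,s}>\Nint{r,s}$ for all $r<s$ with $s-r=\Omega(\kappa+\ln\Th)$; this follows from a Chernoff bound together with a union bound over the $O(\Th^2)$ intervals, and holds with overwhelming probability once the window length is $\Omega(\kappa+\ln\Th)$. On the intersection of this event with $\grpromempty$, the standard construction yields pivots (\cf \cite{sleepy}; Nakamoto blocks in \cite{dem20}): a uniquely successful slot $t^\ast$ whose surrounding walk never undercuts is such that the block proposed at $t^\ast$ sits at a fixed height in the longest downloaded chain of every honest node at all times $t\ge t^\ast$ and can never be displaced. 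The reason it cannot be displaced is the combination of Lemma~\ref{prop:chain_growth} with the fact that along any single chain the purported slot times are strictly increasing, so the adversary gains at most one block per slot even with unbounded equivocation; thus any competing fork branching off before $t^\ast$ lags behind by the positive-drift margin.

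From permanence of pivot blocks, safety and liveness follow as in \cite{sleepy}. For safety I would argue that if two honest ledgers disagreed at some height, their chains must fork strictly before a confirmed block (one with time slot $\le t-\Tconf$); but a window of $\Tconf=\Omega((\kappa+\ln\Th)^2)$ slots preceding any confirmation point contains a pivot whose block lies above the fork and is common to both chains, a contradiction. For liveness, within $O(\kappa+\ln\Th)$ slots after a transaction is first seen there is a uniquely successful slot that forces it (or a descendant) into the growing chain, a subsequent pivot makes that block permanent, and after $\Tconf$ further slots it becomes confirmed at all honest nodes; summing gives $\Tlive=\Omega((\kappa+\ln\Th)^2)$. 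The quantitative $(\kappa+\ln\Th)^2$ scaling of $\Tconf$ and $\Tlive$ arises because the pivot-dominance must hold over windows that are themselves of length $\Omega(\kappa+\ln\Th)$ while simultaneously guaranteeing that such a dominating pivot occurs within the confirmation window, matching the confirmation-time scaling of the synchronous model \cite{sleepy}.

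The main obstacle is not the longest-chain combinatorics---inherited from \cite{sleepy,dem20}---but verifying that $\grpromempty$ is genuinely the correct surrogate for bounded delay in the presence of arbitrary equivocation spam. Concretely, I would need to confirm two points: (i) that $\grpromempty$ restores the per-slot chain growth of Lemma~\ref{prop:chain_growth} despite the adversary flooding the network, so honest effort still concentrates on a single growing chain; and (ii) that the same spam cannot inflate the adversary's chain-race budget beyond the adversarial-slot count $\Nint{r,s}$, which holds precisely because equivocations cannot lengthen any individual valid chain. Once these two facts are pinned down, the walk-and-pivot argument of \cite{sleepy} transfers essentially verbatim, with $\grpromempty$ invoked exactly where $\Delta$-synchrony was used before.
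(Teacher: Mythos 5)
Your proposal is correct and follows essentially the same route as the paper: condition on $\grpromempty$ plus a concentration event, use Lemma~\ref{prop:chain_growth} as the surrogate for $\Delta$-synchrony chain growth, establish permanence of blocks from uniquely successful pivot slots (the paper's Lemma~\ref{lem:fresh_block_base_case}), show pivots occur in every $\Tp=\Omega((\kappa+\ln\Th)^2)$ window via Chernoff and union bounds over intervals (the paper's $\FrequentPivots$ and its weak-pivot machinery), and derive safety and liveness from frequent permanent pivots exactly as in Lemma~\ref{lem:combinatorial}. Your explanation of where the quadratic scaling of $\Tconf$ comes from also matches the paper's two-level window argument.
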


Theorem~\ref{thm:security} is proved in Appendix~\ref{sec:appendix-proof-security-theorem} using techniques similar to \cite{sleepy}.

\subsection{`Download Towards the Freshest Block' Rule}
\label{sec:analysis-freshest-block}

\begin{definition}
\label{def:short_prefixes}
For an execution $\exec$, 
$\FewBlockOpps(\exec)$ holds iff 
\begin{IEEEeqnarray}{C}
    \forall t \leq \Th\colon \max_{r<t\colon \Unique{r} \land (\Nint{r,t} \geq \Lint{r,t})} \Nint{r,t} < \bwslot.
    \IEEEeqnarraynumspace
\end{IEEEeqnarray}
\end{definition}

\begin{lemma}
\label{lem:loner_download}
For an execution $\exec$
and the freshest block download rule $\dlrulefresh$ (Algorithm~\ref{algo:freshest-block-rule}),
\begin{IEEEeqnarray}{c}
    \FewBlockOpps(\exec) \implies \grpromempty(\exec,\dlrulefresh) \IEEEeqnarraynumspace
\end{IEEEeqnarray}
\end{lemma}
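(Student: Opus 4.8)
The plan is to fix an execution $\exec$ with $\FewBlockOpps(\exec)$, fix an arbitrary adversarial strategy, and argue by induction over the uniquely successful slots of the resulting run. Let $t$ be a uniquely successful slot, $b=b_t$ the honest block proposed in it, and $\node{i}$ an arbitrary honest node. The induction hypothesis is that for every uniquely successful slot $u<t$ the block $b_u$ is downloaded by all honest nodes within the first $\bwslot$ downloads of slot $u$, hence (since $\bwslot$ downloads fit into one slot of duration $\tau$) before slot $t$ begins. I would first observe that because $A_t=0$, block $b$ is the \emph{unique} freshest block throughout slot $t$: no block carries time $>t$, and $b$ is the only block of time $t$. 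Its prefix is the proposer's longest downloaded valid chain, so in $\node{i}$'s view it contains no $\INVALID$ block either; consequently the rule of Algorithm~\ref{algo:freshest-block-rule} keeps $\node{i}$ targeting $b$'s chain for the whole slot, downloading its still-missing blocks in increasing height order. Thus $b$ is obtained after exactly $D$ downloads, where $D$ is the number of not-yet-downloaded blocks on $b$'s chain at the start of slot $t$, and it remains to show $D\le\bwslot$.

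Next I would localize the missing blocks. Let $r$ be the largest uniquely successful slot $<t$ whose block $b_r$ is an ancestor of $b$, taking $r=0$ (genesis, common to all honest nodes) if none exists. By the induction hypothesis $\node{i}$ downloaded $b_r$ together with its entire prefix before slot $t$, so every missing block on $b$'s chain sits strictly above $b_r$. If $b$'s parent is $b_r$ then $D=1$ and we are done; otherwise $b$'s parent is some block $a$ of time $s^*$, and by maximality of $r$ no uniquely successful block lies strictly between $b_r$ and $b$ on the chain, so the portion of $b$'s chain between $b_r$ and $a$ consists only of adversarial blocks, occupying heights $\len{b_r}+1,\dots,h^*$ with $h^*=\len{a}=\len{b}-1$. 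Each such block comes from a distinct adversarial slot in $(r,s^*]$, giving $D\le (h^*-\len{b_r})+1\le \Nint{r,s^*}+1$.

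Finally I would verify that $(r,s^*)$ is an admissible pair in Definition~\ref{def:short_prefixes}, i.e. $\Nint{r,s^*}\ge\Lint{r,s^*}$, so that $\FewBlockOpps$ applies at time $s^*$. Every uniquely successful slot $u\in(r,s^*]$ has $b_u$ downloaded by the proposer of slot $t$ (induction hypothesis), and since $b$ extends that proposer's longest downloaded chain, $\len{b_u}\le \len{b}-1=h^*$; moreover $\len{b_u}>\len{b_r}$ by the strict height growth of uniquely successful blocks (Lemma~\ref{prop:chain_growth}(\ref{item:inc_height})). Hence these $\Lint{r,s^*}$ blocks have distinct heights inside $\{\len{b_r}+1,\dots,h^*\}$, a set of size $h^*-\len{b_r}\le\Nint{r,s^*}$, so $\Lint{r,s^*}\le\Nint{r,s^*}$. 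Then $\FewBlockOpps(\exec)$ forces $\Nint{r,s^*}<\bwslot$, whence $D\le\Nint{r,s^*}+1\le\bwslot$, which is exactly what $\grpromempty(\exec,\dlrulefresh)$ requires.

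The main obstacle is precisely this bookkeeping: choosing the interval endpoints so that the admissibility condition $\Nint{\cdot}\ge\Lint{\cdot}$ holds. Using the naive interval $(r,t)$ is off by one because of $b$ itself, and the remedy is to truncate at the latest on-chain adversarial slot $s^*$ and to bound the heights of the \emph{off-chain} uniquely successful blocks by the proposer's chain length $h^*$; the base case additionally relies on admitting the genesis as a reference $r=0$. I expect verifying these height relations, and checking that the validity and availability of $b$'s honest prefix never diverts $\node{i}$'s downloads off $b$'s chain, to be where the real care is needed, whereas the reduction to $\FewBlockOpps$ is then immediate.
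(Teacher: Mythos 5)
Your proof is correct and follows essentially the same route as the paper's: induction over uniquely successful slots, locating the last uniquely successful ancestor $b_r$ of the fresh block, bounding the undownloaded prefix by the on-chain adversarial blocks, and certifying the interval as admissible for $\FewBlockOpps$ by comparing heights of downloaded uniquely successful blocks against the proposer's chain length (your pigeonhole on heights is the same counting as the paper's inequality chain $\len{b_j'}+\Nint{r_j,t_j-1} \geq \len{\dC_i(t_j-1)} \geq \len{b_{j-1}} \geq \len{b_j'}+\Lint{r_j,t_j-1}$, just phrased over all blocks in the interval rather than the last one). The only cosmetic deviations are that you truncate the interval at the parent's slot $s^*$ instead of $t_j-1$, and that you explicitly flag the genesis base case $r=0$, an edge case the paper's proof (and, strictly, the $\Unique{r}$ clause of Definition~\ref{def:short_prefixes}) glosses over in the same way.
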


\begin{proof}
Let $t_1,...,t_m$ be the uniquely successful slots in $(0,\Th]$.
Let $b_j$ be the 
block from $t_j$ for some $1\leq j\leq m$.
The header of $b_j$ is received by all honest nodes within $\DeltaHeader$ time after the beginning of slot $t_j$. Due to the downloading rule, during slot $t_j$ all honest nodes download the chain containing $b_j$.
Furthermore, since $b_j$ is an honest block and honest nodes only propose
on their downloaded chain,
the prefix of $b_j$ can be downloaded
(\ie, does not contain invalid or missing blocks).
Thus, we only need to show that the prefix of $b_j$ contains at most $\bwslot$ blocks whose contents have not been downloaded.

For induction, assume that $\grprom{(0,t_j-1]}$ holds. Using this, we will show that $\grprom{(0,t_{j+1}-1]}$ holds. For the base case, this is true for $j=1$ since $t_1$ is the first uniquely successful slot by definition.
Note that the block $b_j$, being honest, is proposed on the tip of $\dC_i(t_j-1)$ for some $i$.
Let $r_j$ be the last unique time slot such that the block $b_j'$ from that time slot is in $\dC_i(t_j-1)$.
Clearly, $r_j \leq t_j-1$.
Then, 
\begin{IEEEeqnarray}{C}
\label{eq:cgeq1}
    \len{\dC_i(t_j-1)} \leq \len{b_j'} + \Nint{r_j,t_j-1}
\end{IEEEeqnarray}
since blocks after $b_j'$ are from adversarial slots by definition of $r_j$. 
From 
$\grprom{(0,t_j-1]}$ and part~(\ref{item:inc_height}) of Lemma~\ref{prop:chain_growth}, 
\begin{IEEEeqnarray}{C}
\label{eq:cgeq2}
    \len{b_{j-1}} \geq \len{b_j'} + \Lint{r_j,t_j-1}.
\end{IEEEeqnarray}
Since $b_{j-1}$ is downloaded by the end of slot $t_{j-1}$ and $t_j -1 \geq t_{j-1}$, $\len{\dC_i(t_j-1)} \geq \len{b_{j-1}}$, and this would imply from \eqref{eq:cgeq1} and \eqref{eq:cgeq2} that $\Nint{r_j,t_j-1} \geq \Lint{r_j,t_j-1}$.
Note that time slots of blocks in a valid chain must be strictly increasing. Since $b_j'$ is already downloaded, the number of blocks in $\dC_i(t_j-1)$ whose content is not downloaded is at most $\Nint{r_j,t_j-1}$. Since $b_j$ extends $\dC_i(t_j-1)$, the number of block contents to be downloaded including the prefix of $b_j$ is at most $\Nint{r_j,t_j-1}+1$. As per $\FewBlockOpps$, this is at most $\bwslot$ (note that $r_j \leq t_j-1$). Therefore, $b_j$ is downloaded within one slot. Since there are no more uniquely successful slots in $(t_j,t_{j+1})$, this completes the induction step by showing that $\grprom{(0,t_{j+1}-1]}$.
For $j=m$, we would conclude with $\grpromempty$ as required.
\end{proof}

\begin{lemma}
\label{lem:num_blocks_bound}
If $\SecurityConditionFull$ for some $\epsilon_1 \in (0,1)$ and $\bwslot = \pa \Td(1+\epsilon_2)$ for some $\epsilon_2>0$ 
where $\Td = \frac{\Omega(\kappa + \ln\Th)}{\constNumBlocks  p}$,
then 
\begin{IEEEeqnarray}{C}
\label{eq:num_blocks_bound}
    \Prob{ \exec\colon \lnot\FewBlockOpps(\exec) } 
    \leq \negl(\kappa),
    \IEEEeqnarraynumspace
\end{IEEEeqnarray}
where $\constNumBlocks  = \min\left\{ \frac{\epsilon_1^2}{36}, \frac{\epsilon_2^2}{\epsilon_2+2}\frac{\pa}{p}\right\}$.
\end{lemma}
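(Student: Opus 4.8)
The plan is to bound the probability of the complementary event and apply a union bound over all relevant intervals. Negating Definition~\ref{def:short_prefixes}, $\lnot\FewBlockOpps(\exec)$ holds iff there exist slots $r<t\leq\Th$ with $\Unique{r}$ such that $\Nint{r,t}\geq\Lint{r,t}$ \emph{and} $\Nint{r,t}\geq\bwslot$ simultaneously. Since there are at most $\Th^2$ pairs $r<t$ (ignoring the extra restriction $\Unique{r}$, which only shrinks the set), it suffices to show that for every fixed $r<t$ the probability of the joint event $\{\Nint{r,t}\geq\Lint{r,t}\}\cap\{\Nint{r,t}\geq\bwslot\}$ is at most $\exp(-\Omega(\kappa+\ln\Th))$, with the constant hidden in the $\Omega(\cdot)$ defining $\Td$ chosen large enough that, after multiplying by $\Th^2=\exp(2\ln\Th)$, the bound stays $\negl(\kappa)$. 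Because each slot is independently uniquely successful, adversarial, or unsuccessful, writing $\ell=t-r$ we have $\Nint{r,t}\sim\mathrm{Bin}(\ell,\pa)$ and $\Lint{r,t}\sim\mathrm{Bin}(\ell,\pu)$ (jointly multinomial), and bounding \emph{either} event in the intersection suffices. I would split on the interval length $\ell$ at the threshold $\Td$.

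For short intervals $\ell\leq\Td$, I bound $\Prob{\Nint{r,t}\geq\bwslot}$. Here $\Exp{\Nint{r,t}}=\pa\ell\leq\pa\Td$, and since $\bwslot=\pa\Td(1+\epsilon_2)$ strictly exceeds the mean, this is an upper-tail deviation. Because $\Prob{\mathrm{Bin}(\ell,\pa)\geq\bwslot}$ is nondecreasing in $\ell$ (by stochastic dominance), it is maximized over the range at $\ell=\Td$, where a multiplicative Chernoff bound gives $\Prob{\Nint{r,t}\geq(1+\epsilon_2)\pa\Td}\leq\exp\!\left(-\tfrac{\epsilon_2^2}{2+\epsilon_2}\pa\Td\right)$. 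Writing the rate as $\tfrac{\epsilon_2^2}{\epsilon_2+2}\tfrac{\pa}{p}\cdot p\Td$ and using $\constNumBlocks\leq\tfrac{\epsilon_2^2}{\epsilon_2+2}\tfrac{\pa}{p}$ together with $\constNumBlocks\,p\,\Td=\Omega(\kappa+\ln\Th)$, this probability is $\exp(-\Omega(\kappa+\ln\Th))$.

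For long intervals $\ell>\Td$, I bound $\Prob{\Nint{r,t}\geq\Lint{r,t}}$. The security condition $\SecurityConditionFull$ gives $\pu=\tfrac{p}{2}(1+\epsilon_1)$, hence $\pa=\tfrac{p}{2}(1-\epsilon_1)$ and $\pu-\pa=p\epsilon_1>0$, so adversarial slots are outnumbered in expectation. Splitting at the midpoint $m=\tfrac{\pa+\pu}{2}\ell$, I have $\{\Nint{r,t}\geq\Lint{r,t}\}\subseteq\{\Nint{r,t}\geq m\}\cup\{\Lint{r,t}\leq m\}$; applying the upper-tail Chernoff bound to $\Nint{r,t}$ (deviation $\tfrac{p\epsilon_1}{2}\ell$ above its mean) and the lower-tail bound to $\Lint{r,t}$ (deviation $\tfrac{p\epsilon_1}{2}\ell$ below its mean), and using $\pu\leq p$, each term is at most $\exp(-c\,\epsilon_1^2 p\,\ell)$ for an absolute constant $c$; the value $\tfrac{1}{36}$ appearing in $\constNumBlocks$ is a conservative choice that works. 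Since $\ell>\Td$, this rate is at least $\tfrac{\epsilon_1^2}{36}p\Td\geq\constNumBlocks\,p\,\Td=\Omega(\kappa+\ln\Th)$, so again the probability is $\exp(-\Omega(\kappa+\ln\Th))$.

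Combining the two regimes, every pair contributes at most $\exp(-\Omega(\kappa+\ln\Th))$, and the union bound over at most $\Th^2$ pairs yields $\Prob{\exec\colon\lnot\FewBlockOpps(\exec)}\leq\Th^2\exp(-\Omega(\kappa+\ln\Th))=\negl(\kappa)$, the $2\ln\Th$ from the union bound being absorbed into the exponent. The main obstacle I anticipate is the bookkeeping that makes the two concentration bounds \emph{uniform} over their respective ranges of $\ell$: in particular, justifying that the short-interval tail is maximized at $\ell=\Td$, and verifying that both Chernoff rates dominate $\constNumBlocks\,p$ so that the single threshold $\Td$ and the single constant $\constNumBlocks$ simultaneously control both regimes. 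Matching the explicit constants ($\tfrac{1}{36}$ and $\tfrac{\epsilon_2^2}{\epsilon_2+2}\tfrac{\pa}{p}$) is routine but must be carried out with the correct form of the Chernoff bound.
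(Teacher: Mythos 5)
Your proposal is correct and follows essentially the same route as the paper: the paper likewise splits the union over pairs $r<t$ at the threshold $\Td$, bounds $\Prob{\Nint{r,t}\geq\bwslot}$ by a multiplicative Chernoff bound with rate $\tfrac{\epsilon_2^2}{2+\epsilon_2}\pa\Td$ for short intervals, bounds $\Prob{\Nint{r,t}\geq\Lint{r,t}}$ by the race bound $2\exp(-\tfrac{\epsilon_1^2}{36}p(t-r))$ (its Proposition~\ref{prop:exp-decay-not-pivot-condition-some-long-interval}) for long intervals, and concludes with a union bound over $\Th^2$ pairs and the choice $\Td=\frac{\Omega(\kappa+\ln\Th)}{\constNumBlocks p}$. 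The only cosmetic differences are that the paper sums the long-interval tail as a geometric series rather than a flat $\Th^2$ count, and derives the $\epsilon_1^2/36$ rate by comparing $\Lint{r,s}$ to $\tfrac12\Bint{r,s}$ rather than by your midpoint split.
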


\begin{corollary}
\label{cor:security-freshest-block}
The protocol $\protocol$ 
with the freshest block download rule
and parameters
$\rho$ such that $\SecurityConditionFull$,
$\tau = \Omega(\kappa+\ln\Th)$,
$\Tconf=\Omega\left((\kappa+\ln\Th)^2\right)$
is secure with
$\Tlive=\Omega\left((\kappa+\ln\Th)^2\right)$.
\end{corollary}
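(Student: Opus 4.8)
The plan is to derive the corollary by composing three results already established: Lemma~\ref{lem:num_blocks_bound}, Lemma~\ref{lem:loner_download}, and Theorem~\ref{thm:security}. The freshest block rule is not re-analyzed from scratch; instead its specific guarantee $\FewBlockOpps \implies \grpromempty$ is fed into the generic security theorem. The only genuine work is to fix the one remaining free parameter, the per-slot bandwidth $\bwslot$, and to check that all the $\Omega(\cdot)$ scalings line up.

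First I would instantiate $\bwslot$. Under the assumed security condition, the quantities $p$, $\pu$, and $\pa = p-\pu$ are constants in $\kappa$ and $\Th$; in particular $\pa = p\frac{1-\epsilon_1}{2}$ is a fixed positive multiple of $p$ since $\epsilon_1\in(0,1)$. Choosing any $\epsilon_2>0$ and setting $\bwslot = \pa\Td(1+\epsilon_2)$ with $\Td = \frac{\Omega(\kappa+\ln\Th)}{\constNumBlocks p}$ as required by Lemma~\ref{lem:num_blocks_bound} yields $\bwslot = \Omega(\kappa+\ln\Th)$ and, by that lemma, $\Prob{\exec\colon \lnot\FewBlockOpps(\exec)} \leq \negl(\kappa)$.

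Next I would push this probability bound through the freshest block rule. By Lemma~\ref{lem:loner_download}, for every execution $\FewBlockOpps(\exec) \implies \grpromempty(\exec,\dlrulefresh)$, so
\begin{IEEEeqnarray}{C}
\Prob{\exec\colon \lnot\grpromempty(\exec,\dlrulefresh)} \leq \Prob{\exec\colon \lnot\FewBlockOpps(\exec)} \leq \negl(\kappa).
\end{IEEEeqnarray}
This is exactly the hypothesis of Theorem~\ref{thm:security} with $\dlrule=\dlrulefresh$. Since the security condition holds by assumption and $\tau$ is fixed as $\DeltaHeader+\frac{\bwslot}{\bwtime}$, the theorem delivers security of $\protocol$ with the freshest block rule for $\Tconf=\Omega((\kappa+\ln\Th)^2)$ and $\Tlive=\Omega((\kappa+\ln\Th)^2)$, as claimed.

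Finally I would reconcile the parameter statements. Treating $\bwtime$ and $\DeltaHeader$ as constants, the slot length becomes $\tau = \DeltaHeader+\frac{\bwslot}{\bwtime} = \Omega(\kappa+\ln\Th)$, matching the corollary. The main (and essentially only) obstacle is this bookkeeping: one must verify that a single choice of $\bwslot$ simultaneously meets the premise of Lemma~\ref{lem:num_blocks_bound} and produces the stated $\tau$, $\Tconf$, and $\Tlive$ scalings, and in particular that $\pa$ stays bounded away from zero under the security condition so that $\bwslot$ genuinely grows like $\Omega(\kappa+\ln\Th)$ rather than collapsing to a subgrowing quantity.
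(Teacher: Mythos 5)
Your proposal is correct and follows exactly the paper's route: the paper likewise obtains the corollary by choosing $\bwslot$ as in Lemma~\ref{lem:num_blocks_bound}, chaining $\FewBlockOpps \implies \grpromempty$ via Lemma~\ref{lem:loner_download}, and invoking Theorem~\ref{thm:security} with $\tau = \DeltaHeader + \frac{\bwslot}{\bwtime}$. Your added bookkeeping that $\pa = p\frac{1-\epsilon_1}{2}$ stays bounded away from zero, so that $\bwslot$ and hence $\tau$ genuinely scale as $\Omega(\kappa+\ln\Th)$, is a correct and worthwhile check that the paper leaves implicit.
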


Lemma~\ref{lem:num_blocks_bound} is proved in Appendix~\ref{sec:appendix-proof-num-blocks-bound}
and
Corollary~\ref{cor:security-freshest-block} is obtained by setting $\tau = \DeltaHeader + \frac{\bwslot}{\bwtime}$ and $\bwslot$ as per Lemma~\ref{lem:num_blocks_bound}.
\section{Experiments}
\label{sec:experiments}

\subsection{Implementation Details}
\label{sec:impl}

We implemented our PoS LC node in $800$ lines of Golang.\footnote{The source code is available at: \url{https://github.com/yangl1996/synclc-sim}}
For all of our experiments, the slot duration $\tau$ is set to $1\,\mathrm{second}$, and the total block production rate is $0.06\,\mathrm{blocks/s}$. There is no transaction processing. Instead, nodes fill blocks with random bytes up to a size limit ($100\,\mathrm{KB}$ in our experiments).

Our implementation has a fully-featured network stack modelled after Cardano's node software \cite{cardano-network-protocols,cardano-network-spec}. Similar to Cardano, block propagation involves two subsystems: \emph{chain sync}, and \emph{block fetch}. The chain sync subsystem allows a node to advertise the header chain of the longest chain it has downloaded and validated, and to track the header chains advertised by peers. Because the header only takes a tiny fraction of space in a block, the bandwidth consumed by the chain sync subsystem is negligible. In all of our experiments, chain sync only consumed up to 1.2\% of the available bandwidth.

The block fetch subsystem periodically examines the header chains learned from peers through chain sync, and sends requests to download block bodies according to a download rule. We implement the two download rules discussed in Section~\ref{sec:introduction}: `download along the longest header chain', and `download towards the freshest block'. Similar to Cardano, our block fetch logic limits the maximum number of peers to concurrently download from, an important parameter which we call the \emph{in-flight cap}. This ensures the limited network bandwidth is never spread too thin across too many concurrent downloads. Finally, chain sync and block fetch share the same TCP connection for each pair of peers. To avoid head-of-line blocking, we multiplex the two subsystems so that chain sync is never impaired by block fetch traffic.

To simulate bandwidth constraints, we build our testbed using Mininet \cite{mininet}. Each blockchain node runs in a Mininet virtual host with its own network interface, and is connected to a central switch through a link with limited bandwidth and artificial propagation delay. Specifically, we limit the bandwidth of honest nodes to $20\,\mathrm{Mbps}$, and adversarial nodes to $1\,\mathrm{Gbps}$. We set the round-trip time between any pair of nodes to $100\,\mathrm{ms}$. The testbed runs on a workstation with two Intel Xeon E5-2623 v3 CPUs and $32\,\mathrm{GB}$ of RAM.

\subsection{Demonstration of the Spamming Attack}
\label{sec:spam-demo}

In this experiment, we show that the widely-adopted `download along the longest chain' rule is vulnerable to adversarial spamming, and the `download towards the freshest block' rule mitigates this attack. There are $20$ honest nodes connected in a full mesh topology. Honest nodes equally split 67\% of the total stake, so each honest node has a block production rate of $0.002\,\mathrm{block/s}$. The adversary controls 33\% of the stake ($0.02\,\mathrm{block/s}$), and sets up $5$ attacking nodes. Each attacking node connects to all honest nodes. The adversary uses the attacking nodes to monitor the longest chains announced by honest nodes, and tries to mine equivocating spam chains (\cf Figure~\ref{fig:poslclongestheaderchainattack}). When successful, the adversary announces them and hopes honest nodes download these spam chains.

\begin{figure}[t]
    \centering
    \begin{tikzpicture}
        \small
        \begin{axis}[
            mysimpleplot,
            xlabel={Time [$\mathrm{s}$]},
            ylabel={Local chain length},
            legend columns=2,
            xmin=0, xmax=3600,
            ymin=0, ymax=150,
            height=0.4\linewidth,
            width=\linewidth,
        ]
        
        \addlegendimage{myParula01Blue,no marks}
        \addlegendentry{Download freshest block}
        \addlegendimage{myParula02Orange,no marks}
        \addlegendentry{Download longest chain}
        \addlegendimage{area legend,draw opacity=0,fill=myParula01Blue,fill opacity=0.2,draw=none};
        \addlegendentry{Attack on freshest block};
        \addlegendimage{area legend,draw opacity=0,fill=myParula02Orange,fill opacity=0.15,draw=none};
        \addlegendentry{Attack on longest chain};

        \foreach \tStart/\tStop in {90.998647/138.998647, 395.998647/417.998647, 450.998647/510.998647, 530.998647/536.998647, 671.998647/700.998647, 762.998647/780.998647, 834.998647/929.998647, 951.998647/1018.998647, 1044.998647/1102.998647, 1117.998647/1168.998647, 1175.998647/1190.998647, 1239.998647/1243.998647, 1247.998647/1253.998647, 1262.998647/1279.998647, 1296.998647/1308.998647, 1316.998647/1321.998647, 1331.998647/1341.998647, 1365.998647/1383.998647, 1390.998647/1423.998647, 1611.998647/1625.998647, 1631.998647/1668.998647, 1732.998647/1766.998647, 1781.998647/1803.998647, 1820.998647/1833.998647, 1884.998647/1933.998647, 1967.998647/1972.998647, 1984.998647/1989.998647, 2091.998647/2101.998647, 2109.998647/2130.998647, 2215.998647/2216.998647, 2245.998647/2279.998647, 2343.998647/2365.998647, 2430.998647/2450.998647, 2468.998647/2503.998647, 2622.998647/2639.998647, 2666.998647/2678.998647, 2743.998647/2744.998647, 2782.998647/2798.998647, 2804.998647/2813.998647, 2819.998647/2844.998647, 2876.998647/2917.998647, 2936.998647/2989.998647, 3063.998647/3078.998647, 3196.998647/3224.998647, 3229.998647/3253.998647, 3275.998647/3278.998647, 3298.998647/3299.998647, 3328.998647/3333.998647, 3351.998647/3356.998647, 3435.998647/3455.998647, 3471.998647/3539.998647, 3603.998647/3615.998647} {
            \edef\temp{\noexpand\draw [fill=myParula01Blue,fill opacity=0.2,draw=none] (axis cs:\tStart,0) rectangle (axis cs:\tStop,150);}
            \temp
        }

        \foreach \tStart/\tStop in {96.999632/103.999632, 113.999632/264.999632, 401.999632/3959.999632} {
            \edef\temp{\noexpand\draw [fill=myParula02Orange,fill opacity=0.15,draw=none] (axis cs:\tStart,0) rectangle (axis cs:\tStop,150);}
            \temp
        }

        \foreach \n in {0,...,19}{
            \addplot [myParula01Blue,no marks,ultra thin] table [x=time,y=height] {experiments/attack_demo_freshest/chain_growth_victim_\n.txt};
        }

        \foreach \n in {0,...,19}{
            \addplot [myParula02Orange,no marks,ultra thin] table [x=time,y=height] {experiments/attack_demo_longest/chain_growth_victim_\n.txt};
        }
        
        \end{axis}
    \end{tikzpicture}%
    \vspace{-0.5em}%
    \caption{Traces of honest chain growth under spamming attack (\cf Figure~\ref{fig:poslclongestheaderchainattack}) when using different download rules and an in-flight cap of 2. Each curve represents one honest node. Shaded areas represent time periods when nodes are suffering from the attack and are downloading invalid blocks. PoS LC downloading longest chain stalls. PoS LC downloading freshest blocks is robust.}
    \label{fig:spam-trace}
\end{figure}
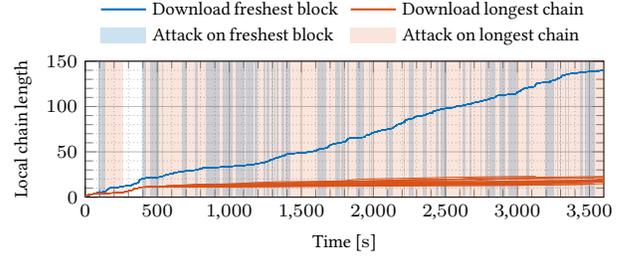

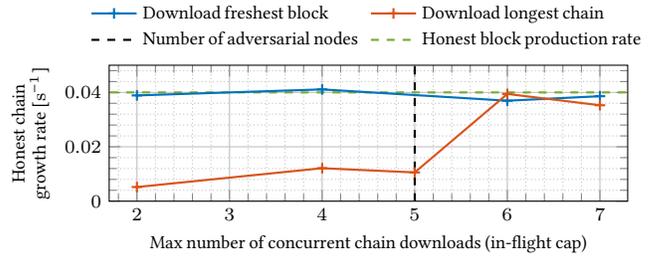
\begin{figure}[t]
    \centering
    \begin{tikzpicture}
        \small
        \begin{axis}[
            mysimpleplot,
            xlabel={Max number of concurrent chain downloads (in-flight cap)},
            ylabel={Honest chain\\growth rate [$\mathrm{s}^{-1}$]},
            legend columns=2,
            xmin=1.7, xmax=7.3,
            ymin=0, ymax=0.05,
            height=0.4\linewidth,
            width=\linewidth,
            yticklabel style={
                /pgf/number format/fixed,
                /pgf/number format/precision=2
            },
            scaled y ticks=false,
        ]
        
        \addlegendimage{myparula11}
        \addlegendentry{Download freshest block}
        \addlegendimage{myparula21}
        \addlegendentry{Download longest chain}
        
        \addlegendimage{dashed,thick}
        \addlegendentry{Number of adversarial nodes}
        \addlegendimage{dashed,myParula05Green,thick}
        \addlegendentry{Honest block production rate}

        \draw [dashed,thick] (axis cs:5,0) -- (axis cs:5,1);
        \draw [dashed,myParula05Green,thick] (axis cs:0,0.04) -- (axis cs:100,0.04);

        \addplot [myparula11] table [x=cap,y=growth] {experiments/attack_variable_cap/freshest.txt};
        
        \addplot [myparula21] table [x=cap,y=growth] {experiments/attack_variable_cap/longest.txt};

        \end{axis}
    \end{tikzpicture}%
    \vspace{-0.5em}%
    \caption{Honest chain growth rate under spamming attack (\cf Figure~\ref{fig:poslclongestheaderchainattack}) while allowing concurrent block downloads from different number of peers. With in-flight cap below the number of adversarial peers, PoS LC downloading longest chain shows performance degradation; PoS LC downloading freshest block is robust.}
    \label{fig:cap-grow}
\end{figure}

\begin{figure}[t]
    \centering
    \begin{tikzpicture}
        \small
        \begin{axis}[
            mysimpleplot,
            xlabel={Propagation delay [$\mathrm{s}$]},
            ylabel={Fraction of blocks},
            legend columns=2,
            xmin=0, xmax=12,
            height=0.4\linewidth,
            width=\linewidth,
        ]
        
        \addlegendimage{myParula01Blue,no marks}
        \addlegendentry{Freshest block, 10 attackers}
        \addlegendimage{myParula02Orange,no marks}
        \addlegendentry{Longest chain, 10 attackers}
        \addlegendimage{myParula01Blue,no marks, dashed, opacity=0.5}
        \addlegendentry{Freshest chain, 5 attackers}
        \addlegendimage{myParula02Orange,no marks, dashed, opacity=0.5}
        \addlegendentry{Longest chain, 5 attackers}

        \addplot [myParula01Blue,no marks] table [x=midpoint,y=density] {experiments/propagation_delay/freshest.txt};

 \addplot [myParula02Orange,no marks] table [x=midpoint,y=density] {experiments/propagation_delay/longest.txt};        
 
  \addplot [myParula02Orange,no marks, dashed, opacity=0.5] table [x=midpoint,y=density] {experiments/propagation_delay/longest-5-attackers.txt};
  
  \addplot [myParula01Blue,no marks, dashed, opacity=0.5] table [x=midpoint,y=density] {experiments/propagation_delay/freshest-5-attacker.txt};

        \end{axis}
    \end{tikzpicture}%
    \vspace{-0.5em}%
    \caption{Empirical cumulative density function of block propagation delay under different download rules, facing different number of attackers, and an infinite in-flight cap.}
    \label{fig:propagation-delay}
\end{figure}
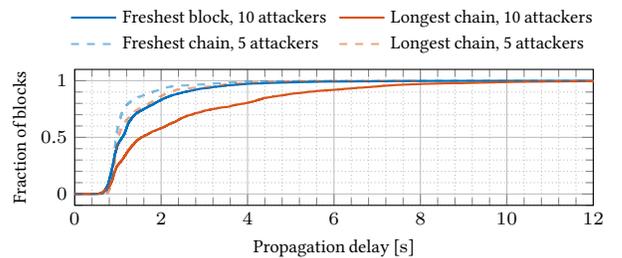

Figure~\ref{fig:spam-trace} shows the time series of honest chain growth over an hour when the in-flight cap is set to $2$. Note that honest chain growth stalls after $400$ seconds when nodes download the longest chain. Since there are $5$ attacking nodes, once the adversary gets a longer chain by luck, each honest user will use all of its $2$ in-flight slots to download spam chains (from $2$ of the $5$ attacking nodes), leaving no room for honest blocks. Before any honest node finishes downloading a spam block, the adversary will have advertised another equivocating chain, keeping the honest nodes busy. Although honest nodes can still mine blocks, they cannot download blocks from each other, so each honest node effectively mines on its own fork. The resulting heavy forking causes the honest chains to grow slower than the adversary mining rate, and the adversary maintains the lead in chain length and sustains this attack (red-shaded areas in Figure~\ref{fig:spam-trace}) until the experiment ends.

In comparison, honest chain growth is unaffected when nodes download towards the freshest block. Note that although the adversary is still able to trick honest nodes into downloading spam blocks (blue-shaded areas in Figure~\ref{fig:spam-trace}), the adversary cannot \emph{sustain} the attack: when a new honest block is produced, the chain containing that fresh block will be prioritized. Before the adversary manages to produce a fresher block, all honest nodes will have caught up on the correct chain. Further experiments in Appendix~\ref{sec:vary-block-size} show that honest chain growth is unaffected with even larger block sizes.

\subsection{Impact of the In-Flight Cap}
\label{sec:inflight-cap}

We now extend the previous experiment by varying the in-flight cap between $2$ and $7$, and demonstrate the relationship between the in-flight cap and the number of attacking nodes. Figure~\ref{fig:cap-grow} shows the results. When the in-flight cap is equal to or smaller than the number of attacking nodes, downloading the longest chain is not secure. This may remind readers of the eclipse attack \cite{eclipse,eclipse1,eclipse2,eclipse3,eclipse4}: the adversary is in fact eclipsing the honest nodes in the block fetch subsystem by occupying all its in-flight slots. Meanwhile, downloading the freshest chain is always secure \emph{regardless} of the in-flight cap, because a fresh honest block can \emph{break} such eclipse.

Figure~\ref{fig:cap-grow} seems to suggest that downloading the longest chain is secure when the in-flight cap is larger than the number of attackers. 
Is it true? Should we then increase the in-flight cap to infinity? We point out that the in-flight cap ensures each in-flight download gets a sufficiently large share of the available bandwidth to complete in a reasonable amount of time. This is critical in ensuring low propagation delay for honest blocks. As an extreme example, assume that there are a large number of attacking nodes and an infinite in-flight cap. Although a node will always start downloading an honest block as soon as it receives the announcement, the bandwidth allocated to download this block will be extremely small due to the competing downloads of adversarial blocks, effectively halting the download. As a result, a finite in-flight cap is necessary, and the attacker can always attack the `download along the longest chain' rule by outnumbering the in-flight cap.

To demonstrate this effect, we remove the in-flight cap,
increase the number of attacking nodes to $10$, and measure the block propagation time. The results in Figure~\ref{fig:propagation-delay} show that the propagation time under both rules increases. This is because when the attack \emph{is} active, there are more competing flows downloading spam blocks, leaving less bandwidth for honest blocks. Still, the chain growth rate is unharmed when downloading the freshest chain, at $0.041\,\mathrm{block/s}$. This is because nodes can break away from the spam chain as soon as a new honest block is produced, regardless how bad the propagation time is under active spam. In comparison, the propagation delay when downloading the longest chain becomes much worse. In fact, the higher delay causes the chain growth rate to drop to $0.035\,\mathrm{block/s}$. 
In conclusion, removing or increasing the in-flight cap does not save the `download along the longest chain' rule, but impacts the block propagation delay of the `download towards the freshest block' rule
only slightly so that security is unaffected.

\subsection{Bandwidth Consumption}%

\begin{figure}[t]
    \centering
    \begin{tikzpicture}
        \small
        \begin{axis}[
            mysimpleplot,
            xlabel={Time [$\mathrm{s}$]},
            ylabel={Download traffic [$\mathrm{Mbps}$]},
            legend columns=2,
            xmin=0, xmax=500,
            ymin=-1, ymax=21,
            height=0.4\linewidth,
            width=\linewidth,
        ]
        
        \addlegendimage{myParula01Blue,no marks}
        \addlegendentry{Download freshest block}
        \addlegendimage{myParula02Orange,no marks}
        \addlegendentry{Download longest chain}
        \addlegendimage{area legend,draw opacity=0,fill=myParula01Blue,fill opacity=0.2,draw=none};
        \addlegendentry{Attack on freshest block};
        \addlegendimage{area legend,draw opacity=0,fill=myParula02Orange,fill opacity=0.15,draw=none};
        \addlegendentry{Attack on longest chain};

        \foreach \tStart/\tStop in {-1902.00191/-1869.00191, -1597.00191/-1570.00191, -1542.00191/-1522.00191, -1487.00191/-1476.00191, -1462.00191/-1457.00191, -1321.00191/-1316.00191, -1230.00191/-1169.00191, -1158.00191/-1139.00191, -1114.00191/-1111.00191, -1088.00191/-1075.00191, -1041.00191/-1031.00191, -1020.00191/-1006.00191, -988.00191/-983.00191, -948.00191/-892.00191, -875.00191/-854.00191, -817.00191/-740.00191, -730.00191/-711.00191, -696.00191/-694.00191, -676.00191/-658.00191, -654.00191/-574.00191, -381.00191/-373.00191, -362.00191/-339.00191, -260.00191/-214.00191, -211.00191/-130.00191, -108.00191/-99.00191, -89.00191/-50.00191, -25.00191/9.99809, 98.99809/110.99809, 116.99809/127.99809, 222.99809/268.99809, 350.99809/386.99809, 437.99809/469.99809, 475.99809/516.99809, 629.99809/640.99809, 673.99809/678.99809, 750.99809/772.99809, 789.99809/807.99809, 811.99809/842.99809, 883.99809/909.99809, 943.99809/965.99809, 977.99809/1036.99809, 1070.99809/1074.99809, 1203.99809/1207.99809, 1236.99809/1265.99809, 1282.99809/1349.99809, 1358.99809/1408.99809, 1442.99809/1448.99809, 1455.99809/1542.99809} {
            \edef\temp{\noexpand\draw [fill=myParula01Blue,fill opacity=0.2,draw=none] (axis cs:\tStart,-10) rectangle (axis cs:\tStop,150);}
            \temp
        }

        \foreach \tStart/\tStop in {-1906.001542/-1892.001542, -1889.001542/-1883.001542, -1601.001542/-1587.001542, -1546.001542/-1538.001542, -1491.001542/-1479.001542, -1466.001542/-1452.001542, -1325.001542/-1316.001542, -1234.001542/-1229.001542, -1217.001542/1587.998458} {
            \edef\temp{\noexpand\draw [fill=myParula02Orange,fill opacity=0.15,draw=none] (axis cs:\tStart,-10) rectangle (axis cs:\tStop,150);}
            \temp
        }

        \addplot [myParula01Blue,no marks,thin] table [x=time,y=mbps] {experiments/bandwidth_usage/freshest-cap-4.txt};

        \addplot [myParula02Orange,no marks,thin] table [x=time,y=mbps] {experiments/bandwidth_usage/longest-cap-4.txt};

        \end{axis}
    \end{tikzpicture}%
    \vspace{-0.5em}%
    \caption{Traces of download traffic over a $500$-second period at one of the victim nodes when using different download rules and in-flight cap of $4$. Shaded areas represent time periods when the node is suffering from the attack and downloading invalid blocks.}
    \label{fig:bandwidth-trace}
\end{figure}
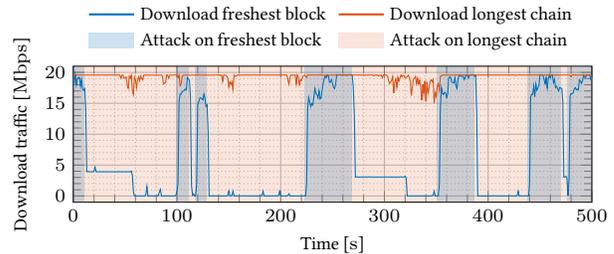

Besides block bodies, a blockchain node needs to receive other types of traffic in real time, such as unconfirmed transactions, requests from clients, and remote control data. A practical download rule must not consume all the available bandwidth at a node at all time. As explained in Section~\ref{sec:spam-demo}, under the `download towards the freshest block' rule, an honest node breaks free from the spam chains when an honest block is mined. That is, spamming stops when there is a time slot with only one honest block proposed. Intuition suggests that as long as the overall mining rate is not too high, such event should happen frequently. Indeed, the ingress traffic traces in Figure~\ref{fig:bandwidth-trace} show that periods of high network utilization only last for tens of seconds when downloading the freshest block, quickly succeeded by long windows of low utilization. In comparison, when downloading the longest chain, the period of high utilization lasts until the end of the experiment, leaving no room for honest blocks or other traffic.

\section{Other Download Rules}
\label{sec:other_download_rules}

\subsection{Equivocation Avoidance}
\label{sec:equivocation_avoidance_rule}

We formalize a common heuristic to deal with equivocations, namely downloading only one out of many equivocations in Algorithm~\ref{algo:longest-header-chain-equivoc-avoid-rule}.
In every time slot when a node wishes to download blocks, it filters the tree consisting of block headers it has received by retaining only one leaf in the tree for each block production opportunity (determined by the proposing node and the time slot of each block). 
We strengthen the adversary by allowing it to decide,
per honest node and time slot,
which among multiple equivocating headers would be retained.
After removing equivocations, invalid chains, and chains that are already downloaded from the header tree, the node selects the longest header chain, and downloads the content for the first missing block in this chain. 

Here, we have illustrated equivocation avoidance as a modification to the longest header chain download rule. By doing so, we show that while the longest header chain download rule was insecure by itself, it can be made secure by `avoiding equivocations'.
However, equivocation avoidance could also be added to the freshest block rule to make it more efficient.
Our analysis below suggests that the freshest block rule by itself is already more efficient than the longest header chain rule with equivocation avoidance because the latter requires downloading a much larger number of blocks within one time slot (see Corollaries~\ref{cor:security-freshest-block}, \ref{thm:security_eq_avoid}), leading to longer time slots and poorer bandwidth utilization.

\begin{algorithm}[t]
    \caption[]{`Equivocation avoidance' download rule;
    replaces $\operatorname{downloadRule}$ in Algorithm~\ref{algo:pseudocode-pos-lc} (\cf Appendix~\ref{sec:appendix-helperfunctions-pseudocode})}
    \label{algo:longest-header-chain-equivoc-avoid-rule}
    \begin{algorithmic}[1]
        \small
        \On{$\DeltaHeader$ time into each time slot $t$}
            \LineComment{Before beginning block content downloads for time slot $t$,
            filter current header tree to keep at most one leaf per block production opportunity, \ie, per $(\mathsf{node},\mathsf{time})$ pair (equivocation avoidance; ties broken adversarially)}
            \State $\HeaderTree^* \gets \operatorname{oneLeafPerProductionOpportunity}(\protocol.\HeaderTree)$
        \EndOn
        \RealFunction{$\operatorname{avoidEquivocations}(\HeaderTree, \TxsMap)$}
            \LineComment{Ignore chains with invalid content in any block}
            \State $\Tree' \gets \{ \Chain \in \HeaderTree^* \mid \forall \Chain' \preceq \Chain\colon \TxsMap[\Chain'] \neq \INVALID \}$
            \LineComment{Ignore downloaded chains}
            \State $\Tree' \gets \{ \Chain \in \Tree' \mid \TxsMap[\Chain] = \NIL \}$
            \LineComment{Select the longest chain}
            \State $\Chain \gets \argmax_{\Chain'\in\Tree'} \len{\Chain'}$   \label{loc:pseudocode-longest-headear-chain-decision-line}
            \LineComment{Find the first not downloaded block on that chain
            (if non-existent: $\bot$)}
            \State $\Chain' \gets \argmin_{\Chain''\preceq\Chain\colon  \TxsMap[\Chain'']=\NIL} \len{\Chain''}$   
            \State \Return{$\Chain'$}
        \EndRealFunction
    \end{algorithmic}
\end{algorithm}

\subsection{Analysis}
\label{sec:equivocation_avoidance_analysis}
We use the general framework developed in Section~\ref{sec:analysis} to prove security of PoS LC under the equivocation avoidance download rule.
Recall that we only need to prove that $\grpromempty$ (Definition~\ref{def:grprom-definition}) holds with overwhelming probability.

In a uniquely successful slot, honest nodes may not immediately download towards the block from that slot. This is because there could be other chains in a node's header tree that are longer (recall that Algorithm~\ref{algo:longest-header-chain-equivoc-avoid-rule} prioritizes the longest header chain after removing equivocations and invalid prefixes).
However, we can bound the number of blocks that will be downloaded before downloading the block from the uniquely successful slot.
With equivocation avoidance, 
honest nodes retain only one leaf in their header tree per block production opportunity. So, 
honest nodes download at most one chain per block production opportunity. Since 
block production opportunities are bounded, we will show in Lemma~\ref{prop:unique_download_longest_header_chain_rule} that there can not be too many longer chains in the honest node's header tree.

Define for slots $s\leq t$,
\begin{IEEEeqnarray}{C}
    W_{s,t} \triangleq \max_{r<s \colon \Unique{r} \land (\Nint{r,s} \geq \Lint{r,t})} \Nint{r,s}.
\end{IEEEeqnarray}

\begin{definition}
\label{def:bounded_eq}
For an execution $\exec$, $\BoundedEq(\exec)$ holds iff 
\begin{IEEEeqnarray}{C}
    \forall t \leq \Th\colon W_{t-1,t-1} + \sum_{s\leq t} A_{s} W_{s,t} < \downloadLimit.
    \IEEEeqnarraynumspace
\end{IEEEeqnarray}
\end{definition}

\begin{lemma}
\label{prop:unique_download_longest_header_chain_rule}
For an execution $\exec$ and the longest header chain download rule with equivocation avoidance $\dlruleeqavoid$ (Algorithm~\ref{algo:longest-header-chain-equivoc-avoid-rule}),
\begin{IEEEeqnarray}{c}
\BoundedEq(\exec) \implies \grpromempty(\exec,\dlruleeqavoid) \IEEEeqnarraynumspace
\end{IEEEeqnarray}
\end{lemma}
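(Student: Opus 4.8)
The plan is to mirror the induction of Lemma~\ref{lem:loner_download}. I enumerate the uniquely successful slots $t_1 < \dots < t_m$ in $(0,\Th]$, assume as induction hypothesis that $\grprom{(0,t_j-1]}(\exec,\dlruleeqavoid)$ holds, and prove that the block $b_j$ proposed in $t_j$ is fetched by every honest node among the first $\bwslot$ block contents it downloads during slot $t_j$; taking $j=m$ then yields $\grpromempty$. As before, the header of $b_j$ reaches all honest nodes within $\DeltaHeader$ and, since $b_j$ is honest, the whole prefix of $b_j$ is valid and available. The essential difference from the freshest-block rule is that $\dlruleeqavoid$ may spend downloads on longer competing chains before reaching $b_j$. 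Here the key structural observation is that, because the rule always fetches the first missing block of the \emph{longest} not-fully-downloaded valid chain in the filtered header tree $\HeaderTree^*$, and $b_j$'s chain stays not-fully-downloaded until $b_j$ itself is fetched, every block downloaded before $b_j$ lies on some chain of $\HeaderTree^*$ of length at least $\len{b_j}$. Thus it suffices to upper bound the number of not-yet-downloaded block contents lying on chains of length $\geq \len{b_j}$, and to show this number is $< \bwslot$.

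For the per-chain bound I would fix such a long chain $\calC$ with tip in slot $s$ and let $r$ be the last uniquely successful slot whose block $b^{(r)}$ lies on $\calC$. Since content processing defers until the prefix is downloaded, the downloaded blocks of $\calC$ form a prefix and the not-downloaded blocks a suffix; by the induction hypothesis every uniquely-successful block from a slot before $t_j$ is already downloaded, so the not-downloaded suffix contains no uniquely-successful blocks other than possibly $b_j$. As slot-times strictly increase along a chain, the suffix therefore contains at most one block per adversarial slot in $(r,s]$, i.e.\ at most $\Nint{r,s}$ blocks. To see that this is controlled by $W_{s,t_j}$, I combine $\len{\calC} \leq \len{b^{(r)}} + \Nint{r,s}$ (only adversarial-slot blocks follow $b^{(r)}$ on $\calC$) with the chain-growth bound $\len{b_j} \geq \len{b^{(r)}} + \Lint{r,t_j}$ (from part~\ref{item:inc_height} of Lemma~\ref{prop:chain_growth}, the strictly increasing heights of the downloaded uniquely-successful blocks, and $\Unique{t_j}$) and the hypothesis $\len{\calC} \geq \len{b_j}$, which together give $\Nint{r,s} \geq \Lint{r,t_j}$. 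Hence $r<s$ witnesses the maximum defining $W_{s,t_j}$, and the per-chain count is at most $W_{s,t_j}$.

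It then remains to count competing chains, which is where equivocation avoidance enters: $\HeaderTree^*$ retains one leaf per block production opportunity, so the competing long chains can be indexed by their tips. Chains ending in a downloaded uniquely-successful block contribute nothing, and the single chain that $b_j$ extends (whose tip lies in a slot $\leq t_j-1$) is handled exactly as in Lemma~\ref{lem:loner_download}: its not-downloaded prefix is bounded by $\Nint{r_j,t_j-1} \leq W_{t_j-1,t_j-1}$. Every other relevant tip sits in an adversarial slot $s$, of which there are at most $A_s$, and summing the per-chain bound over these gives $\sum_{s\leq t_j} A_s\, W_{s,t_j}$. Adding the contribution of $b_j$'s own chain, the number of blocks fetched before $b_j$ is at most $W_{t_j-1,t_j-1} + \sum_{s\leq t_j} A_s W_{s,t_j}$, which is $< \bwslot$ by $\BoundedEq(\exec)$ at $t=t_j$ (Definition~\ref{def:bounded_eq}); thus $b_j$ is downloaded among the first $\bwslot$ contents, closing the induction.

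The main obstacle I anticipate is precisely this last counting step: justifying that only the adversarial-opportunity count $\sum_s A_s W_{s,t}$ (plus the single $W_{t-1,t-1}$ term) is needed, rather than a count over all $A_s+H_s$ opportunities per slot. One must argue carefully that chains whose tips are honest blocks either are already fully downloaded or need not be charged separately, so that honest blocks produced in (non-uniquely-successful) adversarial slots do not inflate the bound. Getting this bookkeeping right under the adversary's freedom to choose, per honest node and slot, which equivocating header is retained in $\HeaderTree^*$ is the delicate part; the per-chain estimate and the chain-growth inequality above are comparatively routine adaptations of Lemma~\ref{lem:loner_download}.
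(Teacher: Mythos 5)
Your proposal follows essentially the same route as the paper's own proof: the same induction over uniquely successful slots, the same per-chain bound combining $\len{\calC}\leq\len{b^{(r)}}+\Nint{r,s}$ with $\len{b_j}\geq\len{b^{(r)}}+\Lint{r,t_j}$ to get $\Nint{r,s}\geq\Lint{r,t_j}$ and hence a per-chain cost of $W_{s,t_j}$, and the same tip-counting via equivocation avoidance leading to the total $W_{t_j-1,t_j-1}+\sum_{s\leq t_j}A_s W_{s,t_j}<\bwslot$. The bookkeeping concern you flag about chains tipped by honest blocks from non-uniquely-successful slots is a fair one, but the paper's proof treats it no more explicitly than you do: it simply observes that the tip of any competing long chain cannot come from a uniquely successful slot and then charges only adversarial block production opportunities.
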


\begin{lemma}
\label{lem:bounded_equivocations}
If $\SecurityConditionFull$ for some $\epsilon_1\in(0,1)$ and $\downloadLimit = \pa \Tb (1+\beta\rho\Tb(1+\epsilon_3))(1+\epsilon_2)$
for some $\epsilon_2, \epsilon_3 > 0$
where $\Tb = \frac{\Omega(\kappa+\ln\Th)}{\constBoundedEq  p}$, then
\begin{IEEEeqnarray}{C}
\label{eq:bounded_equivocations}
    \Prob{ \exec\colon \lnot \BoundedEq(\exec) } 
    \leq \negl(\kappa)
\end{IEEEeqnarray}
where $\constBoundedEq  = \max\left\{ \frac{\epsilon_1^2}{36}, \frac{\epsilon_2^2}{\epsilon_2+2}\frac{\pa}{p},  \frac{\pu(1-\epsilon_3)}{p}\ln\left(\frac{\pu}{1-\pu}\right), \frac{\epsilon_3^2\pu}{2p}, \frac{\epsilon_3^2\beta\rho}{(\epsilon_3+2)p} \right\}$. 
\end{lemma}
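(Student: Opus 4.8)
The plan is to reduce the defining inequality of $\BoundedEq$ to two concentration statements, one of which is handled directly by Lemma~\ref{lem:num_blocks_bound}. Expanding $\downloadLimit = \pa\Tb(1+\epsilon_2)\bigl(1+\beta\rho\Tb(1+\epsilon_3)\bigr)$, it suffices to show that with overwhelming probability, simultaneously for all $t\le\Th$: (A) $W_{t-1,t-1} < \pa\Tb(1+\epsilon_2)$, and (B) $\sum_{s\le t} A_s W_{s,t} \le \pa\Tb(1+\epsilon_2)\cdot\beta\rho\Tb(1+\epsilon_3)$; summing (A) and (B) then yields $\BoundedEq$ (strictly, since (A) is strict). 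Bound (A) is nothing but the $\FewBlockOpps$-type statement evaluated at slot $t-1$: $W_{t-1,t-1}$ is the maximum of $\Nint{r,t-1}$ over uniquely successful $r$ whose window is adversarially dominated ($\Nint{r,t-1}\ge\Lint{r,t-1}$), which is exactly the quantity controlled in Lemma~\ref{lem:num_blocks_bound}. Hence (A) holds with overwhelming probability by the same argument, contributing the exponents $\frac{\epsilon_1^2}{36}$ (margin of the security condition forcing $\pu>\pa$, so that long dominated windows are exponentially rare) and $\frac{\epsilon_2^2}{\epsilon_2+2}\frac{\pa}{p}$ (upper-tail Chernoff on $\Nint$ over a length-$\Tb$ window).

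For (B), I would first establish a \emph{reach} property: only recently started windows contribute. The term $W_{s,t}$ is nonzero only if some uniquely successful $r<s$ satisfies $\Nint{r,s}\ge\Lint{r,t}=\Lint{r,s}+\Lint{s,t}$, in particular $\Nint{r,s}\ge\Lint{s,t}$. If $s\le t-\Tb$, then $(s,t]$ spans at least $\Tb$ slots, so a lower-tail Chernoff bound gives $\Lint{s,t}\ge\pu(1-\epsilon_3)\Tb$ with overwhelming probability (exponent $\frac{\epsilon_3^2\pu}{2p}$), while on the event of (A) any dominated window obeys $\Nint{r,s}<\pa(1+\epsilon_2)\Tb$. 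Because $\SecurityConditionFull$ makes $\pu$ exceed $\pa$ with margin $\epsilon_1$, for $\epsilon_2,\epsilon_3$ small enough we have $\pu(1-\epsilon_3)>\pa(1+\epsilon_2)$, a contradiction; hence $W_{s,t}=0$ for all $s\le t-\Tb$. (Ruling out, uniformly over windows, that any long stretch exhibits adversarial dominance $\Nint\ge\Lint$ is the ballot-type large-deviation estimate contributing the exponent $\frac{\pu(1-\epsilon_3)}{p}\ln(\pu/(1-\pu))$.) On this good event the sum collapses to $s\in(t-\Tb,t]$; there I would factor out the uniform cap $W_{s,t}\le W_{s,s}<\pa(1+\epsilon_2)\Tb$ coming from (A) (which holds at every slot) and bound $\sum_{s\in(t-\Tb,t]}A_s\le\beta\rho\Tb(1+\epsilon_3)$ by an upper-tail Chernoff bound on a sum of $\Tb$ i.i.d.\ $\mathrm{Poisson}(\beta\rho)$ variables (exponent $\frac{\epsilon_3^2\beta\rho}{(\epsilon_3+2)p}$). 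Multiplying the two factors gives (B).

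Finally I would collect all of the above into a single good event and take a union bound. Each sub-event is indexed by a window $(r,s]$ or an endpoint with $r<s\le\Th$, so there are $O(\Th^2)$ of them; each has probability $\exp(-\Omega(\constBoundedEq\, p\,\Tb))$, and since $\Tb=\Omega(\kappa+\ln\Th)/(\constBoundedEq p)$ this is $\exp(-\Omega(\kappa+\ln\Th))$. The factor $\Th^2$ is absorbed by the $\ln\Th$ in the exponent, leaving $\negl(\kappa)$ overall. The constant $\constBoundedEq$ is precisely the aggregate of the five large-deviation exponents identified above, chosen so that a window of length $\Tb$ drives every one of these events below $\negl(\kappa)$.

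The step I expect to be the main obstacle is the sum in (B): the summands $A_s W_{s,t}$ are \emph{not} independent, since $W_{s,t}$ depends on the entire history of uniquely successful and adversarial slots, including the very slots carrying the $A_s$. One therefore cannot treat $\sum_{s\le t} A_s W_{s,t}$ as a sum of independent terms and concentrate it directly. The clean way around this is to place every structural fact ($W_{s,t}=0$ for $s\le t-\Tb$, the uniform cap $W_{s,t}<\pa(1+\epsilon_2)\Tb$, and $\sum_{s\in(t-\Tb,t]}A_s\le\beta\rho\Tb(1+\epsilon_3)$) inside one good event whose complement is controlled term by term; on that event the target inequality for (B) holds \emph{deterministically}, so no independence between the $A_s$ and the $W_{s,t}$ is ever invoked.
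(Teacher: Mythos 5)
Your proposal is correct and follows the same overall decomposition as the paper's proof: control the diagonal terms $W_{t,t}$ uniformly via Lemma~\ref{lem:num_blocks_bound}, show that $W_{s,t}=0$ once $s$ lies more than $\Tb$ slots before $t$, cap the surviving terms by the diagonal bound, Chernoff-bound $\sum_{s\in(t-\Tb,t]}A_s$, and make the final product inequality hold deterministically on one good event --- which, as you correctly observe, is exactly how the dependence between $A_s$ and $W_{s,t}$ is dodged in the paper as well. The one step where you genuinely diverge is the reach property. The paper bounds $\Prob{\exists s<t-\Tb\colon W_{s,t}>0}$ probabilistically: it reduces to the event $\exists r<t-\Tb\colon \Nint{r,t-\Tb}\geq\Lint{r,t-\Tb}+L$ with $L=\pu\Tb(1-\epsilon_3)$ and estimates this by a gambler's-ruin argument on the walk counting uniquely successful minus adversarial slots, which is the source of the exponent $\frac{\pu(1-\epsilon_3)}{p}\ln\left(\frac{\pu}{1-\pu}\right)$ in $\constBoundedEq$. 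You instead obtain the reach property deterministically on the good event: any window contributing to $W_{s,t}$ is dominated and hence already capped by $\pa(1+\epsilon_2)\Tb$ via (A), yet must satisfy $\Nint{r,s}\geq\Lint{s,t}\geq\pu(1-\epsilon_3)\Tb$, a contradiction. This buys a shorter argument with one fewer large-deviation estimate (your parenthetical invoking the ballot-type exponent is then redundant --- your own argument never produces that term), at the cost of the side condition $\pu(1-\epsilon_3)>\pa(1+\epsilon_2)$; that restriction is harmless because $\BoundedEq$ only becomes easier as $\downloadLimit$ grows, so proving the lemma for small $\epsilon_2,\epsilon_3$ suffices. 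Your route also sidesteps a delicate point in the paper's estimate: the stated hitting probability $\left(\frac{1-\pu}{\pu}\right)^{L}$ uses down-step probability $1-\pu$ rather than $\pa$, and since $\pu\leq e^{-1}<\frac12$ that ratio exceeds one, whereas your deterministic contradiction needs no such computation.
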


Lemma~\ref{prop:unique_download_longest_header_chain_rule} is proved in Appendix~\ref{sec:appendix-proof-equi-avoid} and Lemma~\ref{lem:bounded_equivocations} in Appendix~\ref{sec:appendix-proof-bounded-equivocations}.
Then, we obtain the following corollary of Theorem~\ref{thm:security}.

\begin{corollary}
\label{thm:security_eq_avoid}
The protocol $\protocol$ 
with the equivocation avoidance download rule
and parameters
$\rho$ such that $\SecurityConditionFull$,
$\tau=\Omega\left((\kappa+\ln\Th)^2\right)$
and
$\Tconf=\Omega\left((\kappa+\ln\Th)^2\right)$,
is secure with
$\Tlive=\Omega\left((\kappa+\ln\Th)^2\right)$.
\end{corollary}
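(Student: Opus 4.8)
The plan is to apply Theorem~\ref{thm:security} directly to the equivocation avoidance rule $\dlruleeqavoid$, so that the entire corollary reduces to verifying the single nontrivial hypothesis of that theorem---that $\grpromempty$ holds with overwhelming probability---and then reading off the forced order of $\tau$. First I would fix $\bwslot$ exactly as prescribed in Lemma~\ref{lem:bounded_equivocations}, namely $\bwslot = \pa \Tb(1+\beta\rho\Tb(1+\epsilon_3))(1+\epsilon_2)$ with $\Tb = \frac{\Omega(\kappa+\ln\Th)}{\constBoundedEq p}$, for freely chosen $\epsilon_2,\epsilon_3>0$. For this choice, Lemma~\ref{lem:bounded_equivocations} gives $\Prob{\exec\colon \lnot\BoundedEq(\exec)} \leq \negl(\kappa)$. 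Since Lemma~\ref{prop:unique_download_longest_header_chain_rule} shows $\BoundedEq(\exec) \implies \grpromempty(\exec,\dlruleeqavoid)$ for each execution, the contrapositive gives $\{\exec\colon \lnot\grpromempty(\exec,\dlruleeqavoid)\} \subseteq \{\exec\colon \lnot\BoundedEq(\exec)\}$, whence $\Prob{\exec\colon \lnot\grpromempty(\exec,\dlruleeqavoid)} \leq \Prob{\exec\colon \lnot\BoundedEq(\exec)} \leq \negl(\kappa)$. This is precisely the download-rule hypothesis required by Theorem~\ref{thm:security}.

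With that hypothesis in hand, and under the same condition \SecurityConditionFull (for the same $\epsilon_1$), I would invoke Theorem~\ref{thm:security} with $\dlrule=\dlruleeqavoid$ to conclude that $\protocol$ with the equivocation avoidance rule is secure with $\Tlive=\Omega((\kappa+\ln\Th)^2)$, provided $\tau = \DeltaHeader + \frac{\bwslot}{\bwtime}$ and $\Tconf = \Omega((\kappa+\ln\Th)^2)$. The liveness and confirmation-latency claims then match the corollary verbatim, mirroring how Corollary~\ref{cor:security-freshest-block} is obtained for the freshest-block rule. The only remaining task is to check that the $\tau$ forced by the theorem is consistent with the stated $\tau = \Omega((\kappa+\ln\Th)^2)$.

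The main (and essentially only) obstacle is the asymptotic bookkeeping for $\bwslot$, which is exactly where this corollary diverges from its freshest-block counterpart. In Corollary~\ref{cor:security-freshest-block} one has $\bwslot$ linear in $\kappa+\ln\Th$, giving $\tau = \Omega(\kappa+\ln\Th)$; here an extra factor appears. Treating $\beta$, $\rho$, $\epsilon_2$, $\epsilon_3$ and $\constBoundedEq$ as fixed constants, $\Tb = \Omega(\kappa+\ln\Th)$, so the factor $1+\beta\rho\Tb(1+\epsilon_3)$ is itself $\Omega(\kappa+\ln\Th)$, and the product $\pa\Tb(1+\beta\rho\Tb(1+\epsilon_3))(1+\epsilon_2)$ is therefore $\Omega((\kappa+\ln\Th)^2)$. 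Hence $\bwslot = \Omega((\kappa+\ln\Th)^2)$ and $\tau = \DeltaHeader + \frac{\bwslot}{\bwtime} = \Omega((\kappa+\ln\Th)^2)$, as claimed. I would take care only to confirm that the constants hidden in the various $\Omega(\cdot)$ expressions can be fixed simultaneously---the $\epsilon_i$ enter both $\constBoundedEq$ and $\bwslot$---but no circularity arises, since $\constBoundedEq$ depends solely on the fixed protocol parameters and can be pinned down before $\bwslot$ is set.
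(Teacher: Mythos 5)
Your proposal is correct and follows the same route as the paper: the paper also obtains this corollary by combining Lemma~\ref{lem:bounded_equivocations} (which bounds $\Prob{\lnot\BoundedEq}$ for the prescribed $\bwslot$) with Lemma~\ref{prop:unique_download_longest_header_chain_rule} (which gives $\BoundedEq \implies \grpromempty$) to satisfy the download-rule hypothesis of Theorem~\ref{thm:security}, and then reads off $\tau = \DeltaHeader + \frac{\bwslot}{\bwtime} = \Omega((\kappa+\ln\Th)^2)$ from the quadratic growth of $\bwslot$ in $\Tb$. Your asymptotic bookkeeping and the observation that this is the point of divergence from Corollary~\ref{cor:security-freshest-block} are both accurate.
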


\subsection{Blocklisting}
\label{sec:blocklisting_rule}
Another common heuristic to deal with equivocations is `blocklisting' the proposer of equivocating blocks. Blocklisting can be implemented at the level of the download rule as follows: an honest node never downloads a chain whose tip is proposed by a party for which the node has seen two block headers with the same time slot (an equivocation). 
Blocklisting is a decision that is taken unilaterally by each honest node and may be taken at different points of time by different nodes.

Note that this is only a stricter version of the equivocation avoidance rule described in Section~\ref{sec:equivocation_avoidance_rule} because in any given time slot, a block that is rejected in the equivocation avoidance rule will also be rejected in the blocklisting rule. Moreover, any chain whose tip is proposed by an honest node will not be discarded under this rule. 
Therefore, for any execution $\exec$ and the blocklisting rule $\dlruleblock$
\begin{IEEEeqnarray}{c}
    \grpromempty(\exec,\dlruleeqavoid) \implies \grpromempty(\exec,\dlruleblock). \IEEEeqnarraynumspace
\end{IEEEeqnarray}
Therefore, security of PoS LC with the `blocklisting' rule is implied by security of PoS LC with `equivocation avoidance' rule and the same parameters.

\section{High Throughput Under Bandwidth Constraint}
\label{sec:parallel}

\begin{figure}
    \centering
    \begin{tikzpicture}[y=0.5cm]
        \footnotesize

        \begin{scope}[shift={(0,0)}]

            \node [align=left,anchor=west] at (6.2,0) {(a)};
            
            \node [align=right,anchor=east] at (1.8,0) {Passively following $\protocol$};
            \draw [draw=myParula07Red,fill=myParula07Red!50] (2,0.4) rectangle (2.3,-0.4);
            \draw [draw=myParula05Green,fill=myParula05Green!50] (2.3,0.4) rectangle (2.4,-0.4);
            \draw [black!20,line width=0.4pt] (2,0.4) rectangle (6,-0.4);

            \draw [|-|] (2,0.7) -- (2.4,0.7) node [pos=0,above,xshift=-4pt,yshift=1pt] {$\Omega(\frac{1}{\kappa})$};
            \draw [|-|] (2.3,1.1) -- (2.4,1.1) node [pos=1,above,xshift=4pt] {$\Omega(\frac{1}{\kappa})$};

        \end{scope}
        
        \begin{scope}[shift={(0,-1)}]

            \node [align=left,anchor=west] at (6.2,0) {(b)};

            \node [align=right,anchor=east] at (1.8,0) {Actively participating in $\protocol$};
            \draw [draw=myParula07Red,fill=myParula07Red!50] (2,0.4) rectangle (4.3,-0.4);
            \draw [draw=myParula05Green,fill=myParula05Green!50] (4.3,0.4) rectangle (4.4,-0.4);
            \draw [black!20,line width=0.4pt] (2,0.4) rectangle (6,-0.4);

            \draw [|-|] (2,-0.7) -- (4.4,-0.7) node [pos=0.4,below] {$\leq\left(\frac{1+p}{2}\right)$};
            \draw [|-|] (4.3,-1.1) -- (4.4,-1.1) node [pos=1,below] {$\Omega(\frac{1}{\kappa})$};

        \end{scope}
        
        \begin{scope}[shift={(0,-4)}]

            \node [align=left,anchor=west] at (6.2,0) {(c)};

            \node [align=right,anchor=east] at (1.8,0) {Parallel composition of $\protocol$};
            \draw [draw=myParula07Red,fill=myParula07Red!50] (2,0.4) rectangle (4.3,-0.4);
            \draw [draw=myParula05Green,fill=myParula05Green!50] (4.3,0.4) rectangle (4.4,-0.4);
            
            \draw [draw=myParula07Red,fill=myParula07Red!50] (4.4,0.4) rectangle (4.7,-0.4);
            \draw [draw=myParula05Green,fill=myParula05Green!50] (4.7,0.4) rectangle (4.8,-0.4);
            
            \draw [draw=myParula07Red,fill=myParula07Red!50] (4.8,0.4) rectangle (5.1,-0.4);
            \draw [draw=myParula05Green,fill=myParula05Green!50] (5.1,0.4) rectangle (5.2,-0.4);
            
            \draw [draw=myParula07Red,fill=myParula07Red!50] (5.6,0.4) rectangle (5.9,-0.4);
            \draw [draw=myParula05Green,fill=myParula05Green!50] (5.9,0.4) rectangle (6.0,-0.4);
            
            \node at (5.4,0) {...};
            
            \draw [|-|] (2,-0.7) -- (4.4,-0.7) node [midway,below,yshift=-1pt] {Primary chain};
            \draw [|-|] (4.4,-0.7) -- (6.0,-0.7) node [midway,below,yshift=-1pt] {Secondary chains};
            
            \draw [black!20,line width=0.4pt] (2,0.4) rectangle (6,-0.4);

        \end{scope}

    \end{tikzpicture}
    \vspace{-1.5em}%
    \caption{Worst-case throughput and bandwidth consumption, as a fraction of the total bandwidth. Green portions represent bandwidth consumption that contributes to throughput, while red portions represent bandwidth consumption that is caused by the adversary and may not contribute to throughput (\eg, empty/invalid blocks, spamming).}
    \label{fig:parallelchains-network-util}
\end{figure}
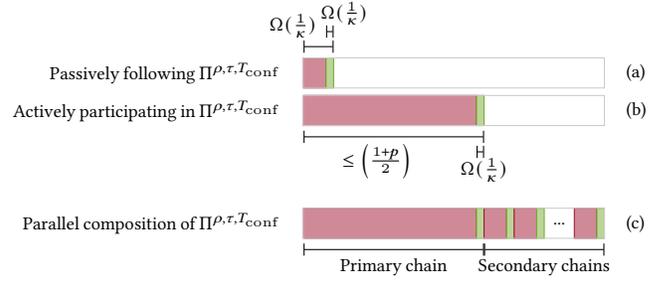

\DeclareRobustCommand{\svdots}{%
  \vcenter{%
    \offinterlineskip
    \hbox{.}
    \vskip0.25\normalbaselineskip
    \hbox{.}
    \vskip0.25\normalbaselineskip
    \hbox{.}%
  }%
}

In what follows, we use the freshest block download rule
as our running example, but the results carry over
analogously to other download rules analyzed 
using our unified framework, such as those in Section~\ref{sec:other_download_rules}.
From
Corollary~\ref{cor:security-freshest-block}, we parameterize $\protocol$ with the freshest block download rule with $\tau = \Omega(\kappa)$ for security, so the protocol gets slower as the security parameter increases.
A similar slowdown is also observed in the analysis in \cite{near-optimal-thruput}.
Thus, the throughput of $\protocol$ decreases with increasing security parameter. 
Indeed, we show in Section~\ref{sec:parallelchains-throughput} that the worst-case throughput of $\protocol$ is lower bounded by $\frac{2\pu-p}{\tau}=\frac{1}{\Omega(\kappa)}$.

The slow block production rate also means that \emph{passively following} the confirmed blocks
of a chain 
only requires downloading up to $\frac{p}{\tau} = \frac{1}{\Omega(\kappa)}$ blocks per second because the secure protocol $\protocol$ has already achieved consensus on these blocks (see Figure~\ref{fig:parallelchains-network-util}(a)).
In fact, the ratio between throughput and the bandwidth required to download the confirmed blocks is
the chain quality (fraction of honest blocks in the chain). This fraction, $\frac{2\pu-p}{p}>0$ is independent of the security parameter $\kappa$.
This suggests to invoke the idea of Parallel Chains \cite{parallel,near-optimal-thruput}:
fill the available bandwidth using multiple instances of the slow LC protocol in parallel and combine the transactions of all instances into a single ledger. By increasing the number of chains, one can compensate for the decreasing throughput of the individual chains 
as $\kappa$ is increased.

However, following the confirmed chains alone is not enough to achieve consensus on all these chains.
Note that the bandwidth consumption of a node \emph{actively participating} in $\protocol$ may be higher than what is required to download only the confirmed chain, due to spamming attacks.
By spending this additional bandwidth, the nodes actively  participating in $\protocol$ make the protocol secure, which is what allows other nodes to passively follow and download the confirmed chain with little bandwidth consumption.
However, even under spamming attacks, we show in Section~\ref{sec:parallelchains-throughput} that the worst-case bandwidth consumption is only a little more than half the available bandwidth $\bwtime$ (shown in Figure~\ref{fig:parallelchains-network-util}(b)).
This leaves nearly half the bandwidth available for a node participating in $\protocol$ to download the confirmed portions of other parallel chains.
This still allows us to increase the number of parallel chains to occupy the remaining bandwidth (shown in Figure~\ref{fig:parallelchains-network-util}(c)).
So, we modify the parallel chains construction from \cite{parallel,near-optimal-thruput} as described in the following section.

\begin{figure}
    \centering
    \begin{tikzpicture}[x=1.5cm,y=0.72cm,node distance=0.5cm]
        \small
            
        \draw [Latex-] (5.5,0) -- (0.25,0) node [above right] {\emph{Time}};
        \draw [] (3,0) ++(0,-0.2) -- ++(0,0.4) node [above] {$t-\Tconf$};
        \draw [] (5.1,0) ++(0,-0.2) -- ++(0,0.4) node [above] {$t$};
        
        \begin{scope}[blockchain,y=0.72cm,shift={(1,-1)}]

            \coordinate (G) at (0,0);
            \node [anchor=east,xshift=-0.5em] at (G) {$\protocol_1$};
            
            \node (p0) at (0.7629476504106385,0) [block] {};
            \node (p1) at (1.2946805533357357,0) [block] {};
            \node (p2) at (1.6882486139982926,0) [block] {};
            \node (p3) at (2.4032266319526814,0) [block] {};
            \node (p4) at (3.588227495694008,0) [block] {};
            
            \draw [link] (p0) -- (G);
            \draw [link] (p1) -- (p0);
            \draw [link] (p2) -- (p1);
            \draw [link] (p3) -- (p2);
            \draw [link] (p4) -- (p3);

        \end{scope}
        
        \begin{scope}[blockchain,y=0.72cm,shift={(1,-2)}]

            \coordinate (G) at (0,0);
            \node [anchor=east,xshift=-0.5em] at (G) {$\protocol_2$};
            
            \node (p0) at (1.0861971621686313,0) [block] {};
            \node (p1) at (1.734119126298118,0) [block] {};
            \node (p2) at (2.9927868283050643,0) [block,densely dotted] {};
            \node (p3) at (3.542892428120681,0) [block,densely dotted] {};
            \node (p4) at (3.882358246801729,0) [block,densely dotted] {};
            
            \draw [link] (p0) -- (G);
            \draw [link] (p1) -- (p0);
            \draw [link,densely dotted] (p2) -- (p1);
            \draw [link,densely dotted] (p3) -- (p2);
            \draw [link,densely dotted] (p4) -- (p3);

        \end{scope}
        
        \begin{scope}[blockchain,y=0.72cm,shift={(1,-3)}]

            \coordinate (G) at (0,0);
            \node [anchor=east,xshift=-0.5em] at (G) {$\protocol_3$};
            
            \node (p0) at (0.6074087323687713,0) [block] {};
            \node (p1) at (1.2380381767835142,0) [block] {};
            \node (p2) at (2.270611025448454,0) [block,densely dotted] {};
            \node (p3) at (3.467462861,0) [block,densely dotted] {};
            \node (p4) at (3.930965663355087,0) [block,densely dotted] {};
            
            \draw [link] (p0) -- (G);
            \draw [link] (p1) -- (p0);
            \draw [link,densely dotted] (p2) -- (p1);
            \draw [link,densely dotted] (p3) -- (p2);
            \draw [link,densely dotted] (p4) -- (p3);

        \end{scope}
        
        \begin{scope}[blockchain,y=0.72cm,shift={(1,-4)}]

            \coordinate (G) at (0,0);
            
            \node (pnil2) at (-0.5,0) {$\svdots$};
            \node (p0) at (1,0) {$\svdots$};
            \node (p2) at (3,0) {$\svdots$};

        \end{scope}
        
        \begin{scope}[blockchain,y=0.72cm,shift={(1,-5)}]

            \coordinate (G) at (0,0);
            \node [anchor=east,xshift=-0.5em] at (G) {$\protocol_m$};
            
            \node (p0) at (1.2612717801989293,0) [block] {};
            \node (p1) at (2.211242299416104,0) [block,densely dotted] {};
            \node (p2) at (2.6336282713016852,0) [block,densely dotted] {};
            \node (p3) at (3.2346486828890955,0) [block,densely dotted] {};
            \node (p4) at (3.749305129330001,0) [block,densely dotted] {};
            
            \draw [link] (p0) -- (G);
            \draw [link,densely dotted] (p1) -- (p0);
            \draw [link,densely dotted] (p2) -- (p1);
            \draw [link,densely dotted] (p3) -- (p2);
            \draw [link,densely dotted] (p4) -- (p3);

        \end{scope}
        
        \draw [dashed,myParula03Yellow] (0,-0.6) rectangle (5.25,-1.4);
        \draw [dashed,myParula03Yellow] (0,-1.6) rectangle (5.25,-5.4);
        \node [myParula03Yellow] at (1.5,-0.3) {\textsc{Primary Chain}};
        \node [myParula03Yellow] at (1.5,-5.7) {\textsc{Secondary Chains}};
        
        \draw [dotted,myParula04Purple,thick] (3,0) -- (3,-6);
        \draw [dotted,myParula04Purple,thick] (5.1,0) -- (5.1,-6);
        \draw [latex-latex,myParula04Purple] (0.25,-6.1) -- (3,-6.1) node [midway,below,align=center,anchor=north] {Follow secondary chains'\\confirmed blocks};
        \draw [latex-latex,myParula04Purple] (3,-6.1) -- (5.1,-6.1) node [midway,below,align=center,anchor=north] {Follow primary chain's\\freshest blocks};
    \end{tikzpicture}
    \vspace{-0.5em}%
    \caption{In the parallel chains construction using $\protocol$, each node is assigned
    one primary chain; the other $(m-1)$ chains are secondary.
    Nodes participate actively in their primary chain using,
    for example,
    the
    freshest block download rule, and follow their secondary chains
    passively by downloading confirmed blocks only.}
    \label{fig:parallelchains}
\end{figure}
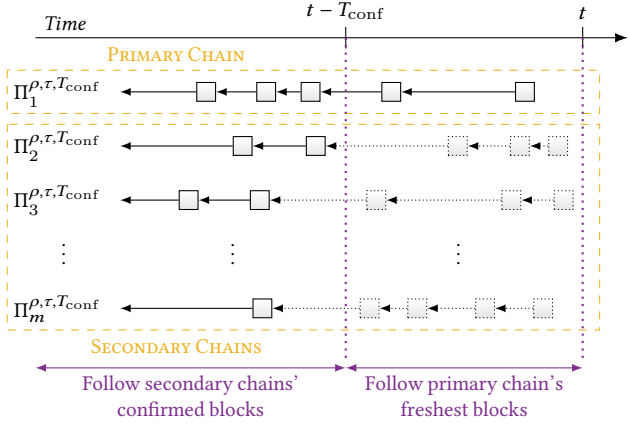

\subsection{Parallel Chains Construction}
\label{sec:parallelchains-parallel}

The protocol consists of $m$ parallel instances of $\protocol$ (see Figure~\ref{fig:parallelchains}).
For simplicity, assume that at genesis (and after the adversary has chosen
which nodes to corrupt), stakeholders are randomly partitioned into $m$
equally sized sets, and the nodes of each set get assigned
a particular instance of $\protocol$ as their \emph{primary chain}.
Nodes are responsible for maintaining consensus on their primary chain.
For this purpose, they 
download blocks as per a secure download rule 
and propose blocks on their primary chain as described in $\protocol$.
The remaining $(m-1)$ instances of $\protocol$ that are not a node's primary chain
are considered its \emph{secondary chains}.
Nodes do not participate actively in consensus building on their secondary chains,
but only download the confirmed blocks from those chains, as determined by the $\Tconf$-deep LC confirmation rule based on the block headers.
Transactions from the confirmed portion of all the chains are
first ordered
by their time slots and then by the index of the protocol instance they appear in, to then be merged into a single output ledger.
Moreover, every transaction can be included only in a single $\protocol$ instance
determined, \eg, based on the transaction's hash or sender address,
so as to avoid duplicating transactions across different $\protocol$ instances.
See Appendix~\ref{sec:appendix-parallel-chains-pseudocode} for pseudocode for the above parallel chains construction.

Each instance of $\protocol$ is secure if at most $\beta$ fraction of nodes for whom this instance is the primary chain are corrupt, and the parameters $\rho,\tau,\Tconf$ satisfy the constraints in Theorem~\ref{thm:security}.
Note that if the number of stakeholders assigned to each primary chain
is large, then 
the adversarial power in each instance of $\protocol$ is very likely
close to the
overall adversarial power, 
rendering the construction secure
against non-adaptive adversaries that corrupt at most $\beta$ fraction of the nodes.
See Appendix~\ref{sec:appendix-proofs-parallel} for a more detailed security analysis.

\subsection{Throughput and Bandwidth Consumption}%
\label{sec:parallelchains-throughput}

To quantify the throughput of $\protocol$, we first note that the longest chain in any honest node's view grows at least at the rate of uniquely successful slots, $\pu$ blocks per slot (Lemma~\ref{prop:chain_growth}).
Moreover, we can lower bound the chain quality, \ie, the fraction of blocks in the blockchain in any honest node's view, which are proposed by honest nodes. All blocks proposed by honest nodes will contain distinct and valid transactions. Therefore, the chain quality along with the chain growth rate give a lower bound on the throughput.

\begin{lemma}{(Throughput)}
\label{lem:throughput}
    There exists a constant $T_1$ such that for any honest node $i$ and time slots $t_1,t_2\geq t_1+T$ with $T\geq T_1$, $\dC_i(t_2) \setminus \dC_i(t_1)$ contains at least $\theta T(1-\epsilon_4)$ blocks proposed by honest nodes, with probability at least $1-\exp(-\alpha_4 T)$, where $\theta = 2\pu-p$.
\end{lemma}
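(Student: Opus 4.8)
The plan is to read the claim as a combination of a chain-growth bound and a chain-quality bound, both driven by the uniquely-successful versus adversarial slot statistics, and to exhibit the honest blocks concretely as \emph{pivots} in the sense of \cite{sleepy,dem20}. Recall that $\theta = 2\pu - p = \pu - \pa$, which is strictly positive exactly under the security condition $\SecurityConditionFull$ (which forces $\pu > \tfrac12 p$). Concretely I would (i) identify a set of honest blocks guaranteed to lie in $\dC_i(t_2)$ but not in $\dC_i(t_1)$, and (ii) show this set has size at least $\theta T(1-\epsilon_4)$ except with probability $\exp(-\alpha_4 T)$ by a concentration argument on the i.i.d. slot variables $\{H_t,A_t\}$.

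First I would invoke the pivot machinery already underlying Theorem~\ref{thm:security}. In our setting, the assumption $\grpromempty$ together with Lemma~\ref{prop:chain_growth} guarantees that each uniquely successful block is downloaded by all honest nodes within its slot and strictly increases $L_{\min}$, so the classical random-walk characterization of pivots transfers: a uniquely successful slot $r$ is a \emph{pivot} when $\Lint{s_1,s_2} > \Nint{s_1,s_2}$ for all $s_1 \le r < s_2$, and its (honest) block then lies in the longest downloaded chain of every honest node at every slot from $r$ onward. Consequently each pivot slot $r \in (t_1,t_2]$ contributes an honest block $b_r$ that lies in $\dC_i(t_2)$ (since $t_2 \ge r$) and does not lie in $\dC_i(t_1)$ (since $b_r$ carries time slot $r > t_1$, and a chain downloaded by the end of slot $t_1$ contains only blocks with time slots at most $t_1$). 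Hence the number of honest blocks in $\dC_i(t_2)\setminus\dC_i(t_1)$ is at least the number of pivots in $(t_1,t_2]$. This realizes the ``chain growth times chain quality'' intuition: pivots appear at the net drift rate $\theta = \pu-\pa$, which is precisely the chain-growth rate $\pu$ of Lemma~\ref{prop:chain_growth} times a chain-quality fraction $\theta/\pu$.

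It then remains to lower bound the pivot count over a window of length at least $T$. I would work with the walk $S_n = \sum_{t \le n}\bigl(\ind{\Unique{t}} - \ind{\Adverse{t}}\bigr)$, whose i.i.d. increments have mean $\theta$. By the ballot/ladder argument of \cite{dem20,sleepy}, the number of pivots in $(t_1,t_2]$ is at least $\Lint{t_1,t_2} - \Nint{t_1,t_2}$ up to a correction controlled by the maximal downward excursion of $S$ inside the window. Now $\Lint{t_1,t_2} - \Nint{t_1,t_2} = S_{t_2}-S_{t_1}$ is a sum of at least $T$ i.i.d. terms with mean $\theta T$, so a Chernoff bound gives $S_{t_2}-S_{t_1} \ge \theta T\bigl(1-\tfrac{\epsilon_4}{2}\bigr)$ except with probability $\exp(-\Theta(T))$; and because the drift $\theta$ is strictly positive, the probability that the within-window drawdown exceeds $\tfrac{\epsilon_4}{2}\theta T$ is also $\exp(-\Theta(T))$. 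Taking $T_1$ large enough and $\alpha_4$ to be the smaller of the two exponents yields at least $\theta T(1-\epsilon_4)$ pivots, hence at least $\theta T(1-\epsilon_4)$ honest blocks in $\dC_i(t_2)\setminus\dC_i(t_1)$, with probability $1-\exp(-\alpha_4 T)$.

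The main obstacle is the second step's transfer of pivot \emph{stabilization} into the bandwidth-constrained model: I must verify that a pivot block is not merely produced but actually downloaded and permanently retained as part of $\dC_i$ by every honest node at all later slots, rather than being displaced by withheld or invalid adversarial chains competing for download bandwidth. This is exactly where $\grpromempty$ and part~(\ref{item:inc_height}) of Lemma~\ref{prop:chain_growth} are indispensable, and where the argument departs from the classical bounded-delay analysis; the cleanest route is to reuse the invariant established in the proof of Theorem~\ref{thm:security} that pivot blocks are permanent in every honest downloaded chain, including the handling of the two-sided (global) dominance at the window boundaries via the same excursion estimate. The remaining quantitative pieces — the two Chernoff estimates and the excursion bound — are routine.
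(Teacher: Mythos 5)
There is a genuine gap in your second step, the pivot count. Your reduction of the lemma to ``number of honest blocks in $\dC_i(t_2)\setminus\dC_i(t_1)$ is at least the number of unique pivots in $(t_1,t_2]$'' is sound (pivot blocks are permanent in every honest downloaded chain by Lemma~\ref{lem:fresh_block_base_case}, and their time slots exceed $t_1$), but the claim that the number of pivots is at least $\Lint{t_1,t_2}-\Nint{t_1,t_2}$ minus a drawdown-sized correction is false. A uniquely successful slot $t$ is a pivot exactly when $S_{t-1}$ is a running maximum of the past of your walk $S$ \emph{and} $S_t$ is a running minimum of its future; by the gambler's-ruin computation each of these one-sided events has probability $1-\pa/\pu=\theta/\pu$, so the long-run density of unique pivots is $\pu\cdot(\theta/\pu)^2=\theta^2/\pu$, strictly smaller than $\theta$ whenever $\pa>0$. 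Equivalently, in the level-crossing picture every adversarial slot can destroy the pivot property of a level it down-crosses, so the correction to $\Lint{t_1,t_2}-\Nint{t_1,t_2}$ is of order $\Nint{t_1,t_2}=\Theta(\pa T)$, linear in $T$, not the $O(\log T)$ maximal drawdown. Your route therefore yields the lemma only with the weaker constant $\theta^2/\pu$ in place of $\theta=2\pu-p$.

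The paper avoids pivots entirely here. It applies Lemma~\ref{prop:chain_growth} to get chain growth of at least $\pu T(1-\epsilon)$ new blocks, observes that these $N$ consecutive blocks of a valid chain have strictly increasing time slots and hence span at most $T$ slots, and that every adversarially proposed block must occupy an adversarial slot, of which there are at most $\pa T(1+\epsilon')$ with high probability; subtracting gives $(\pu-\pa)T(1-\epsilon_4)$ honest blocks. The point is that the lemma counts \emph{all} honest blocks present in node $i$'s chain at $t_2$, including the many honest blocks that are not pivots (they need not be common to all nodes or permanent); restricting to the stabilized pivot blocks loses a constant factor. To salvage your approach you would have to additionally credit the non-pivot honest blocks lying between consecutive pivots, which essentially collapses back to the chain-growth-plus-chain-quality argument the paper uses.
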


From Lemma~\ref{lem:throughput}, the throughput of each chain is at least $\TP_1 = \frac{\theta}{\tau}$ blocks per second.\footnote{For simplicity, we consider a constant number of transactions in each block. Hence, this directly translates to throughput in transactions per second.} Note that this lower bound holds under the worst-case adversary strategy.

Next, we calculate the bandwidth consumption of passively following the confirmed blocks of a secondary chain.
Due to the security of $\protocol$
run by the nodes for whom the corresponding chain is primary,
the confirmed chain contains only valid available blocks and can be downloaded by spending little bandwidth without any interference from spamming blocks.

\begin{lemma}{(Passive Bandwidth Consumption)}
\label{lem:passive_bandwidth}
    There exists a constant $T_2$ such that for any honest nodes $i,i'$ and time slots $t_1, t_2\geq t_1+T$ such that $T\geq T_2$, $\LOG{i'}{t_2} \setminus \LOG{i}{t_1}$ contains at most $\bwpassive T(1+\epsilon_5)$ blocks, with probability at least $1-\exp(-\alpha_5 T)$, where $\bwpassive = p$.
\end{lemma}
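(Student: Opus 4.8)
The plan is to reduce the count of newly downloaded confirmed blocks to a count of \emph{successful} slots inside a suitable time-slot window, and then to control that count by a concentration bound; the only delicate ingredient is a freshness (liveness) estimate that pins down the lower end of the window. First I would invoke safety of $\protocol$ (Theorem~\ref{thm:security}, or Corollary~\ref{cor:security-freshest-block} for the running example) to conclude that, except on a $\negl(\kappa)$-probability event, $\LOG{i}{t_1}\preceq\LOG{i'}{t_2}$. Then $\LOG{i'}{t_2}\setminus\LOG{i}{t_1}$ is exactly the suffix of the single valid chain $\LOG{i'}{t_2}$ lying beyond the shared prefix $\LOG{i}{t_1}$, so its cardinality is $\len{\LOG{i'}{t_2}}-\len{\LOG{i}{t_1}}$. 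Because block time slots are strictly increasing along any valid chain, and every block occupies a slot $t$ with $A_t+H_t>0$, the number of these blocks is at most the number of successful slots in the time-slot window they occupy. The confirmation rule $\LOG{i'}{t_2}=\dC_{i'}(t_2)\trunc{\Tconf}$ forces every such block to have time slot at most $t_2-\Tconf$, fixing the upper end of the window.

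The crux is the lower end, i.e.\ a bound $\ell_1\ge t_1-\Tconf-g$ on the time slot $\ell_1$ of the last block of $\LOG{i}{t_1}$, for a gap $g$ small compared to $t_2-t_1$. I would obtain this from chain growth and freshness: by Lemma~\ref{prop:chain_growth}, the minimum downloaded-chain length increases by at least $\Lint{r,t}$ over any interval $(r,t]$ on which $\grpromempty$ holds, so uniquely successful slots keep pushing the chain forward; consequently a uniquely successful slot occurring in $[t_1-\Tconf-g,\,t_1-\Tconf]$ contributes a block that is on every honest node's chain (a pivot, stable by the security argument) and is already confirmed at $(i,t_1)$ since its time slot is at most $t_1-\Tconf$. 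It thus suffices to guarantee at least one uniquely successful slot in every window of $g$ slots; since such slots arrive at rate $\pu$, a Chernoff bound gives this for all windows simultaneously with probability $1-\poly(\Th)\exp(-\Omega(\pu g))$ once $g=\Omega(\kappa+\ln\Th)$. This freshness estimate, and its reliance on the stability (pivot) machinery inherited from \cite{sleepy,dem20}, is the step I expect to be the main obstacle, since it is where genuine liveness rather than mere counting is required.

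Given the window $(\ell_1,\,t_2-\Tconf]$ of length at most $(t_2-t_1)+g$, I would replace the random endpoint $\ell_1$ by its deterministic lower bound and bound the block count by $\Bint{t_1-\Tconf-g,\,t_2-\Tconf}$, a sum of i.i.d.\ Bernoulli$(p)$ indicators over at most $(t_2-t_1)+g$ slots, i.e.\ at most $T+g$ slots writing $T=t_2-t_1$. A Chernoff bound then yields $\Bint{t_1-\Tconf-g,\,t_2-\Tconf}\le p(T+g)(1+\epsilon')\le pT(1+\epsilon_5)$, where $\epsilon'$ is chosen so that the additive gap $g$ is absorbed into $\epsilon_5$ once $T\ge T_2$ for a large enough $T_2=\Omega\!\left((\kappa+\ln\Th)/\epsilon_5\right)$; the failure probability is $\exp(-\Omega(\epsilon_5^2 p T))=\exp(-\alpha_5 T)$. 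Finally, I would take a union bound over the at most $\Th^2$ pairs $(t_1,t_2)$ and over the freshness events, all absorbed into $\alpha_5$ and the choice $T_2=\Omega(\kappa+\ln\Th)$, and combine with the safety event to obtain the claimed bound with probability $1-\exp(-\alpha_5 T)$, recalling $\bwpassive=p$.
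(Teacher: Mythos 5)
Your proposal follows essentially the same route as the paper: invoke safety to reduce $\LOG{i'}{t_2}\setminus\LOG{i}{t_1}$ to a chain suffix, use strictly increasing time slots to bound its size by the number of successful slots in a window whose upper end is $t_2-\Tconf$, pin down the lower end via a stable confirmed block shortly before $t_1-\Tconf$, and finish with a Chernoff bound on $\Bint{\cdot,\cdot}$ absorbing the gap into $\epsilon_5$ for $T$ large enough. The one step that would fail as written is the freshness estimate: a merely \emph{uniquely successful} slot in $[t_1-\Tconf-g,\,t_1-\Tconf]$ does not by itself contribute a block that survives on every honest node's chain (the adversary can orphan it), so a bare Chernoff bound on the arrival of uniquely successful slots at rate $\pu$ is not sufficient. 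You need a slot satisfying $\TPivot{t'}\land\Unique{t'}$, whose block is stable by Lemma~\ref{lem:fresh_block_base_case}; the guarantee that such a slot exists in every window is exactly $\FrequentPivots$ (Lemma~\ref{lem:frequent_pivots_prob_bound}) with window $\Tp=\Tconf=\Omega((\kappa+\ln\Th)^2)$, which is how the paper obtains the bound $t_1'\geq t_1-2\Tconf$ on the last block of $\LOG{i}{t_1}$. With that substitution (gap $g=\Tconf$ rather than $\Omega(\kappa+\ln\Th)$, still absorbed into $\epsilon_5$ for sufficiently large $T>\Tconf$), your argument matches the paper's proof.
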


Finally, we analyze the worst-case bandwidth consumption of active nodes in $\protocol$. 
As per the freshest block download rule (see Algorithm~\ref{algo:freshest-block-rule},
lines~\ref{loc:pseudocode-pos-lc-empty-chain}ff.), once all blocks proposed in the most recent non-empty time slot have been downloaded, the downloading node stays idle (because then $\calC=\bot$ in Algorithm~\ref{algo:pseudocode-pos-lc}, line~\ref{loc:pseudocode-pos-lc-empty-chain}). Since in every uniquely successful slot, each node downloads the freshest block within one slot (Lemma~\ref{lem:loner_download}), the node thereafter remains idle until the next block proposal. This gives a simple lower bound on the worst-case fraction of time a node's bandwidth consumption is idle.
(See Figure~\ref{fig:bandwidth-trace} for a matching observation in our experiments.)

\begin{lemma}{(Active Bandwidth Consumption)}
\label{lem:active_bandwidth}
    There exists a constant $T_3$ such that for any honest node $i$ and time slots $t_1,t_2\geq t_1+T$ with $T\geq T_3$, node $i$ does not download any blocks for at least $\idleslots T \tau (1-\epsilon_6)$ time during the interval of time slots $(t_1,t_2]$, with probability at least $1-\exp(-\alpha_6 T)$, where $\idleslots = \frac{\pu(1-p)}{p} \geq \frac{1-p}{2}$.
\end{lemma}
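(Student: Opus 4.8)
The plan is to establish a deterministic \emph{idle characterization} under the overwhelming-probability event $\grpromempty(\exec,\dlrulefresh)$, and then reduce the claim to a concentration statement about the i.i.d.\ slot-type sequence $\{H_t,A_t\}$. First I would condition on $\grpromempty$, which holds with overwhelming probability by Lemmas~\ref{lem:loner_download} and~\ref{lem:num_blocks_bound}. Fix a uniquely successful slot $t$ and let $t'>t$ be the next successful slot, so that $A_s=H_s=0$ for all $t<s<t'$. By $\grpromempty$, the unique honest block $b_t$ and its prefix are fully downloaded by node $i$ before the end of slot $t$ (exactly as in the proof of Lemma~\ref{lem:loner_download}). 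Since the slots strictly between $t$ and $t'$ are empty, validity in $\Ftree$ forbids any block, honest or adversarial, from carrying a time later than $t$; hence throughout every such slot $b_t$ is the unique freshest block in node $i$'s header tree, its chain is valid and contains no $\NIL$ entry, so $\operatorname{downloadFreshestBlock}$ returns $\bot$ (Algorithm~\ref{algo:freshest-block-rule}) and node $i$ downloads nothing. Therefore node $i$ is idle during each of the empty slots $t+1,\dots,t'-1$, contributing $\tau$ seconds of idle time each.

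Next I would turn this into a count. Call an empty slot $s$ \emph{good} if the last non-empty slot before $s$ is uniquely successful; by the above, every good empty slot is idle, so the idle time in $(t_1,t_2]$ is at least $\tau\sum_{s\in(t_1,t_2]}\ind{s\text{ good}}$. To obtain an exponential tail despite the unbounded-range dependence introduced by ``last non-empty slot'', I truncate: fix a constant run-length cutoff $L$ and set $\tilde I_s=1$ iff slot $s$ is empty and there is some $1\le j\le L$ with slots $s-j+1,\dots,s-1$ empty and slot $s-j$ uniquely successful. Then $\tilde I_s\le\ind{s\text{ good}}$, and because the events for distinct $j$ pin down distinct positions of the last non-empty slot and involve disjoint slots, independence gives $\Exp{\tilde I_s}=(1-p)\,\pu\sum_{j=1}^{L}(1-p)^{j-1}=\idleslots\bigl(1-(1-p)^L\bigr)$, where $\idleslots=\frac{\pu(1-p)}{p}$ and the geometric sum $\sum_{j\ge1}(1-p)^{j-1}=1/p$ — the expected length of the run of empty slots following a uniquely successful slot — is precisely the source of the $1/p$ factor. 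Choosing $L$ a constant with $(1-p)^L\le\epsilon_6/2$ then yields $\Exp{\sum_s\tilde I_s}\ge \idleslots T(1-\epsilon_6/2)$ up to $O(L)$ boundary terms near $t_1,t_2$.

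Finally I would apply a bounded-difference argument. Each $\tilde I_s$ depends only on slots $s-L,\dots,s$, so changing the type of a single slot alters at most $L+1$ of the summands, each by at most $1$; McDiarmid's inequality gives $\Prob{\sum_s\tilde I_s\le\Exp{\sum_s\tilde I_s}-\tfrac{\epsilon_6}{2}\idleslots T}\le\exp\!\bigl(-\tfrac{(\epsilon_6\idleslots)^2}{2(L+1)^2}\,T\bigr)$, which is $\exp(-\alpha_6 T)$ for a constant $\alpha_6>0$ since $L,\idleslots,\epsilon_6$ are all constants. Absorbing the boundary terms for $T\ge T_3$ gives $\sum_s\tilde I_s\ge\idleslots T(1-\epsilon_6)$ with this probability; multiplying by $\tau$ and taking a union bound with the event $\grpromempty$ completes the proof, and the auxiliary bound $\idleslots\ge\frac{1-p}{2}$ follows from $\pu\ge p/2$, i.e.\ from \SecurityCondition. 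I expect the concentration step to be the main obstacle: the natural idle-slot count has unbounded-range dependence through the most-recent-non-empty-slot, so a naive bounded-difference or Chernoff argument fails. The truncation to runs of constant length $L$ is the key device, as it simultaneously preserves the correct constant $\idleslots$ (up to $\epsilon_6$) and restores the $(L+1)$-local dependence needed for an exponential-in-$T$ tail.
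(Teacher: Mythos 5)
Your proof is correct, and it identifies exactly the same set of idle slots as the paper: empty slots whose most recent non-empty predecessor is a uniquely successful slot (under $\grpromempty$, the unique block from that slot is already fully downloaded, no fresher block can exist, so Algorithm~\ref{algo:freshest-block-rule} returns $\bot$). Where you diverge is in how the count of such slots is analyzed. The paper frames the slot types as a three-state Markov chain (U/I/A), reads off $\idleslots=\pi_{\mathrm{I}}=\frac{\pu(1-p)}{p}$ from the stationary distribution, and then simply invokes ``a Chernoff bound'' on the number of I-state slots. You instead compute the same constant directly via the geometric sum over the run length since the last non-empty slot, and---crucially---you confront the fact that the indicator ``slot $s$ is in state I'' depends on the entire past through the location of the last non-empty slot, so these indicators are \emph{not} independent and a vanilla Chernoff bound does not literally apply. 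Your truncation to runs of constant length $L$ (losing only a $(1-p)^L\le\epsilon_6/2$ fraction of the expectation) followed by McDiarmid's inequality with bounded differences $L+1$ is a clean and fully rigorous way to get the $\exp(-\alpha_6 T)$ tail; the paper's route can also be made rigorous via a concentration inequality for finite-state Markov chains, but as written it elides this step. In short: same idle-slot characterization and same constant $\idleslots$, but your concentration argument is genuinely different and somewhat more careful than the one sketched in Appendix~\ref{sec:app-tp-bw-util-proofs}.
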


Lemmas~\ref{lem:throughput}, \ref{lem:passive_bandwidth} and \ref{lem:active_bandwidth} are proved in Appendix~\ref{sec:app-tp-bw-util-proofs}.
Lemma~\ref{lem:active_bandwidth} implies that a bandwidth of at least $\idleslots \cdot \bwtime$ remains unutilized by each node's primary chain. From Lemma~\ref{lem:passive_bandwidth}, each node needs to download on average $\bwpassive$ blocks per slot, or $\frac{\bwpassive}{\tau}$ blocks per second, to follow the confirmed blocks of one of the secondary chains. This allows each node to follow $m-1 = \frac{\idleslots}{\bwpassive} \bwtime\tau$ number of secondary chains.
Therefore the $m$ parallel chains have an aggregate throughput of
\begin{IEEEeqnarray}{rCl}
    \TP_m = m\,\TP_1 &=& \left(1 + \frac{\idleslots}{\bwpassive} C\tau\right) \frac{\theta}{\tau}   %
    \geq \frac{(1-p)(2\pu-p)}{2p} \, C   \nonumber\\
    &=& \frac{(1-p)\epsilon_1}{2} \, C \text{ blocks per second}
\end{IEEEeqnarray}
using $\SecurityCondition$ from Theorem~\ref{thm:security}. 
Therefore,
the aggregate throughput of the parallel chains remains within a constant fraction of the optimal throughput which is the 
bandwidth of $\bwtime$ blocks per second.
This is true
even 
if the 
number of secondary chains
is parameterized
so that the protocol
produces
an average load of only a certain
fraction of the 
bandwidth
left over by the primary chain,
so as to 
bound queuing delays due to fluctuations in bandwidth utilization.

Notice that the throughput and passive bandwidth consumption of the protocol do not change with the download rule. 
With 
`equivocation avoidance' and `blocklisting',
by 
doubling $\bwslot$ such that $\grpromemptycustom{\bwslot/2}$ holds (thereby roughly doubling the time slot duration), and ensuring that honest nodes do not download more than $\bwslot/2$ blocks in any slot, the active bandwidth utilization is explicitly bounded by half the available bandwidth. Thus, the parallel chains construction with these download rules also behaves similarly.

The worst-case throughput of a single chain and that of the parallel construction are limited by the chain quality factor $\epsilon_1 = \frac{2\pu-p}{p}$ due to the possibility of selfish mining attacks \cite{selfishmining}. Using the Conflux inclusion rule from \cite{conflux} (which is also employed in \cite{near-optimal-thruput}), this factor can be improved to $\frac{\pu}{p}$ 
(which does not vanish as we push the resilience $\beta$ closer to $1/2$).
In this rule, each block includes pointers to blocks that are not in its prefix, in order to include them in the ledger.
To adapt this rule to bandwidth constrained networks,
we modified it to ensure
that only one block from each block production opportunity is pointed to and the number of pointers in each block is limited yet enough to include honest blocks. The details of this construction are in Appendix~\ref{sec:appendix-conflux}.

Finally, in a comparison with \cite{near-optimal-thruput}, both works show a parallel chains construction that achieves throughput up to a constant fraction of the network capacity. However, our work proves this under worst-case adversarial strategies (including, \emph{inter alia}, equivocation-based spamming), while \cite{near-optimal-thruput} proves security only 
for adversaries that do not aggravate network congestion
so much that a delay upper bound is violated.
On the other hand, the security of our construction requires static corruption and honest majority among nodes in each chain (as each nodes performs consensus on one chain), whereas \cite{near-optimal-thruput} works under a global honest majority assumption (as each node participates in all chains).

\section*{Acknowledgment}
We
thank Dan Boneh and Ertem Nusret Tas for fruitful discussions.
JN is supported by the Protocol Labs PhD Fellowship,
a gift from Ethereum Foundation,
and the Reed-Hodgson Stanford Graduate Fellowship.

\bibliographystyle{ACM-Reference-Format}
\bibliography{references}

\appendix
\section{Reference Algorithms}

\subsection{Helper Functions for Algorithms~\ref{algo:pseudocode-pos-lc},~\ref{algo:pseudocode-Ftree},~\ref{algo:longest-header-chain-equivoc-avoid-rule}}
\label{sec:appendix-helperfunctions-pseudocode}

\begin{itemize}
    \item $\operatorname{Hash}(\txs)$:
    
        \noindent Cryptographic hash function to produce
        a binding commitment to $\txs$
        (modelled as a random oracle)
        
    \item $\Chain' \preceq \Chain$:
    
        \noindent Relation describing that $\Chain'$ is a prefix of $\Chain$
        
    \item $\Chain\|\Chain'$:
    
        \noindent Concatenation of $\Chain$ and $\Chain'$
        
    \item $\operatorname{prefixChainsOf}(\Chain)$:
    
        \noindent Set of prefixes of $\Chain$
    
    \item $\operatorname{longestChain}(\Tree)$:
    
        \noindent Determine longest chain among set $\Tree$ of chains. Ties are broken by the adversary.
        
    \item $\Chain\trunc{\Tconf}$:
    
        \noindent Prefix of chain $\Chain$ consisting of all blocks with time slots up to $\Tconf$ less than the current time slot
        
    \item $\operatorname{txsAreSemanticallyValidWrtPrefixesOf}(\Chain, \txs)$:
    
        \noindent Verifies for each transaction in $\txs$ that
        the transaction is semantically valid
        with respect to
        and properly authorized by
        the owner of the underlying assets
        as determined by
        the transaction's prefix
        in the ledger resulting from appending
        $\txs$ to the transactions as ordered
        in $\Chain$
        (assumes that content of all blocks in $\Chain$ is known to the node)
        
    \item $\operatorname{newBlock}(\mathsf{time}\colon t,\mathsf{node}\colon P,\mathsf{txsHash}\colon \operatorname{Hash}(\txs))$:
    
        \noindent Produces a new block header with
        the given parameters
    
    \item $\operatorname{oneLeafPerProductionOpportunity}(\HeaderTree)$:
    
        \noindent Filter header tree $\HeaderTree$ to keep at most one leaf per block production opportunity, \ie, per $(\mathsf{node},\mathsf{time})$ pair (equivocation avoidance; ties broken adversarially)
\end{itemize}

\subsection{Environment $\Env$}
\label{sec:appendix-environment}

The environment $\Env$ initializes $N$ nodes and lets $\Adv$ corrupt up to $\beta N$ nodes at the beginning of the execution. Corrupted nodes are controlled by the adversary. Honest nodes run $\protocol$.
The environment maintains a mapping $\TxsMap$ from block headers to the block content (transactions). This mapping is referred to as the `idealized repository' in Section~\ref{sec:modelprotocol}.
$\Env$ also maintains for each node a queue of pending block headers
to be delivered after a delay determined by the adversary (at most $\DeltaHeader$).
Honest nodes and the adversary interact with $\Env$ via the following functions:
\begin{itemize}
    \item $\Env.\Call{broadcastHeaderChain}{\Chain}$:
    
        \noindent If called by an honest node, $\Env$ sends header chain $\Chain$ to $\Adv$. 
        Then, for each honest node $P$, on receiving $\Call{deliver}{\Chain,P}$ from $\Adv$, or when $\DeltaHeader$ time has passed since $\Chain$
        was handed to $\Env$ for broadcasting, $\Env$ triggers $P.\Call{receivedHeaderChain}{\Chain}$.
        
    \item $\Env.\Call{uploadContent}{\Chain, \txs}$:
    
        \noindent $\Env$ stores a mapping from the header chain $\Chain$ to the content $\txs$ of its last block by setting $\TxsMap[\Chain] = \txs$.
        $\Env$ only stores the content $\txs$
        if $\mathrm{Hash}(\txs) = \Chain.\mathsf{txsHash}$.
        
    \item $\Env.\Call{requestContent}{\Chain}$:
    
        \noindent If $\TxsMap[\Chain]$ is set, then let $\txs = \TxsMap[\Chain]$ (if not, $\Env$ ignores the request).
        On receiving this call from an honest node $P$ in a time slot $t$,
        if $\Env$ has triggered $P.\Call{receivedContent}{.}$ less than $\bwslot$ times in slot $t$, then $\Env$ triggers  $P.\Call{receivedContent}{\Chain, \txs}$.
        On receiving this call from $\Adv$, $\Env$ sends $(\Chain, \txs)$ to $\Adv$.
        
    \item $\Env.\Call{receivePendingTxsSemanticallyValidWrt}{\Chain}$:
    
        \noindent $\Env$ generates a set of pending
        transactions that are not included in
        the block contents of
        but semantically valid (see Appendix~\ref{sec:appendix-helperfunctions-pseudocode}) with respect to $\Chain$, and returns them.
        
    \item $\Env.\Call{outputLedger}{\Chain}$:
    
        \noindent On receiving this call from a node $P$,
        $\Env$ records $\Chain$ as $P$'s ledger
        to be externalized. This constitutes $\LOG{i}{t}$,
        for which consistency and liveness are
        required for a secure consensus protocol.
\end{itemize}

\section{Supplemental Experimental Material}

\subsection{Experimental Setup Details for Figure~\ref{fig:networkdelayblockrate}}
\label{sec:appendix-experiment-details-fig1}

For this experiment, we start 17 Cardano nodes in 17 AWS data centers across the globe and connect them into a fully-connected graph. We point out that the Cardano block fetch logic includes an optimization to only download blocks that have larger heights than the locally-adopted longest chain.
As a result, a node may not eventually download every block whose header it sees.
To demonstrate network congestion in the absence of a suitable download rule, 
we modify the code to disable this optimization and ensure that every node eventually downloads all blocks.
To show congestion, we configure a variable number ($N$) of nodes to all mine blocks at the beginning of the same slot, and report the time for all 17 nodes to download all $N$ blocks.

\subsection{Chain Growth with Larger Blocks}
\label{sec:vary-block-size}
In this experiment, we look at the robustness of the `download towards the freshest block' rule when we increase the block size. The topology is the same as previous experiments, but the in-flight cap is fixed to $1$. Figure~\ref{fig:size-grow} shows that this rule maintains the chain growth rate, 
despite the increasing network load.

\begin{figure}[t]
    \centering
    \begin{tikzpicture}
        \small
        \begin{axis}[
            mysimpleplot,
            xlabel={Block size [$\mathrm{KB}$]},
            ylabel={Honest chain\\growth rate [$\mathrm{s}^{-1}$]},
            legend columns=2,
            xmin=0.5, xmax=16.5,
            ymin=0, ymax=0.05,
            height=0.4\linewidth,
            width=\linewidth,
            yticklabel style={
                /pgf/number format/fixed,
                /pgf/number format/precision=2
            },
            scaled y ticks=false,
        ]
        
        \addplot [myparula11] table [x=kb,y=rate] {experiments/larger_blocks/growth.txt};
        
        \end{axis}
    \end{tikzpicture}%
    \caption{Honest chain growth rate under spamming attack when using different block sizes and the download freshest block rule. Despite the increasing network load (through the increasing block size),
    there is no performance deterioration when
    downloading the freshest block.}
    \label{fig:size-grow}
\end{figure}
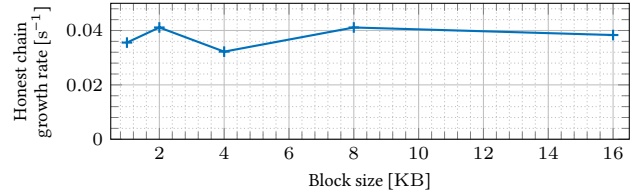

\section{Parallel Chains Pseudocode}
\label{sec:appendix-parallel-chains-pseudocode}

Algorithm~\ref{algo:pseudocode-parallel-pos-lc} gives pseudocode for the parallel chains construction using our PoS LC protocol parameterized with a download rule. Note the following main differences with respect to Algorithm~\ref{algo:pseudocode-pos-lc}.
Upon initialization, each node is assigned a primary protocol instance index by the functionality $\Fparallel$.
Each node maintains a separate header tree and downloaded chain for each index.
While scheduling content downloads, primary instance blocks get the highest priority, with the same download rule that parameterizes $\protocol$. If there are no blocks left to be downloaded in the primary instance, the node picks among the confirmed longest chains of all secondary instances, the block with the oldest time slot with unknown content.
Downloading the block with the oldest time slot allows the node to construct the ledger quickly, although this priority rule does not play a critical role in the consensus security. In line~\ref{loc:pseudocode-parallel-pos-lc-construct-ledger}, the ledger is constructed by ordering the confirmed blocks of all the instances first
by their time slots and then by the index of the protocol instance they appear in.
The functionality $\Fparallel$ (Algorithm~\ref{algo:pseudocode-Ftree-parallel}) assigns the primary chain index for each node by uniformly and randomly partitioning the set of nodes across the $m$ chains.
This can be approximated in instantiations by each node selecting as its primary chain index a hash of its public key modulo $m$.
Rather than by the transaction hash, another way to shard transactions is by distributing all accounts uniformly among the protocol instances, and requiring transactions in a particular instance to have both the source and destination accounts in the same instance. Transactions with the source and destination accounts in different instances would be split into two transactions, one which burns the funds in the source account and subsequently another one which recreates funds in the destination account (while showing a receipt of burn in the source chain), each transaction in its respective protocol instance (see \cite{powsidechains,possidechains,proofofburn} and references therein for background on this technique).
Such a solution allows validation of each transaction with respect to its prefix within the same instance at the time of block production (Algorithm~\ref{algo:pseudocode-parallel-pos-lc} line~\ref{loc:pseudocode-parallel-pos-lc-mainloop}), a property sometimes referred to as \emph{predictable validity}.
An important consequence of this is that there is no ``ledger sanitization'' procedure required while constructing the ledger out of the confirmed blocks. In other words, transactions once added to the chain cannot be invalidated in the ledger because they were validated with respect to their past state while proposing and forwarding the block.
Thus, every transaction contributes to throughput.

\begin{algorithm}[t]
    \caption{Parallel Chains PoS LC consensus protocol $\protocolpc$ (helper functions: Appendix~\ref{sec:appendix-helperfunctions-pseudocode-parallel}, $\Fparallel$: Algorithm~\ref{algo:pseudocode-Ftree-parallel}, $\protocol$: Algorithm~\ref{algo:pseudocode-pos-lc})}
    \label{algo:pseudocode-parallel-pos-lc}
    \begin{algorithmic}[1]
        \small
        \On{$\Call{init}{\mathsf{genesisHeaderChain}, \mathsf{genesisTxs}}$}
            \State $\primindex \gets \Fparallel.\Call{primaryChainIndex}{ }$
            \For{$\index=1,...,m$}
                \State $\Pi_{\index} \gets \operatorname{new}\, \protocol$ \Comment{Initialize $m$ instances of $\protocol$}
                \State $\Pi_{\index}.\Call{init}{\mathsf{genesisHeaderChain}, \mathsf{genesisTxs}}$
            \EndFor
        \EndOn
        \On{$\Call{receivedHeaderChain}{\index,\Chain}$}
            \label{loc:pseudocode-parallel-pos-lc-receiveheader}
            \State $\Pi_{\index}.\Call{receivedHeaderChain}{\Chain}$
        \EndOn
        \On{$\Call{receivedContent}{\index,\Chain,\txs}$}
            \label{loc:pseudocode-parallel-pos-lc-receivecontent}
                \State $\Pi_{\index}.\Call{receivedContent}{\Chain,\txs}$

        \EndOn
        \On{$\Call{scheduleContentDownload}{\null}$}
            \label{loc:pseudocode-parallel-pos-lc-downloadrule}
                \Comment{Called when download idle}
            \State $\Pi_{\primindex}.\Call{scheduleContentDownload}{\null}$ \Comment{First priority for primary}
            \If{no content requested by $\Pi_{\primindex}$}
            \LineComment{Download first missing block along the confirmed portion of the longest header chains in the secondary instances.}
                \State $\mathcal{S} \gets \{ \operatorname{longestChain}(\Pi_{\index}.\HeaderTree)\trunc{\Tconf} \mid \index \in \{1,...,m\}\setminus\{\primindex\}\}$
                \State $\Chain \gets \argmin_{\Chain'' \preceq \Chain' \in \mathcal{S} \colon \TxsMap[\Chain'']=\NIL} \Chain''.\mathsf{time}$
                \State $\Env.\Call{requestContent}{\Chain}$
            \EndIf
        \EndOn
        \For{time slots $t \gets 1,...,\Th$ of duration $\tau$}
            \LineComment{Only include valid txs whose accounts belong to the primary chain}
            \State $\txs \gets \Env.\Call{receivePendingTxsSemanticallyValidWrt}{\Pi_{\primindex}.\DownloadedChain}$%
            \label{loc:pseudocode-parallel-pos-lc-mainloop}
            \LineComment{Check eligibility to produce a new block, and if so do so, see Algorithm~\ref{algo:pseudocode-Ftree-parallel}}
            \If{$\Chain' \neq \bot$ \textbf{with} $\Chain' \gets \Fparallel.\Call{extend}{\primindex,t,\Pi_{\primindex}.\DownloadedChain,\txs}$}
                \State $\Env.\Call{uploadContent}{\primindex,\Chain', \txs}$
                \State $\Env.\Call{broadcastHeaderChain}{\primindex,\Chain'}$
            \EndIf
            \While{end of current time slot $t$ not reached}
                \State $\Call{scheduleContentDownload}{ }$
            \EndWhile
            \label{loc:pseudocode-parallel-pos-lc-construct-ledger}
            \LineComment{Find the maximum time slot of all downloaded and confirmed chains}
            \State $\maxt \gets \max \{ t \mid \Pi_{\index}.\dC\trunc{\Tconf}.\mathsf{time} \geq t, \index\in\{1,...,m\}\}$
            \LineComment{Arrange confirmed and downloaded chains in increasing order of time slots, then chain index}
            \State $\mathsf{LOG} \gets \operatorname{sortBySlotThenIndex}(\{\Chain \mid \Chain \preceq \Pi_{\index}.\DownloadedChain\trunc{\Tconf},\allowbreak \Chain.\mathsf{time}\leq\maxt, \index\in\{1,...,m\}\})$
            \State $\Env.\Call{outputLedger}{\mathsf{LOG}}$
        \EndFor
    \end{algorithmic}
\end{algorithm}

\begin{algorithm}[t]
    \caption{Idealized functionality $\Fparallel$ for parallel chains
    (see also
    $\Ftree$: Algorithm~\ref{algo:pseudocode-Ftree})}
    \label{algo:pseudocode-Ftree-parallel}
    \begin{algorithmic}[1]
        \small
        \On{$\Call{init}{\mathsf{genesisHeaderChain},\mathsf{numParties}}$}
            \State $\mathcal{P}_1,...,\mathcal{P}_m \gets \text{random equi-partition of $\{1,...,\mathsf{numParties}\}$}$
            \For{$\index=1,...,m$}
                \For{$P\in\mathcal{P}_\index$}
                    \State $\primindex[P] \gets \index$
                \EndFor
                \State $\mathcal F_{\index} \gets \operatorname{new}\, \Ftree$ \Comment{Initialize $m$ instances of $\Ftree$}
                \State $\mathcal{F}_{\index}.\Call{init}{\mathsf{genesisHeaderChain},\mathsf{numParties}/m}$
            \EndFor
        \EndOn
        \On{$\Call{primaryChainIndex}{ }$} \textbf{from} party $P$
            \State \Return{$\primindex[P]$}
        \EndOn
        \On{$\Call{extend}{\index,t',\Chain,\txs}$ \textbf{from} party $P$ \textbf{at} time slot $t$}
            \If{$\primindex[P] \neq \index$}
                \State \Return{$\bot$}
            \EndIf
            \State \Return{$\mathcal{F}_{\index}.\Call{extend}{t',\Chain,\txs}$}
        \EndOn
    \end{algorithmic}
\end{algorithm}

\subsection{Additional Helper Functions for Algorithm~\ref{algo:pseudocode-parallel-pos-lc} (see also Appendix~\ref{sec:appendix-helperfunctions-pseudocode})}
\label{sec:appendix-helperfunctions-pseudocode-parallel}

\begin{itemize}

    \item 
        $\operatorname{sortBySlotThenIndex}(\mathcal{S})$:
        
        \noindent Arranges the chains in the set $\mathcal{S}$ in increasing order of time slots of their tip. Chains with the same time slot from different protocol instances are arranged in increasing order of the index of their protocol instance.

    \item $\operatorname{longestChain}(\Tree)$:
    
        \noindent Computes the longest chain in the tree $\Tree$, i.e. computes $\argmax_{\Chain \in \Tree} \len{\Chain}$.
        
    \item $\Env.\Call{receivePendingTxsSemanticallyValidWrt}{\Chain}$:
    
        \noindent Same as in the case of a single chain, but only includes transactions for which the source account is defined in the same chain $\Chain$.
\end{itemize}
\section{Proof Details}
\label{sec:appendix-proofs}

\subsection{Proof of Theorem~\ref{thm:security}}
\label{sec:appendix-proof-security-theorem}

\begin{definition}
\label{def:pivot}
A \emph{pivot} is a slot $t$ such that
\begin{IEEEeqnarray}{rCl}
    \label{eq:pivot}
    \forall (r,s] \ni t\colon\  \left(\Lint{r,s} > \Nint{r,s} \right) \lor \left(\Nint{r,s} = 0\right).
\end{IEEEeqnarray}
The predicate $\TPivot{t}$ is true iff $t$ is a pivot.
A slot $t$ is a unique pivot slot iff $\TPivot{t}\land \Unique{t}$.
\end{definition}

\begin{definition}
For an execution $\exec$,
$\FrequentPivots(\exec)$ holds iff
\begin{IEEEeqnarray}{C}
    \forall t \leq \Th-\Tp \colon \exists t' \in (t, t+\Tp]\colon\  \TPivot{t'} \land \Unique{t'}.
\end{IEEEeqnarray}
\end{definition}

\begin{lemma}
\label{lem:combinatorial}
For all $\bwslot, \Tp \in \mathbb{N}, \rho \in \mathbb{R}^{+}$,
executions $\exec$
and download rules $\dlrule$
such that
\begin{IEEEeqnarray}{c}
\FrequentPivots(\exec) \land \grpromempty(\exec, \dlrule) \IEEEeqnarraynumspace
\end{IEEEeqnarray}
holds,
if
$\tau = \DeltaHeader + \frac{\bwslot}{\bwtime}$ and
$\Tconf=\Tp$,
then
the protocol 
$\protocol$
with download rule $\dlrule$
satisfies safety
and liveness with $\Tlive = 2\Tp$
in $\exec$.
\end{lemma}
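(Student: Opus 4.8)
The plan is to reduce the entire statement to a single \emph{pivot stability} fact and then read off safety and liveness from $\FrequentPivots$. Concretely, I would first prove: if $t^*$ is a unique pivot slot (so $\TPivot{t^*}\land\Unique{t^*}$) and $b^*$ is the honest block proposed in $t^*$, at height $\len{b^*}$, then for every honest node $i$ and every slot $s\ge t^*$ the block $b^*$ is a prefix of $\dC_i(s)$. This is the analogue of the ``Nakamoto block persists'' lemma of \cite{dem20}, and it is where the pivot condition of Definition~\ref{def:pivot} earns its keep. Throughout I use that $\tau=\DeltaHeader+\tfrac{\bwslot}{\bwtime}$ together with $\grpromempty$ guarantees that the block of any uniquely successful slot is downloaded by all honest nodes by the end of that very slot, which is exactly the hypothesis feeding the chain-growth bound of Lemma~\ref{prop:chain_growth}.

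The stability claim is the main obstacle. I would argue by contradiction: suppose $b^*\not\preceq\dC_i(s)$ for some honest $i$ and $s\ge t^*$, and let $b_0$, at height $h_0$ and slot $r_0$, be the last common block of $\dC_i(s)$ and the chain ending at $b^*$. Since $b_0$ is a proper ancestor of $b^*$ we have $r_0<t^*\le s$, so the window $(r_0,s]$ contains $t^*$ and the pivot property gives $\Lint{r_0,s}>\Nint{r_0,s}$ (the case $\Nint{r_0,s}=0$ is immediate: the two block-disjoint segments above $b_0$ would then consist only of blocks from uniquely successful slots, which are unique per slot and hence cannot populate both forks). Above $b_0$ the two chains share no block, every block on either segment sits at a distinct slot in $(r_0,s]$, and at a uniquely successful slot $A_t=0$ forces a single block, so each uniquely successful slot can feed at most one of the two segments. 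Combining this observation with Lemma~\ref{prop:chain_growth} — which makes the honest $b^*$-chain outgrow the fork faster than $\Nint{r_0,s}$, whereas any $b^*$-free competitor rooted at $b_0$ is limited by the adversarial slot count plus the honest blocks it reuses — the inequality $\Lint{r_0,s}>\Nint{r_0,s}$ renders the $b^*$-chain strictly longer than any $b^*$-free chain through $b_0$. As $b^*$ is downloaded by all honest nodes by the end of slot $t^*\le s$, node $i$ already holds that strictly longer chain, contradicting the choice of $\dC_i(s)$ as its longest downloaded chain. Getting the block-counting bookkeeping exactly right — verifying that honest uniquely successful blocks are never double-counted across the fork, and that a single ballot-type inequality closes the race for \emph{every} window size at once — is the delicate part of the whole proof.

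Given stability, safety is short. Fix honest nodes $i,j$ and slots $t\le t'$, and take any block $b\preceq\LOG{i}{t}=\dC_i(t)\trunc{\Tconf}$, so $b$ has slot $s_b\le t-\Tconf$. With $\Tconf=\Tp$, $\FrequentPivots$ yields a unique pivot $t^*\in(s_b,s_b+\Tp]$, hence $s_b<t^*\le s_b+\Tconf\le t\le t'$. By stability the block $b^*$ of slot $t^*$ lies on both $\dC_i(t)$ and $\dC_j(t')$; on $\dC_i(t)$ it is a descendant of $b$ because $t^*>s_b$, so $b\preceq b^*\preceq\dC_j(t')$, and since $s_b\le t'-\Tconf$ this places $b$ inside $\LOG{j}{t'}$. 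As the blocks of $\LOG{i}{t}$ form a prefix, this gives $\LOG{i}{t}\preceq\LOG{j}{t'}$, which is safety.

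Liveness with $\Tlive=2\Tp$ follows in the same spirit. If $\mathsf{tx}$ reaches all honest nodes before slot $t$, then $\FrequentPivots$ supplies a unique pivot $t_1\in(t,t+\Tp]$; the honest proposer of that slot extends its longest downloaded chain by $b_1$, and since $\mathsf{tx}$ is pending and valid while $t_1>t$, either $\mathsf{tx}$ is included in $b_1$ or it already sits in $b_1$'s prefix. By stability $b_1\preceq\dC_i(s)$ for every honest node $i$ and $s\ge t_1$, and because $b_1$ has slot $t_1\le t+\Tp$, for any $t'\ge t+2\Tp=t+\Tlive$ we get $t_1\le t'-\Tp=t'-\Tconf$, so $b_1$ lies in the confirmed prefix $\dC_i(t')\trunc{\Tconf}=\LOG{i}{t'}$ and hence $\mathsf{tx}\in\LOG{i}{t'}$. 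The remaining care is purely bookkeeping near the horizon $\Th$ (ensuring the invoked pivots fall within range), which I expect to be routine once the stability lemma is in hand.
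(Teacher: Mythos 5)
Your overall architecture is exactly the paper's: a ``pivot stability'' lemma (the paper's Lemma~\ref{lem:fresh_block_base_case}, showing the block of a unique pivot slot stays in every $\dC_i(s)$ for $s\ge t^*$), followed by the safety and liveness derivations from $\FrequentPivots$, which you carry out correctly and essentially as the paper does. The gap is in the stability step itself, which you flag as delicate but do not actually close, and which as sketched would fail. Anchoring at the last \emph{common} block $b_0$ of $\dC_i(s)$ and the $b^*$-chain does not yield a usable bound: the competing segment above $b_0$ may contain honest blocks from uniquely successful slots (nothing forces every uniquely successful block before $t^*$ onto the $b^*$-side of the fork), so its length is $\Nint{r_0,s}$ \emph{plus} those reused honest blocks, and the two-segment count $u_1+u_2\le\Lint{r_0,s}$ together with $\Lint{r_0,s}>\Nint{r_0,s}$ is not enough to make the $b^*$-side strictly longer. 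Moreover, the chain-growth bound of Lemma~\ref{prop:chain_growth} is only licensed from a uniquely successful slot (or from $0$), and $r_0$ need not be one, so ``the $b^*$-chain outgrows the fork'' has no anchor height to grow from ($\len{b_0}$ can exceed $L_{\min}(r_0)$ if $b_0$ is a high adversarial block). Your side remark that the $\Nint{r_0,s}=0$ case is immediate because uniquely successful blocks ``cannot populate both forks'' is also wrong: distinct uniquely successful slots can each place their (unique) block on different forks.

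The paper's fix restructures the argument rather than running a ballot count between the two segments. Take $s$ to be the \emph{first} slot $\ge t^*$ with $b^*\notin\dC_i(s)=\calC'$, and let $h'$ be the last block on $\calC'$ coming from a uniquely successful slot, say slot $r$; minimality of $s$ (plus $\Unique{t^*}$) is what forces $r<t^*$. Maximality of $h'$ gives the clean upper bound $\len{\calC'}\le\len{h'}+\Nint{r,s}$, and since $r$ is uniquely successful and $\grpromempty$ holds, $L_{\min}(r)\ge\len{h'}$, so Lemma~\ref{prop:chain_growth} gives $\len{\calC'}\ge L_{\min}(s)\ge\len{h'}+\Lint{r,s}$. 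Comparing yields $\Lint{r,s}\le\Nint{r,s}$ with $t^*\in(r,s]$, contradicting $\TPivot{t^*}$ in one stroke (the $\Nint{r,s}=0$ branch is absorbed because $\Lint{r,s}\ge1$). Note the comparison is between $\len{\calC'}$ and $L_{\min}(s)$, not between the two fork segments; that is the missing idea in your sketch.
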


\begin{lemma}
\label{lem:frequent_pivots_prob_bound}
If $\SecurityCondition$ for some $\epsilon_1\in(0,1)$
and $\Tp = \frac{\Omega((\kappa + \ln\Th)^2)}{\constFrequentPivots p}$,
then
\begin{IEEEeqnarray}{C}
    \Prob{ \exec\colon \lnot \FrequentPivots(\exec) }
    \leq \negl(\kappa)
\IEEEeqnarraynumspace
\end{IEEEeqnarray}
where 
$\constFrequentPivots$
is a constant that depends on $\epsilon_1$ and $\rho$.
\end{lemma}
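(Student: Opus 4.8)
The plan is to reformulate $\FrequentPivots$ through a random walk and then combine a global concentration (``typical execution'') event, which controls long intervals, with a purely local event of constant probability, which controls short intervals; their interplay is what forces the quadratic window length. Introduce the walk $S_n \triangleq \Lint{0,n} - \Nint{0,n}$, so that $\Lint{r,s} - \Nint{r,s} = S_s - S_r$, with i.i.d. increments $S_t - S_{t-1} = \ind{\Unique{t}} - \ind{\Adverse{t}} \in \{-1,0,+1\}$ of mean $\pu - \pa = 2\pu - p = p\epsilon_1 > 0$ under $\SecurityCondition$. The key observation is a sufficient certificate for pivots: if $\Unique{t}$ and $\min_{s \geq t} S_s > \max_{r \leq t-1} S_r$, then every interval $(r,s] \ni t$ has $S_s > S_r$, hence $\Lint{r,s} > \Nint{r,s}$, so $\TPivot{t}$ holds and $t$ is a unique pivot.

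Next I would fix a threshold $\ell_0 = \Omega(\kappa + \ln\Th)$ and define the typical event $G$ that every interval $(r,s]$ with $s - r \geq \ell_0$ satisfies $\Lint{r,s} > \Nint{r,s}$. Since $\Exp{S_s - S_r} = p\epsilon_1 (s-r)$, a Chernoff/Hoeffding bound gives $\Prob{S_s - S_r \leq 0} \leq \exp(-\Theta(\epsilon_1^2 (s-r)))$, and a union bound over the $O(\Th^2)$ intervals yields $\Prob{\lnot G} \leq \Th^2 \exp(-\Theta(\epsilon_1^2 \ell_0)) \leq \negl(\kappa)$ for $\ell_0$ of the stated order (the $\epsilon_1,\rho$ dependence is absorbed into $\constFrequentPivots$). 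For the short intervals I would use a local certificate: call $t'$ a \emph{local pivot} if $\Unique{t'}$ and every interval $(r,s] \ni t'$ with $s - r \leq \ell_0$ has $\Lint{r,s} > \Nint{r,s}$ or $\Nint{r,s} = 0$; this depends only on the increments in $[t'-\ell_0, t'+\ell_0]$. On $G$, a local pivot is a genuine unique pivot, since intervals of length $\geq \ell_0$ are handled by $G$ and those of length $\leq \ell_0$ by locality. Crucially, the local-pivot event has probability bounded below by a constant $\delta = \delta(\epsilon_1,\rho) > 0$ independent of $\ell_0$: it is implied by the renewal event $\{\Unique{t'}\} \cap \{S_{t'-1} = \max_{r\le t'-1} S_r\} \cap \{S_s \ge S_{t'}\ \forall s \ge t'\}$, whose two one-sided factors each have positive probability for a positive-drift walk (standard fluctuation theory: such a walk stays nonnegative forever with positive probability, and by time-reversal the backward factor equals the forward one). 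Because local-pivot events are measurable with respect to increments within radius $\ell_0$, those at points spaced more than $2\ell_0$ apart are independent.

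Finally I would assemble the bound. In any window $(t_0, t_0 + \Tp]$ place $\Theta(\Tp/\ell_0)$ candidate slots spaced $2\ell_0 + 1$ apart; their local-pivot events are independent, each of probability $\geq \delta$, so the probability that all fail is at most $(1-\delta)^{\Theta(\Tp/\ell_0)} \le \exp(-\Theta(\delta \Tp/\ell_0))$. On $G$, a window with no unique pivot must have all its candidates fail, so a union bound over the $\le \Th$ windows gives $\Prob{\lnot\FrequentPivots} \le \Prob{\lnot G} + \Th\,\exp(-\Theta(\delta\Tp/\ell_0))$. Requiring $\delta\Tp/\ell_0 = \Omega(\kappa + \ln\Th)$ makes the second term $\negl(\kappa)$; since $\ell_0 = \Omega(\kappa+\ln\Th)$, this forces $\Tp = \Omega((\kappa+\ln\Th)^2/\delta)$, matching the stated $\Tp = \Omega((\kappa+\ln\Th)^2)/(\constFrequentPivots p)$.

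I expect the main obstacle to be the short-interval part of the pivot condition. Global concentration only controls intervals longer than $\ell_0$, whereas intervals of length $O(1)$ containing adversarial slots require strict local domination of uniquely successful over adversarial slots, a local event of merely constant probability per slot. Reconciling the $\Omega(\kappa+\ln\Th)$ spacing needed both for independence of the local trials and for the validity of $G$, while needing $\Omega(\kappa+\ln\Th)$ independent trials to beat the union bound over windows, is exactly what produces the quadratic window; the technical crux is showing the constant $\delta$ is bounded away from $0$ uniformly in $\ell_0$ via the renewal/fluctuation argument.
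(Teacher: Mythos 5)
Your proposal is correct and follows essentially the same route as the paper's proof: your ``local pivot'' is exactly the paper's weak pivot $\WeakPivot{t}$, your typical event $G$ is Propositions~\ref{prop:exp-decay-not-pivot-condition-any-long-interval}--\ref{prop:all-w-pivots-are-pivots}, and your independent candidates spaced $2\ell_0$ apart are Propositions~\ref{prop:fixed-t-within-T-is-w-pivot-with-good-probability}--\ref{prop:forall-t-within-T-is-w-pivot-with-good-probability}, with the same two constraints ($w=\Omega(\kappa+\ln\Th)$ for the long-interval union bound and $\Tp/w=\Omega(\kappa+\ln\Th)$ trials per window) producing the quadratic $\Tp$. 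The only divergence is how the constant lower bound $\delta$ on the local-pivot probability is established: the paper (Proposition~\ref{prop:fixed-t-is-w-pivot-with-good-probability}) does it explicitly by requiring no adversarial slots within a constant radius $v$ of $t$ and summing exponential tail bounds for the remaining short intervals, yielding $\delta=\tfrac12(1-\pa)^{2v-1}\pu$, whereas you appeal to fluctuation theory of positive-drift random walks; both are valid.
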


Lemma~\ref{lem:combinatorial} is proved in Appendix~\ref{sec:appendix-combinatorial} and Lemma~\ref{lem:frequent_pivots_prob_bound}
in Appendix~\ref{sec:analysis-probabilistic}.

\begin{proof}[Proof of Theorem~\ref{thm:security}]
Using Lemma~\ref{lem:combinatorial}, safety and liveness hold except with probability
\begin{IEEEeqnarray}{c}
    \Prob{\exec \colon \lnot\FrequentPivots(\exec) \lor \lnot\grpromempty(\exec,\dlrule)}.
    \IEEEeqnarraynumspace
\end{IEEEeqnarray}
This probability is negligible as per a union bound with Lemma~\ref{lem:frequent_pivots_prob_bound} and the assumption about the download rule.

\end{proof}

\subsection{Proof of Lemma~\ref{lem:combinatorial}}
\label{sec:appendix-combinatorial}

\begin{lemma}
\label{lem:fresh_block_base_case}
Suppose that 
for a download rule $\dlrule$ and execution $\exec$,
$\grpromempty(\exec,\dlrule)$ holds.
Let $t^*$ be a time slot such that $\TPivot{t^*} \land \Unique{t^*}$. Let $b^*$ be the block proposed in slot $t^*$. 
Then $b^*\in\dC_i(t)$ for all $i$ and all $t \geq t^*$. 
\end{lemma}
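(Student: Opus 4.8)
The plan is to first convert the download guarantee into a statement about chain \emph{lengths}, and then establish persistence of $b^*$ by a contradiction that pits the honest uniquely-successful blocks against the adversary's block budget via the pivot inequality. First I would record the immediate consequence of $\grpromempty$: since $t^*$ is uniquely successful, the unique honest block $b^*$ and (because honest nodes only build on fully downloaded chains, so its prefix is downloaded and valid) its entire prefix chain $\calC^*$ are downloaded by every honest node by the end of slot $t^*$. Hence $\calC^*$ is a valid downloaded chain in every honest view from slot $t^*$ onward, so $L_i(t) \geq \len{b^*}$ for all honest $i$ and $t \geq t^*$; in particular every $\dC_i(t)$ with $t \geq t^*$ has length at least $\len{b^*}$.

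Next, suppose toward a contradiction that some honest node $i$ and time $t \geq t^*$ has $b^* \notin \dC_i(t)$. Since $\dC_i(t)$ is the longest downloaded chain it has length $\geq \len{b^*}$, so it shares a prefix with $\calC^*$ and then diverges; let $b'$, at slot $r < t^*$ and height $\ell < \len{b^*}$, be their last common block. I would compare the two branches above $b'$ over an interval $(r,s] \ni t^*$ (taking $s$ to be the larger of $t^*$ and the slot of the height-$\len{b^*}$ block of $\dC_i(t)$) using two structural facts. (i) In a uniquely successful slot there is exactly one block production opportunity and it is honest, so there is a single block at that slot and any chain either contains it or carries no block from that slot. (ii) By part~(\ref{item:inc_height}) of Lemma~\ref{prop:chain_growth} the blocks of uniquely successful slots occupy strictly increasing, hence pairwise distinct, heights, so $b^*$ is the only uniquely-successful block at height $\len{b^*}$. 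From (i) and (ii) the unique-slot blocks used by the two branches lie at distinct slots, and at each disputed height in $(\ell, \len{b^*}]$ at most one branch can carry a uniquely-successful block, so the other branch must carry a block from an adversarial slot there.

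The degenerate case $\Nint{r,s}=0$ is then immediate: the block of $\dC_i(t)$ at height $\len{b^*}$ lies at a successful slot in $(r,s]$, which must then be uniquely successful, yet by (ii) the only uniquely-successful block at height $\len{b^*}$ is $b^*$, contradicting $b^* \notin \dC_i(t)$. The main obstacle is turning the remaining case into a clean contradiction with the pivot inequality $\Lint{r,s} > \Nint{r,s}$ (Definition~\ref{def:pivot}). Counting blocks by slot type over $(r,s]$ shows that each of the $\len{b^*}-\ell$ disputed heights forces an adversarial-slot block on at least one branch, which lower bounds $\Nint{r,s}$ (each adversarial slot being reusable across at most the two branches), while distinctness of the honest unique blocks lower bounds $\Lint{r,s}$. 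A purely local count is not sufficient, however, because additional uniquely successful slots in $(r,s]$ inflate $\Lint{r,s}$ on their own without being charged against the adversary.

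The crux is therefore to close the loop with chain growth: by part~2 of Lemma~\ref{prop:chain_growth} those same extra unique slots force the honest minimum chain length (which $\dC_i(t)$ must exceed) to grow, and this in turn forces the divergence depth $\len{b^*}-\ell$ upward and hence the adversarial budget upward. Choosing the interval so that honest growth is accounted for exactly, the accumulated honest advantage quantified by the pivot condition---valid for \emph{every} interval containing $t^*$---is violated, yielding the contradiction and hence $b^* \in \dC_i(t)$ for all honest $i$ and $t \geq t^*$. I expect this matching of honest chain growth against the adversary's block budget, and the corresponding careful choice of interval, to be the quantitative heart of the argument; it mirrors the combinatorial reasoning underlying Lemma~\ref{prop:chain_growth} and prior longest-chain security analyses.
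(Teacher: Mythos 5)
You have assembled the right ingredients ($\grpromempty$ to pin down $L_{\min}$, Lemma~\ref{prop:chain_growth}, and the pivot inequality), but the proposal stops exactly where the proof has to be done. You yourself flag that the per-height count over the two branches is insufficient and that the argument must ``choose the interval so that honest growth is accounted for exactly''---that sentence is the entire quantitative content of the lemma, and it is never carried out. Moreover, the plan as described has two concrete defects. First, anchoring at the fork point $b'$ does not work: $b'$ need not come from a uniquely successful slot, so $\grpromempty$ gives you no bound of the form $L_{\min}(r)\geq\len{b'}$, and without that you cannot convert $\Lint{r,s}$ into a length advantage over the violating chain. Second, your double counting (``each adversarial slot being reusable across at most the two branches'') only yields roughly $2\Nint{r,s}\geq \len{b^*}-\ell$, which cannot contradict $\Lint{r,s}>\Nint{r,s}$.

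The paper's proof avoids both problems with a different choice of reference block: it takes the \emph{first} slot $s\geq t^*$ at which some $\dC_i(s)=\calC'$ omits $b^*$, and lets $h'$ be the \emph{last} block on $\calC'$ from a uniquely successful slot, say slot $r$. Minimality of $s$ (together with the fact that $h'$'s honest proposer extended its own downloaded chain, which omits $b^*$) forces $r<t^*$. Because $h'$ is the last such block, every block of $\calC'$ above $h'$ is from an adversarial slot, giving $\len{\calC'}\leq\len{h'}+\Nint{r,s}$ on a \emph{single} chain (no factor of $2$); and because $r$ is uniquely successful, $\grpromempty$ gives $L_{\min}(r)\geq\len{h'}$, so Lemma~\ref{prop:chain_growth} yields $\len{\calC'}\geq L_{\min}(s)\geq\len{h'}+\Lint{r,s}$. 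Combining gives $\Lint{r,s}\leq\Nint{r,s}$ with $r<t^*\leq s$ and $\Lint{r,s}\geq 1$, contradicting $\TPivot{t^*}$. You would need to supply this (or an equivalent) anchoring step for your argument to close; as written, the proposal is an outline with the decisive step missing.
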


\begin{proof}
For contradiction, suppose that $s\geq t^*$ is the first slot such that $b^*\not\in\dC_i(s)$ for some $i$. Let $\calC'=\dC_i(s)$ such that $b^*\not\in\calC'$. 
Let $h'$ be the last block corresponding to a uniquely successful slot on $\calC'$.
Let $h'$ be proposed in the slot $r$. 
Clearly, $r\leq s$.

The block $h'$ extends $\dC_{i'}(r-1)$ for some $i'$ since honest nodes propose blocks on their longest downloaded chain. Since $h'\in\calC'$ and $b^*\notin\calC'$, this means that $b^*\notin\dC_{i'}(r-1)$. If $r>t^*$, this is a contradiction because we assumed that $s$ is the first slot such that $s \geq t^*$ and $b^*\not\in\dC_i(s)$ for some $i$. Since $\Unique{t^*}$, $r\neq t^*$. So, we conclude that $r < t^*$. All blocks in $\calC'$ extending $h'$ are from successful slots that are not uniquely successful, \ie, they are adversarial slots. So,
\begin{IEEEeqnarray}{C}
\label{eq:adv_chain}
    \len{\calC'} \leq \len{h'} + \Nint{r,s}
\end{IEEEeqnarray}

From Lemma~\ref{prop:chain_growth},
\begin{IEEEeqnarray}{C}
\label{eq:hon_chain}
    L_{\min}(s) \geq L_{\min}(r) + \Lint{r,s}.
\end{IEEEeqnarray}
Note that $L_{\min}(s)\leq L_i(s)$ $\forall i$ and $\len{\calC'}=L_i(s)$ for some $i$. 
Also note that $h'$ is from a uniquely successful slot $r$ and $\grpromempty$ holds, so $L_{\min}(r)\geq \len{h'}$. Using the above observations with \eqref{eq:adv_chain} and \eqref{eq:hon_chain}, we get
\begin{IEEEeqnarray}{C}
\label{eq:pivot_contradiction}
    \Lint{r,s} \leq \Nint{r,s}
\end{IEEEeqnarray}
where $r<t^*$ and $s\geq t^*$. Since $\TPivot{t^*}$, this is a contradiction.
\end{proof}

Lemma~\ref{lem:fresh_block_base_case} shows that the block from every unique pivot slot 
stays in all honest nodes' downloaded longest chains thereafter.
Therefore, under $\FrequentPivots$, every interval of $\Tp$ slots brings at least one 
such block.
To conclude with the proof of Lemma~\ref{lem:combinatorial}, one needs to show that the occurrence of 
such blocks
leads to safety and liveness. This is done 
in Lemma~\ref{lem:combinatorial}.

\begin{figure}%
    \centering
    \begin{tikzpicture}
        \small
        
        \begin{scope}[blockchain]

            \coordinate (G) at (0.5,0);
            
            \node (p0) at (1,0) [block] {};
            \node (padv) at (1.6,0) [block-adv2] {};
            \node (p1) at (2.4,0) [block] {};
            \node (p2) at (2.9,0) [block] {};
            \node (p3) at (4,0) [block] {};
           
            \node [yshift=0.5cm] (p1label) at (2.4,0) {$b^*$};

            \node (p1a1) at (2.0,-1) [block] {};
            \node (p1a2) at (2.6,-1) [block-adv2] {};
            \node (p1a23) at (3.2,-1) [block-adv2] {};
            \node (p1a3) at (3.8,-1) [block-adv2] {};
            \node (p1a4) at (4.5,-1) [block-adv2] {};
            
            \draw [link] (p0) -- (G);
            \draw [link] (padv) -- (p0);
            \draw [link] (p1) -- (padv);
            \draw [link] (p2) -- (p1);
            \draw [link] (p3) -- (p2);

            \draw [link] (p1a1) -- (p0);
            
            \draw [hiddenlink-adv2] (p1a1) -- (p1a2) -- (p1a23) -- (p1a3) -- (p1a4);
            
            \begin{scope}[yshift=-1em]
                \draw [Latex-] (5.5,2) -- (0.25,2) node [above right] {\emph{Time}};
                
                \draw [] (2.4,2) ++(0,-0.2) -- ++(0,0.4) node [above] {$t^*$};
                \draw [] (2.0,2) ++(0,-0.2) -- ++(0,0.4) node [above] {$r$};
                \draw [] (4.5,2) ++(0,-0.2) -- ++(0,0.4) node [above] {$s$};
                
                \draw [densely dotted] (2.4,2) -- (p1label);
                \draw [densely dotted,shorten >=0.5em] (2.0,2) -- (p1a1);
                \draw [densely dotted,shorten >=0.5em] (4.5,2) -- (p1a4);
            \end{scope}
            
            \node [yshift=-0.2cm, anchor=north,align=center] (p1a1label) at (p1a1) {$h'_1$};
            \node [xshift=0.2cm,anchor=west,align=left,label-adv2] (p1a4label) at (p1a4) {$\calC'$};

        \end{scope}
    \end{tikzpicture}
    \caption[]{An illustration of one example of the blocks and time slots defined in the proof of Lemma~\ref{lem:fresh_block_base_case}. The block $b^*$ is proposed in the unique pivot slot $t^*$. At the end of slot $s\geq t^*$, the chain $\calC'\not\ni b^*$ is the longest chain in some node's view. The last block from a uniquely successful slot in $\calC'$ is $h_1'$ proposed in the slot $r<t^*$.
    Red
    (\tikz[blockchain,baseline={([yshift=-0.6ex]current bounding box.center)}]{ \node [block-adv2] at (0,0) {}; })
    and gray
    (\tikz[blockchain,baseline={([yshift=-0.6ex]current bounding box.center)}]{ \node [block] at (0,0) {}; })
    blocks are proposed by adversarial and honest nodes, respectively.
    A red dashed link
    (\tikz[blockchain,baseline={([yshift=-.5ex]current bounding box.center)}]{ \draw [hiddenlink-adv2] (0,0) -- (0.25,0); })
    indicates that the block is withheld and released later.
    Note that in this example, $\Nint{r,s}=4>3=\Lint{r,s}$, which is in contradiction to $\TPivot{t^*}$.}
    \label{fig:deconf_examples}
\end{figure}

\begin{proof}[Proof of Lemma~\ref{lem:combinatorial}]
Let $\Tconf = \Tp$. 
First, we prove safety by contradiction. 
Suppose that for some honest nodes $\node{i},\node{j}$ and $t'\geq t$ that $\dC_i(t)\trunc{\Tconf}\not\preceq \dC_j(t')\trunc{\Tconf}$.
We can assume that $t\geq\Tp$ because otherwise $\dC_i(t)\trunc{\Tconf}=\emptyset$ and therefore $\dC_i(t)\trunc{\Tconf} \preceq \dC_j(t')\trunc{\Tconf}$ for all $t'$.

Consider all the uniquely successful slots $t_1,...,t_m \in (t-\Tp,t]$ with block $b_j$ proposed in slot $t_j$. 
Suppose that $b_j\in\dC_i(t)$ and $b_j\in\dC_j(t')$. Then $\dC_i(t)$ and $\dC_j(t')$ match up to $b_j$. Since $t_j>t-\Tp$, $\dC_i(t)\trunc{\Tconf} \preceq \dC_j(t')$. Also, $t'\geq t$, therefore $\dC_i(t)\trunc{\Tconf} \preceq \dC_j(t')\trunc{\Tconf}$ which is a contradiction to our assumption.
Therefore, for each $j=1,...,m$, either $b_j\not\in\dC_i(t)$ or $b_j\not\in\dC_j(t')$. This means that for all $j=1,...,m$, $b_j$ is not a \greatblock. Due to $\FewBlockOpps$ and Lemma~\ref{lem:fresh_block_base_case}, this also means that there are no unique pivot slots in the interval $(t-\Tp,t]$, which is a contradiction to $\FrequentPivots$.

We next prove liveness.
Assume a transaction $\mathsf{tx}$ is received by all honest nodes before time $t$. 
We know that there exists a unique pivot slot $t^*$ in the interval $(t,t+\Tp]$. The honest block $b^*$ from $t^*$ or its prefix must contain $\mathsf{tx}$ since $\mathsf{tx}$ is seen by all honest nodes at time $t < t^*$.
Moreover, $b^*$ is also a \greatblock,
\ie, $b^*\in\dC_i(t')$ for all honest nodes $\node{i}$ and $t'\geq t^*$.
Therefore, $\mathsf{tx}\in\LOG{i}{t'}$ for all $t'\geq t^*+\Tconf$, which is at most $t+2\Tp$.
\end{proof}

\subsection{Proof of Lemma~\ref{lem:frequent_pivots_prob_bound}}
\label{sec:analysis-probabilistic}

\subsubsection{Preliminaries}

\begin{definition}[Pivot condition]
The predicate $\PivotCondition_{(r,s]}$ holds iff $\Lint{r,s} > \Nint{r,s}$.
\end{definition}
Note that 
$\TPivot{t}$ holds iff
$\forall (r, s] \ni t$, $\PivotCondition_{(r,s]} \lor \left(\Nint{r,s} = 0\right)$ holds.

\begin{definition}[Weak Pivot]
Time slot $t$ 
satisfies $\WeakPivot{t}$ iff
\begin{IEEEeqnarray}{C}
    \forall (r, s] \ni t, s-r < w\colon\ 
    \PivotCondition_{(r,s]}  \lor \left(\Nint{r,s} = 0\right).
\IEEEeqnarraynumspace\end{IEEEeqnarray}
\end{definition}

\begin{proposition}
\label{prop:exp-decay-not-pivot-condition-some-long-interval}
If $\SecurityCondition$ for some $\epsilon_1\in(0,1)$,
\begin{IEEEeqnarray}{C}
    \forall (r,s] \colon\ 
    \Prob{ \lnot \PivotCondition_{(r,s]} } \leq 2\exp\left( - \constPivotRace  p (s-r) \right),
\IEEEeqnarraynumspace\end{IEEEeqnarray}
with $\constPivotRace = \eta \epsilon_1^2$
and $\eta=1/36$.
\end{proposition}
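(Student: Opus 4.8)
The plan is to read $\lnot\PivotCondition_{(r,s]}$ as the event $\{\Lint{r,s}\le\Nint{r,s}\}$ and to control it by a large-deviation argument. Write $n=s-r$; since the slots are i.i.d., the law of $(\Lint{r,s},\Nint{r,s})$ depends only on $n$, so it suffices to bound the probability for each fixed $n$ and the bound then holds for every interval. Marginally $\Lint{r,s}\sim\mathrm{Binomial}(n,\pu)$ and $\Nint{r,s}\sim\mathrm{Binomial}(n,\pa)$. The hypothesis $\pu=\tfrac12 p(1+\epsilon_1)$ together with $\pa=p-\pu$ gives $\pa=\tfrac12 p(1-\epsilon_1)$, so $\Exp{\Lint{r,s}}=\tfrac12 np(1+\epsilon_1)$ and $\Exp{\Nint{r,s}}=\tfrac12 np(1-\epsilon_1)$, and the two means straddle the common value $\tfrac12 np$.

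First I would split the bad event at this midpoint: if $\Lint{r,s}\le\Nint{r,s}$ then it is impossible that simultaneously $\Lint{r,s}>\tfrac12 np$ and $\Nint{r,s}<\tfrac12 np$, whence $\{\lnot\PivotCondition_{(r,s]}\}\subseteq\{\Lint{r,s}\le\tfrac12 np\}\cup\{\Nint{r,s}\ge\tfrac12 np\}$. Each piece is a fixed-fraction deviation of a binomial from its mean: the threshold $\tfrac12 np$ equals $(1-\delta_1)\Exp{\Lint{r,s}}$ with $\delta_1=\tfrac{\epsilon_1}{1+\epsilon_1}$ and $(1+\delta_2)\Exp{\Nint{r,s}}$ with $\delta_2=\tfrac{\epsilon_1}{1-\epsilon_1}$. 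Applying the multiplicative Chernoff lower-tail bound $\Prob{X\le(1-\delta)\mu}\le e^{-\mu\delta^2/2}$ to $\Lint{r,s}$ gives exponent $\tfrac{np\epsilon_1^2}{4(1+\epsilon_1)}\ge\tfrac{np\epsilon_1^2}{8}$, and applying the upper-tail bound $\Prob{X\ge(1+\delta)\mu}\le e^{-\mu\delta^2/(2+\delta)}$ to $\Nint{r,s}$ gives exponent $\tfrac{np\epsilon_1^2}{2(2-\epsilon_1)}\ge\tfrac{np\epsilon_1^2}{4}$, both using $0<\epsilon_1<1$. A union bound then yields $\Prob{\lnot\PivotCondition_{(r,s]}}\le 2\exp(-\tfrac{\epsilon_1^2}{8}pn)$, which implies the claimed $2\exp(-\tfrac{\epsilon_1^2}{36}pn)$ since the loose constant $\eta=1/36\le 1/8$ absorbs the slack, and the factor $2$ is precisely the union over the two tails.

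The step to watch is the upper tail of $\Nint{r,s}$: its relative deviation $\delta_2=\epsilon_1/(1-\epsilon_1)$ blows up as $\epsilon_1\to1$ and can exceed $1$, so the familiar $e^{-\mu\delta^2/3}$ form (valid only for $\delta\le1$) may not be invoked uniformly; I would instead use the version $e^{-\mu\delta^2/(2+\delta)}$ that holds for all $\delta>0$, after which the arithmetic is routine. As a sanity check, a single-term alternative avoids the case split entirely: apply the exponential-moment method to $X_t=\ind{\Unique{t}}-\ind{\Adverse{t}}\in\{-1,0,1\}$, bound $\Exp{e^{-\theta X_t}}=1+p(\cosh\theta-1)-p\epsilon_1\sinh\theta\le\exp(p(\cosh\theta-1)-p\epsilon_1\sinh\theta)$ via $1+x\le e^{x}$, and optimize at $\tanh\theta=\epsilon_1$ to get $\Prob{\lnot\PivotCondition_{(r,s]}}\le\exp(np(\sqrt{1-\epsilon_1^2}-1))\le\exp(-\tfrac12 np\epsilon_1^2)$, which again implies the stated bound (with the factor $2$ as pure slack). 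I expect the two-tail route to be the intended one, since it reproduces the explicit factor $2$ in the statement.
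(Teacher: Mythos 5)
Your proof is correct and follows essentially the same route as the paper's: both reduce $\Prob{\lnot\PivotCondition_{(r,s]}}$ to two one-sided binomial Chernoff bounds joined by a union bound (hence the factor $2$), the only cosmetic difference being that the paper pairs a lower tail on $\Lint{r,s}$ with an upper tail on $\Bint{r,s}$ at a threshold tuned via $\frac{1+\epsilon}{1-\epsilon}=1+\epsilon_1$, whereas you pair it with an upper tail on $\Nint{r,s}$ at the fixed midpoint $\frac12 p(s-r)$. Your variant even yields the sharper exponent $\epsilon_1^2/8$, which is then absorbed into the stated $\eta=1/36$ just as the paper's $\epsilon_1^2/36$ is.
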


\begin{proof}

By a simple Chernoff bound for $\epsilon>0$, 
\begin{IEEEeqnarray}{C}
    \Prob{ \Bint{r,s} \geq p (s-r)(1+\epsilon)} \leq \exp\left(-\frac{\epsilon^2p(s-r)}{2+\epsilon}\right).
\end{IEEEeqnarray}

Also, by a Chernoff bound for $\epsilon \in (0,1)$,
\begin{IEEEeqnarray}{C}
    \Prob{ \Lint{r,s} \leq \pu (s-r)(1-\epsilon)} \leq \exp\left(-\frac{\epsilon^2 \pu(s-r)}{2}\right)
\end{IEEEeqnarray}

By choosing $\epsilon$ such that $\frac{1+\epsilon}{1-\epsilon} = 1+\epsilon_1$, we obtain that
\begin{IEEEeqnarray*}{rCl}
    \Lint{r,s} &>& \pu(s-r)(1-\epsilon) \\
    &=& \frac12 p(1+\epsilon_1) (s-r)(1-\epsilon) \\
    &=& \frac12 p (s-r) (1+\epsilon) > \frac12 \Bint{r,s} \\
    \implies \Lint{r,s} &>& \Nint{r,s},
\end{IEEEeqnarray*}
except with probability
\begin{IEEEeqnarray}{C}
    \exp\left(-\frac{\epsilon^2p(s-r)}{2+\epsilon}\right) + \exp\left(-\frac{\epsilon^2 \pu(s-r)}{2}\right)
\end{IEEEeqnarray}
From $\frac{1+\epsilon}{1-\epsilon} = 1+\epsilon_1$, we get $\epsilon = \frac{\epsilon_1}{\epsilon_1+2} \geq \frac{\epsilon_1}{3}$. Further using $\pu>\frac{p}{2}$, this probability is bounded by
\begin{IEEEeqnarray}{C}
    2\exp\left(\frac{\epsilon_1^2 p (s-r)}{36}\right)
\end{IEEEeqnarray}

\end{proof}

\begin{proposition}
\label{prop:exp-decay-not-pivot-condition-any-long-interval}
If $\SecurityCondition$,
then
for an execution horizon $\Th$
and $w > \frac{2 \ln(\sqrt2\Th)}{\constPivotRace  p}$,
\begin{IEEEeqnarray}{C}
    \Prob{ \exists (r,s], s-r \geq w\colon \lnot \PivotCondition_{(r,s]} } \nonumber\\ 
    \leq 2\Th^2 \exp\left( - \constPivotRace pw \right).
\IEEEeqnarraynumspace
\end{IEEEeqnarray}
\end{proposition}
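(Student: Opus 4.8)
The plan is to reduce the statement to the single-interval tail bound of Proposition~\ref{prop:exp-decay-not-pivot-condition-some-long-interval} by a union bound over all sufficiently long intervals within the horizon. Concretely, I would write the event $\{\exists (r,s],\, s-r\geq w\colon \lnot\PivotCondition_{(r,s]}\}$ as the union $\bigcup_{(r,s]\colon s-r\geq w}\{\lnot\PivotCondition_{(r,s]}\}$, where $(r,s]$ ranges over integer slots $0\leq r<s\leq\Th$, and then apply subadditivity of probability.

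For each fixed interval, Proposition~\ref{prop:exp-decay-not-pivot-condition-some-long-interval} gives $\Prob{\lnot\PivotCondition_{(r,s]}}\leq 2\exp(-\constPivotRace p(s-r))$; since we only sum over intervals with $s-r\geq w$, monotonicity of the exponential lets me replace the exponent to obtain the uniform per-interval bound $2\exp(-\constPivotRace p w)$. It then remains to count the relevant intervals. The total number of intervals $(r,s]$ with $0\leq r<s\leq\Th$ is $\binom{\Th+1}{2}=\tfrac{\Th(\Th+1)}{2}\leq\Th^2$ for $\Th\geq 1$, and restricting to $s-r\geq w$ only shrinks this set. Multiplying the count by the uniform per-interval bound yields
\begin{IEEEeqnarray*}{C}
    \Prob{\exists (r,s], s-r\geq w\colon \lnot\PivotCondition_{(r,s]}} \leq \Th^2\cdot 2\exp(-\constPivotRace p w),
\end{IEEEeqnarray*}
which is exactly the claimed $2\Th^2\exp(-\constPivotRace p w)$.

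The hypothesis $w>\frac{2\ln(\sqrt2\Th)}{\constPivotRace p}$ is not needed for the inequality itself; its role is to guarantee that the resulting bound is a meaningful probability, i.e.\ strictly below $1$. Indeed $\constPivotRace p w>2\ln(\sqrt2\Th)=\ln(2\Th^2)$ is equivalent to $2\Th^2\exp(-\constPivotRace p w)<1$, which is the form in which this proposition will be invoked in the proof of Lemma~\ref{lem:frequent_pivots_prob_bound}. The hypothesis $\SecurityCondition$ enters only indirectly, through its use inside Proposition~\ref{prop:exp-decay-not-pivot-condition-some-long-interval}.

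There is no substantial obstacle here: the argument is a routine union bound, and the only points requiring a little care are the crude-but-sufficient interval count $\binom{\Th+1}{2}\leq\Th^2$ and the replacement of $s-r$ by $w$ in the exponent. I would deliberately avoid summing length-by-length while keeping the full $\exp(-\constPivotRace p(s-r))$ decay, since that route introduces a spurious $1/(1-e^{-\constPivotRace p})$ factor; although it would tighten the $\Th^2$ prefactor, the stated (cruder) bound already follows from the count above and is clean enough for the downstream negligibility argument in Lemma~\ref{lem:frequent_pivots_prob_bound}.
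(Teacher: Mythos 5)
Your proof is correct and matches the paper's own argument: a union bound over all intervals with $s-r\geq w$, the uniform per-interval bound $2\exp(-\constPivotRace p w)$ from Proposition~\ref{prop:exp-decay-not-pivot-condition-some-long-interval}, and the crude count of at most $\Th^2$ intervals. Your added remark that the hypothesis on $w$ only serves to make the bound nontrivial is accurate but not needed.
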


\begin{proof}
Using a union bound
and Proposition~\ref{prop:exp-decay-not-pivot-condition-some-long-interval},
\begin{IEEEeqnarray*}{rCl}
    && \Prob{ \exists (r,s], s-r \geq w\colon \lnot \PivotCondition_{(r,s]} }   \\
    &\leq& \sum_{(r,s], s-r \geq w} \Prob{ \lnot \PivotCondition_{(r,s]} }   \\
    &\leq& 2\Th^2 \exp(-\constPivotRace  p w).
\end{IEEEeqnarray*}
\end{proof}

\begin{proposition}
\label{prop:all-w-pivots-are-pivots}
If $\SecurityCondition$,
then
for a time horizon $\Th$
and $w > \frac{2 \ln(\sqrt2\Th)}{\constPivotRace  p}$,
\begin{IEEEeqnarray}{C}
    \Prob{ \exists t\colon \WeakPivot{t} \land \lnot \TPivot{t} } \nonumber \\
    \leq 2\Th^2 \exp(-\constPivotRace  p w).
\IEEEeqnarraynumspace
\end{IEEEeqnarray}
\end{proposition}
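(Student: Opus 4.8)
The plan is to reduce this statement to Proposition~\ref{prop:exp-decay-not-pivot-condition-any-long-interval} via a purely deterministic (pathwise) inclusion of events, so that the only probabilistic content is borrowed from the already-established tail bound on long intervals. The guiding observation is that $\WeakPivot{t}$ and $\TPivot{t}$ differ only in the range of intervals they quantify over: $\TPivot{t}$ demands $\PivotCondition_{(r,s]}\lor(\Nint{r,s}=0)$ for \emph{every} interval $(r,s]\ni t$, whereas $\WeakPivot{t}$ demands it only for the \emph{short} intervals with $s-r<w$. Consequently $\TPivot{t}$ trivially implies $\WeakPivot{t}$, and the event $\WeakPivot{t}\land\lnot\TPivot{t}$ singles out exactly those slots at which the pivot condition survives on all short intervals but breaks on some long one.

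First I would unpack $\lnot\TPivot{t}$: it gives an interval $(r,s]\ni t$ with $\lnot\!\left(\PivotCondition_{(r,s]}\lor(\Nint{r,s}=0)\right)$, equivalently $\lnot\PivotCondition_{(r,s]}$ together with $\Nint{r,s}>0$. Under the additional hypothesis $\WeakPivot{t}$, this witnessing interval cannot satisfy $s-r<w$, since every such short interval through $t$ obeys $\PivotCondition_{(r,s]}\lor(\Nint{r,s}=0)$ by definition of the weak pivot. Hence the witness must have $s-r\geq w$, and for it $\lnot\PivotCondition_{(r,s]}$ holds. This yields the deterministic inclusion
\begin{IEEEeqnarray*}{C}
    \left\{\exists t\colon \WeakPivot{t}\land\lnot\TPivot{t}\right\}
    \subseteq
    \left\{\exists (r,s],\ s-r\geq w\colon \lnot\PivotCondition_{(r,s]}\right\}.
\end{IEEEeqnarray*}

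Finally I would invoke Proposition~\ref{prop:exp-decay-not-pivot-condition-any-long-interval}, whose hypotheses (\SecurityCondition and $w>\frac{2\ln(\sqrt2\Th)}{\constPivotRace p}$) match ours, to bound the probability of the right-hand event by $2\Th^2\exp(-\constPivotRace pw)$; monotonicity of probability then transfers this bound to the left-hand event, giving the claim. There is no substantial obstacle here — the entire content is the logical step that a weak pivot's failure to be a full pivot must be witnessed by a \emph{long} interval. The one point to handle carefully is the direction of the negation: one must confirm that failing $\PivotCondition_{(r,s]}\lor(\Nint{r,s}=0)$ genuinely produces a long interval on which $\lnot\PivotCondition_{(r,s]}$ holds (the $\Nint{r,s}>0$ clause being discarded), so that the long-interval tail bound applies verbatim.
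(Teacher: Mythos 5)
Your proposal is correct and is essentially the paper's own argument: the paper likewise observes that a weak pivot which is not a pivot must be witnessed by a long interval $(r,s]\ni t$ with $s-r\geq w$ and $\lnot\PivotCondition_{(r,s]}$, and then applies Proposition~\ref{prop:exp-decay-not-pivot-condition-any-long-interval}. Your only addition is spelling out the negation and the event inclusion more explicitly, which the paper leaves implicit.
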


\begin{proof}
If some $t$ is a weak pivot (with $w \geq \frac{2 \ln(\sqrt2\Th)}{\constPivotRace  p}$)
and $t$ is not a pivot, then
$\exists (r,s] \ni t$
with $s-r \geq w$
such that\\$\lnot \PivotCondition_{(r,s]}$.
But the probability for this is bounded
accordingly by Proposition~\ref{prop:exp-decay-not-pivot-condition-any-long-interval}.
\end{proof}

\begin{proposition}
\label{prop:fixed-t-is-w-pivot-with-good-probability}
If $\SecurityCondition$,
then
for time horizon $\Th$,
\begin{IEEEeqnarray}{C}
    \forall t\colon\quad
    \Prob{ \WeakPivot{t} \mid \Unique{t} } \geq p_1
\IEEEeqnarraynumspace\end{IEEEeqnarray}
where $p_1 = \frac12 (1-\pa)^{2v-1}>0$ and $\frac{w}{2} > v = \frac{1}{\constPivotRace p}\ln\left(\frac{4(1+e^{-\constPivotRace p})}{(1-e^{-\constPivotRace p})^2}\right)$.
\end{proposition}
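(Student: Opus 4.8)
The plan is to condition on $\Unique{t}$ and to exhibit a sufficient sub-event of conditional probability at least $p_1$ on which $t$ is guaranteed to be a weak pivot. First I would localize every interval $(r,s]\ni t$ at the slot $t$: writing $a=t-1-r\ge 0$ for the number of slots strictly before $t$ and $c=s-t\ge 0$ for the number strictly after, and using that under $\Unique{t}$ slot $t$ contributes $+1$ to $\Lint{r,s}$ and $0$ to $\Nint{r,s}$, one obtains
\begin{IEEEeqnarray}{C}
\Lint{r,s}-\Nint{r,s}=\bigl(\Lint{r,t-1}-\Nint{r,t-1}\bigr)+1+\bigl(\Lint{t,s}-\Nint{t,s}\bigr).
\end{IEEEeqnarray}
Hence $\PivotCondition_{(r,s]}$ fails precisely when the left deficit $\Nint{r,t-1}-\Lint{r,t-1}$ plus the right deficit $\Nint{t,s}-\Lint{t,s}$ is at least $1$; since these deficits are integers, failure forces at least one of them to be $\ge 1$.

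Next I would separate the near and far regimes at radius $v$. For the near regime I use the escape clause $\Nint{r,s}=0$ in the definition of $\WeakPivot{t}$: if the window of $2v$ slots around $t$ carries no adversarial proposal, then every interval $(r,s]\ni t$ with both $a<v$ and $c<v$ lies inside this window, has $\Nint{r,s}=0$, and therefore satisfies the weak-pivot condition automatically. Call this event $E_{\mathrm{near}}$; given $\Unique{t}$ its probability is exactly $(1-\pa)^{2v-1}$, because slot $t$ is already adversary-free and the remaining $2v-1$ slots of the window are i.i.d.\ and each avoids an adversarial proposal with probability $1-\pa$. For the far regime I observe that whenever $E_{\mathrm{near}}$ holds but some interval around $t$ violates $\PivotCondition$, the deficit decomposition together with the adversary-free window forces a \emph{one-sided} deficit interval -- ending at $t-1$ or starting at $t+1$ -- of length $\ell\ge v$ on which $\PivotCondition$ fails.

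The far regime is then controlled by Proposition~\ref{prop:exp-decay-not-pivot-condition-some-long-interval}, which bounds the probability of each such one-sided interval of length $\ell$ by $2e^{-\constPivotRace p\ell}$. Summing the resulting geometric tails over $\ell\ge v$ bounds the probability of a far deficit interval by a quantity that the stated choice of $v$ renders at most $\tfrac12$: concretely $e^{-\constPivotRace p v}=\tfrac{(1-e^{-\constPivotRace p})^2}{4(1+e^{-\constPivotRace p})}$, whence $\tfrac{2e^{-\constPivotRace p v}}{1-e^{-\constPivotRace p}}=\tfrac{1-e^{-\constPivotRace p}}{2(1+e^{-\constPivotRace p})}\le\tfrac12$. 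Denote by $E_{\mathrm{far}}$ the complementary event that no such far deficit interval occurs. On $E_{\mathrm{near}}\cap E_{\mathrm{far}}$ every interval $(r,s]\ni t$ of length below $w$ is handled (short ones by $\Nint{r,s}=0$, long ones by $\PivotCondition_{(r,s]}$), so $\WeakPivot{t}$ holds, and the hypothesis $w>2v$ merely ensures the adversary-free window fits inside the weak-pivot window.

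Finally I would combine the two estimates into the product $\Prob{\WeakPivot{t}\mid\Unique{t}}\ge\Prob{E_{\mathrm{near}}\mid\Unique{t}}\cdot\Prob{E_{\mathrm{far}}\mid E_{\mathrm{near}},\Unique{t}}\ge\tfrac12\,(1-\pa)^{2v-1}=p_1$. The main obstacle I anticipate is making this last step fully rigorous: one must argue that conditioning on $\Unique{t}$ and on the window being adversary-free does not increase the probability of a far deficit interval -- intuitively, fixing $t$ to be uniquely successful and clearing the window of adversarial slots can only shrink the deficits -- so that the unconditional exponential bounds of Proposition~\ref{prop:exp-decay-not-pivot-condition-some-long-interval} still apply after conditioning and the clean product of the window probability with the $\tfrac12$ tail bound is legitimate. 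The calibration of $v$ that turns the geometric tail into $\tfrac12$ is then a routine computation, relying on the condition $\SecurityCondition$, which guarantees $\pu>\pa$ and hence the positive drift underlying the exponential decay.
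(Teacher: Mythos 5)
Your proof follows the same skeleton as the paper's: condition on an adversary-free window of radius $v$ around $t$ (your $E_{\mathrm{near}}$ is the paper's $E_1=\{\Nint{t-v,t+v}=0\}$, with the same conditional probability $(1-\pa)^{2v-1}$ given $\Unique{t}$), dispose of intervals inside the window via the $\Nint{r,s}=0$ escape clause, and control the remaining intervals by a union bound over the exponential tails of Proposition~\ref{prop:exp-decay-not-pivot-condition-some-long-interval}, with $v$ calibrated so that this tail is at most $\frac12$. The one substantive difference is in the far regime: your deficit decomposition reduces a violation of $\PivotCondition_{(r,s]}$ to a \emph{one-sided} deficit interval (ending at $t-1$ or starting at $t$) of length at least $v$, so your union bound runs over only two intervals per length $\ell\geq v$, whereas the paper union-bounds directly over all two-sided intervals $(r,s]\ni t$ not contained in the window, counting $2(u-v)-1$ of them for lengths $v<u\leq 2v$ and $u$ for longer ones. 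Your reduction is valid (under $E_{\mathrm{near}}$ a nonzero one-sided deficit forces the adversarial slot causing it to lie outside the window, which is what guarantees length $\geq v$) and yields a strictly smaller union bound. Two caveats. First, your tail arithmetic accounts for only one side: summing $2e^{-\constPivotRace p\ell}$ over $\ell\geq v$ for \emph{both} sides gives $\frac{4e^{-\constPivotRace pv}}{1-e^{-\constPivotRace p}}=\frac{1-e^{-\constPivotRace p}}{1+e^{-\constPivotRace p}}$ at the stated $v$, which is $\leq\frac12$ only because $\constPivotRace p=\epsilon_1^2 p/36\leq 1/36$ forces $1-e^{-\constPivotRace p}\leq 1/36$; the conclusion survives, but not via the identity you wrote. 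Second, the conditioning step you flag as the remaining obstacle is precisely the step the paper also leaves informal (``it is easy to see that $\Prob{E_2\mid E_1\cap\{\Unique{t}\}}\geq\Prob{E_2}$''); your monotone-coupling intuition --- clearing the window of adversarial slots stochastically increases $\Lint{\cdot}$ and decreases $\Nint{\cdot}$ on every interval overlapping it --- is the right way to make either version rigorous, so you are not behind the paper on that point.
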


\begin{proof}
For $v < w/2$ to be determined later, 
consider the events
\begin{IEEEeqnarray}{rCll}
    E_1 &\triangleq& \{& \Nint{t-v,t+v} = 0\}, \\
    E_2 &\triangleq& \{& \forall (r,s] \ni t, s-r < w, (r,s] \notin (t-v,t+v] \colon \nonumber \\
    &&&\PivotCondition_{(r,s]} \}.
    \IEEEeqnarraynumspace
\end{IEEEeqnarray}
Note that, $E_1 \cap E_2 \subseteq \{\WeakPivot{t}\}$ and $\Prob{E_1 \mid \Unique{t}} = (1-\pa)^{2v-1}$.

For bounding $\Prob{\lnot E_2}$, we will use a union bound by carefully counting the number of intervals $(r,s] \ni t$ such that $s-r < w$ and $(r,s] \notin (t-v,t+v]$.
Let $u=s-r$.
For $u \leq v$, note that $(r,s] \ni t$ implies that $(r,s] \in (t-v,t+v]$.
One can check that for $v+1 \leq u \leq 2v$, there are $2(u-v)-1$ intervals $(r,s] \ni t$ such that $(r,s] \notin (t-v,t+v]$. For $2v+1 \leq u < w$, all intervals $(r,s]\ni t$ are such that $(r,s] \notin (t-v,t+v]$, and there are $u$ such intervals.
Therefore, from Proposition~\ref{prop:exp-decay-not-pivot-condition-some-long-interval} and a union bound,
\begin{IEEEeqnarray}{rCl}
    \Prob{\lnot E_2} &\leq& \sum_{u=v+1}^{w-1} \sum_{\substack{(r,s]\ni t \colon \\ s-r=u \land \\ (r,s] \notin (t-v,t+v]} } \Prob{\lnot \PivotCondition{(r,s]}} \nonumber \\
    &\leq& \sum_{u=v+1}^{2v} (2(u-v)-1) 2e^{-\constPivotRace  pu} + \sum_{u=2v+1}^{w-1} u 2e^{-\constPivotRace  pu} \nonumber \\
    &\leq& \sum_{k=1}^{v} 2(2j-1)e^{-\constPivotRace  p(v+j)} + \sum_{u=2v+1}^{w-1} 2u e^{-\constPivotRace  pu} \nonumber \\
    &\leq& \sum_{k=1}^{v} 2(2j-1)e^{-\constPivotRace  p(v+j)} + \sum_{u=2v+1}^{\infty} 2u e^{-\constPivotRace  pu} \nonumber \\
    &=& \frac{2e^{-\constPivotRace p(v+1)}(1-(2v+1)e^{-\constPivotRace pv})}{1-e^{-\constPivotRace p}}  \nonumber \\
    && + \frac{4e^{-\constPivotRace p(v+2)}(1-e^{-\constPivotRace pv})}{(1-e^{-\constPivotRace p})^2}  \nonumber \\ 
    && + \frac{2(2v+1)e^{-\constPivotRace p(2v+1)}}{1-e^{-\constPivotRace p}} + \frac{2e^{-\constPivotRace p(2v+2)}}{(1-e^{-\constPivotRace p})^2} \nonumber \\
    &=& \frac{2e^{-\constPivotRace p(v+1)}}{1-e^{-\constPivotRace p}} + \frac{4e^{-\constPivotRace p(v+2)}-2e^{-\constPivotRace p(2v+2)}}{(1-e^{-\constPivotRace p})^2} \nonumber \\
    &\leq& \frac{2e^{-\constPivotRace p(v+1)}}{1-e^{-\constPivotRace p}} \left(1 + \frac{2e^{-\constPivotRace p}} {1-e^{-\constPivotRace p}} \right) \nonumber \\
    &\leq& \frac{2e^{-\constPivotRace pv}(1+e^{-\constPivotRace p})}{(1-e^{-\constPivotRace p})^2}
\end{IEEEeqnarray}

We may choose $v = \frac{1}{\constPivotRace p}\ln\left(\frac{4(1+e^{-\constPivotRace p})}{(1-e^{-\constPivotRace p})^2}\right)$, so that $\Prob{\lnot E_2} \leq \frac12$.

It is easy to see that $\Prob{E_2 \mid E_1 \cap \{\Unique{t}\}} \geq \Prob{E_2 \mid E_1} \geq \Prob{ E_2 }$.
\begin{IEEEeqnarray*}{rCl}
    \Prob{ \WeakPivot{t} \mid \Unique{t}}
    &\geq& \Prob{ E_1 \cap E_2 \mid \Unique{t}} \\
    &\geq& \Prob{ E_1 \mid \Unique{t}} \Prob{ E_2 }   \\
    &\geq& \frac12 (1-\pa)^{2v-1}.
\end{IEEEeqnarray*}
for the given choice of $v$.
\end{proof}

\begin{proposition}
\label{prop:fixed-t-within-T-is-w-pivot-with-good-probability}
If $\SecurityCondition$,
then
for horizon $\Th$ and
$w > \frac{2}{\constPivotRace p}\ln\left(\frac{4(1+e^{-\constPivotRace p})}{(1-e^{-\constPivotRace p})^2}\right)$,
\begin{IEEEeqnarray}{rl}
    \forall t\colon &
    \Prob{ \exists t' \in (t, t+\Tp]\colon \WeakPivot{t'} \land \Unique{t'} } \nonumber \\ &\geq
    1 - \exp(-\constPivotInterval  \Tp/w), 
\IEEEeqnarraynumspace\end{IEEEeqnarray}
with $\constPivotInterval  = \frac{p_1\pu}{2}$.
\end{proposition}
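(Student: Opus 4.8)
The plan is to turn the existence claim into a product of independent trials. The starting point is Proposition~\ref{prop:fixed-t-is-w-pivot-with-good-probability}, which lower bounds $\Prob{\WeakPivot{t'} \mid \Unique{t'}}$ by $p_1$ for every fixed slot $t'$; combined with $\Prob{\Unique{t'}} = \pu$ this gives $\Prob{\WeakPivot{t'} \land \Unique{t'}} \ge p_1 \pu$ for each $t'$. The difficulty is that the events $\WeakPivot{t'} \land \Unique{t'}$ for nearby slots $t'$ are highly correlated, so one cannot simply multiply failure probabilities.

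First I would observe that the event $\WeakPivot{t'} \land \Unique{t'}$ is measurable with respect to only a bounded window of the i.i.d.\ variables $\{H_{t''}, A_{t''}\}$. Indeed $\Unique{t'}$ depends on slot $t'$ alone, while $\WeakPivot{t'}$ is a condition quantified over intervals $(r,s] \ni t'$ with $s - r < w$; every such interval satisfies $t' - w < r < t' \le s < t' + w$, so all slots it references lie in the window $(t'-w, t'+w)$. Consequently, if two candidate slots are at distance at least $2w$, their determining windows are disjoint and the corresponding events are independent.

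Accordingly, within $(t, t+\Tp]$ I would select slots $t_1' < t_2' < \dots < t_n'$ spaced at least $2w$ apart, which allows $n \ge \Tp/(2w)$ of them. The events $\{\WeakPivot{t_k'} \land \Unique{t_k'}\}_{k=1}^n$ are then mutually independent, and using $\Prob{\WeakPivot{t_k'} \land \Unique{t_k'}} \ge p_1 \pu$ together with $1 - x \le e^{-x}$ gives
\[
    \Prob{\bigcap_{k=1}^n \lnot\bigl(\WeakPivot{t_k'} \land \Unique{t_k'}\bigr)} \le (1 - p_1\pu)^n \le \exp(-p_1\pu n) \le \exp\!\left(-\frac{p_1\pu}{2}\,\frac{\Tp}{w}\right).
\]
Since the complement of the intersection on the left is contained in the event $\{\exists t' \in (t, t+\Tp]\colon \WeakPivot{t'} \land \Unique{t'}\}$, the bound follows with $\constPivotInterval = p_1\pu/2$, provided the lower bound on $w$ is at least the threshold required by Proposition~\ref{prop:fixed-t-is-w-pivot-with-good-probability}, which is precisely the hypothesis assumed here.

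The main obstacle is the correlation between weak-pivot events at neighboring slots; the whole argument hinges on the localization observation that lets us thin the candidate slots to a $2w$-separated, hence independent, subfamily. After that, only the routine independence estimate and the bookkeeping of the spacing (together with the minor rounding in the count $n$, which is absorbed into the constant) remain.
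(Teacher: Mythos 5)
Your proposal is correct and follows essentially the same route as the paper: the paper likewise thins the interval $(t,t+\Tp]$ to $k=\lfloor\Tp/(2w)\rfloor$ candidate slots $t_i=t+(2i+1)w$ spaced $2w$ apart, argues these events are independent by construction (disjoint determining windows), applies the per-slot bound $p_1\pu$ from Proposition~\ref{prop:fixed-t-is-w-pivot-with-good-probability}, and concludes via $(1-p_1\pu)^k\leq\exp(-p_1\pu\Tp/(2w))$. Your explicit justification of independence via the localization of $\WeakPivot{t'}$ to the window $(t'-w,t'+w)$ is the same idea the paper leaves implicit in ``by construction.''
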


\begin{proof}
Let $k$ be the largest integer such that $\Tp\geq 2wk$. For $i=0,...,(k-1)$, define $t_i=t+(2i+1)w$ and
\begin{IEEEeqnarray}{rCl}
    E_i &\triangleq& \{\WeakPivot{t_i} \land \Unique{t_i}\} \\
    E &\triangleq& \{ \exists t' \in (t, t+\Tp]\colon \WeakPivot{t'} \land \Unique{t'} \}.
\end{IEEEeqnarray}

Thus, we have $\bigcup_{i=0}^{k-1} E_i \subseteq E$,
and by construction $E_i$ are independent.
Hence,
\begin{IEEEeqnarray}{rCl}
    \Prob{ E}
    &\geq& \Prob{\bigcup_{i=0}^{k-1} E_i}
    = 1 - \Prob{\bigcap_{i=0}^{k-1} \lnot E_i}   \nonumber\\
    &\geq& 1 - (1-p_1\pu)^k \nonumber \\
    &\geq& 1 - \exp( - p_1\pu k) \nonumber \\
    &=& 1 - \exp( - p_1\pu \Tp/2w),
\IEEEeqnarraynumspace\end{IEEEeqnarray}
where we have used Proposition~\ref{prop:fixed-t-is-w-pivot-with-good-probability}.
\end{proof}

\begin{proposition}
\label{prop:forall-t-within-T-is-w-pivot-with-good-probability}
If $\SecurityCondition$,
then
for horizon $\Th$,
$w > \frac{2}{\constPivotRace p}\ln\left(\frac{4(1+e^{-\constPivotRace p})}{(1-e^{-\constPivotRace p})^2}\right)$ and $\Tp > \frac{w\ln(\Th)}{\constPivotInterval }$,
\begin{IEEEeqnarray}{C}
    \Prob{ \forall t\colon \exists t' \in (t, t+\Tp]\colon \WeakPivot{t'} \land \Unique{t'} } \nonumber\\
    \geq 1 - \Th\exp(-\constPivotInterval  \Tp/w).
\IEEEeqnarraynumspace\end{IEEEeqnarray}
\end{proposition}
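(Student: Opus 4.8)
The plan is to lift the per-slot guarantee of Proposition~\ref{prop:fixed-t-within-T-is-w-pivot-with-good-probability} to a statement holding simultaneously for all $t$ by a straightforward union bound over the time horizon. For each slot $t$ I would introduce the event $E_t \triangleq \{\exists t' \in (t, t+\Tp]\colon \WeakPivot{t'} \land \Unique{t'}\}$, so that the event appearing in the statement is exactly $\bigcap_t E_t$. Under the hypotheses $\SecurityCondition$ and $w > \frac{2}{\constPivotRace p}\ln\left(\frac{4(1+e^{-\constPivotRace p})}{(1-e^{-\constPivotRace p})^2}\right)$, Proposition~\ref{prop:fixed-t-within-T-is-w-pivot-with-good-probability} gives, for every fixed $t$, the complementary bound $\Prob{\lnot E_t} \le \exp(-\constPivotInterval \Tp/w)$.

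Next I would observe that the complement of the target event is $\bigcup_t \lnot E_t$. Since $t$ ranges over at most $\Th$ slots, a union bound yields $\Prob{\bigcup_t \lnot E_t} \le \sum_t \Prob{\lnot E_t} \le \Th\exp(-\constPivotInterval \Tp/w)$, and taking complements gives the claimed lower bound $1 - \Th\exp(-\constPivotInterval \Tp/w)$ on $\Prob{\bigcap_t E_t}$. I emphasize that no independence among the $E_t$ is needed, since the union bound is valid for arbitrarily dependent events; this is why the argument avoids any of the careful interval-counting that was required inside Proposition~\ref{prop:fixed-t-is-w-pivot-with-good-probability}.

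This argument is essentially immediate, so I do not anticipate a genuine obstacle; the only points needing care are bookkeeping. First, the per-$t$ bound from Proposition~\ref{prop:fixed-t-within-T-is-w-pivot-with-good-probability} must hold uniformly over all $t$ in range, which it does, as that proposition is stated for every $t$. Second, the extra hypothesis $\Tp > \frac{w\ln(\Th)}{\constPivotInterval}$ does not enter the derivation of the inequality itself but is precisely the condition making the conclusion nontrivial: it is equivalent to $\Th\exp(-\constPivotInterval \Tp/w) < 1$, so that the displayed guarantee is a meaningful positive lower bound on the probability rather than a vacuous one. Chaining this proposition into Lemma~\ref{lem:frequent_pivots_prob_bound} (via Proposition~\ref{prop:all-w-pivots-are-pivots}, which upgrades weak pivots to genuine pivots except on a negligible event) then yields the negligible failure probability of $\FrequentPivots$.
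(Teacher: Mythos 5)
Your proposal is correct and matches the paper's proof, which is exactly a union bound over the at most $\Th$ time slots applied to the per-slot bound of Proposition~\ref{prop:fixed-t-within-T-is-w-pivot-with-good-probability}. Your added remarks on dependence being irrelevant and on the role of the hypothesis $\Tp > \frac{w\ln(\Th)}{\constPivotInterval}$ are accurate bookkeeping but do not change the argument.
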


\begin{proof}
By a union bound over all $\Th$ possible time slots,
and using Proposition~\ref{prop:fixed-t-within-T-is-w-pivot-with-good-probability}.
\end{proof}

\subsubsection{Proof of Lemma~\ref{lem:frequent_pivots_prob_bound}}

\begin{proof}
Finally, to prove Lemma~\ref{lem:frequent_pivots_prob_bound}, let
\begin{IEEEeqnarray*}{rCl}
    E_1 &\triangleq&
        \{ \forall t\colon \exists t' \in (t, t+\Tp]\colon \WeakPivot{t'}\land \Unique{t'} \}   \\
    E_2 &\triangleq&
        \{ \forall t\colon \WeakPivot{t} \Rightarrow \TPivot{t} \}   \\
    E &\triangleq&
        \{ \forall t\colon \exists t' \in (t, t+\Tp]\colon \TPivot{t'}\land \Unique{t'} \}.
\end{IEEEeqnarray*}
Note that $E_1 \cap E_2 \subseteq E$.
Then we apply a union bound
on the probabilities from Propositions~\ref{prop:forall-t-within-T-is-w-pivot-with-good-probability} and~\ref{prop:all-w-pivots-are-pivots}.
\begin{IEEEeqnarray}{rCl}
    \Prob{ \lnot E }
    &\leq& \Prob{ \lnot E_1 } + \Prob{ \lnot E_2 }
    \leq 2\Th^2e^{-\constPivotRace pw}+ \Th e^{-\constPivotInterval  \Tp/w}.
\end{IEEEeqnarray}

Let $\kappa' = \kappa + \ln\Th$. 
Pick $w$ such that $w = \frac{2 \ln(\sqrt2\Th)+\Omega(\kappa)}{\constPivotRace  p}$. This ensures that the probability $2\Th^2 e^{-\constPivotRace  p w}$ corresponding to having more adversarial than honest slots in some interval of size at least $w$,
is $\negl(\kappa)$.

Finally, we pick $\gamma$ so that the probability $\Th e^{-\constPivotInterval  \Tp/w}$ corresponding to not finding a pivot slot in some interval of $\Tp$ slots, is $\negl(\kappa)$. 
Therefore we get
$\gamma \geq \frac{\ln(\Th)+\Omega(\kappa)}{\constPivotInterval }w$.
Combining these, we have
$\gamma \geq \frac{\Omega((\ln(\Th)+\kappa)^2)}{\constPivotRace \constPivotInterval  p}$.
Choose $\constFrequentPivots = \constPivotRace \constPivotInterval $
\end{proof}

\subsection{Proof of Lemma~\ref{lem:num_blocks_bound}}
\label{sec:appendix-proof-num-blocks-bound}

\begin{proof}
Define the event $F_t$ as
\begin{IEEEeqnarray}{C}
    \max_{r<t \colon \Unique{r} \land (\Nint{r,t} \geq \Lint{r,t})} \Nint{r,t} \geq \bwslot.
\end{IEEEeqnarray}
This event can be equivalently expressed as
\begin{IEEEeqnarray}{C}
    \exists r < t \colon
        \Unique{r}
        \land (\Nint{r,t} \geq \Lint{r,t})
        \land (\Nint{r,t} \geq \bwslot).
        \IEEEeqnarraynumspace
\end{IEEEeqnarray}

The event $\{\lnot\FewBlockOpps\}$ can be expressed as $\bigcup_{t\leq \Th} F_t$.
Then for some fixed $\Td$,
\begin{IEEEeqnarray}{rCl}
    \Prob{F_t} &\leq& \Prob{\bigcup_{r=0}^{t-1} \left\{\Nint{r,t} \geq \Lint{r,t} \land \Nint{r,t} \geq \bwslot \right\}} \nonumber \\
    &\leq& \sum_{r=0}^{t-\Td} \Prob{ \Nint{r,t} \geq \Lint{r,t} } + \sum_{r=t-\Td}^{t-1} \Prob{\Nint{r,t} \geq \bwslot} \nonumber \\
    &\leq& \sum_{k=\Td}^{\infty} 2 \exp\left(-\constPivotRace pk\right) + \Td \exp\left(-\frac{\epsilon_2^2}{2+\epsilon_2} p_A \Td\right) \nonumber \\
    &=& \frac{2\exp(-\constPivotRace  p\Td)}{1-\exp(-\constPivotRace  p)} + \Td \exp\left(-\frac{\epsilon_2^2}{2+\epsilon_2} p_A \Td\right) \nonumber \\
    &\leq& 2\Td \exp\left(-\constNumBlocks  p \Td\right),
\end{IEEEeqnarray}
for $\Td \geq \frac{2}{1-\exp(-\constPivotRace  p)}$ and $\constNumBlocks  = \min\left\{ \constPivotRace , \frac{\epsilon_2^2}{\epsilon_2+2}\frac{\pa}{p}\right\}$. By using a union bound
over the execution horizon $\Th$, we get
\begin{IEEEeqnarray}{C}
    \Prob{\lnot \FewBlockOpps} \leq
    2 \Th \Td \exp(-\constNumBlocks  p \Td)
    \leq 2 \Th^2 \exp(-\constNumBlocks  p \Td)
    \IEEEeqnarraynumspace
\end{IEEEeqnarray}
We then set $\Td = \frac{2\ln(\sqrt{2}\Th) + \Omega(\kappa)}{\constNumBlocks  p}$ to make this probability $\negl(\kappa)$.
\end{proof}

\subsection{Proof of Lemma~\ref{prop:unique_download_longest_header_chain_rule}}
\label{sec:appendix-proof-equi-avoid}

\begin{proof}
Let $t_1,...,t_m$ be the uniquely successful slots in $(0,\Th]$.
Let $b_j$ be the 
block from slot $t_j$ for some $1\leq j\leq m$.

For induction, assume that $\grprom{(0,t_j-1]}$ holds. Using this, we will show that $\grprom{(0,t_{j+1}-1]}$ holds. For the base case, this is true for $j=1$ since $t_1$ is the first uniquely successful slot by definition.
Suppose that there is a chain $\Chain'$ 
in the header tree of an honest node in slot $t_j$
such that $\len{\Chain'}\geq\len{b_j}$. Note that the tip of $\Chain'$ can not be a unique block because unique blocks have increasing heights as per Lemma~\ref{prop:chain_growth}. Therefore the tip of $\Chain'$ is from an adversarial slot. Consider such a chain $\Chain'$ ending in a block from an adversarial slot $s_j \leq t_j$. Let $r_j$ be the last uniquely successful slot such that the block $b_j'$ from that slot is in $\Chain'$. Then,
\begin{IEEEeqnarray}{C}
    \len{\Chain'} \leq \len{b_j'} + \Nint{r_j,s_j}.
\end{IEEEeqnarray}
From the assumption of $\grprom{(0,t_j-1]}$ and part~(\ref{item:inc_height}) of  Lemma~\ref{prop:chain_growth}, 
\begin{IEEEeqnarray}{C}
\label{eq:cgeq2_eqavoid}
    \len{b_j} \geq \len{b_j'} + \Lint{r_j,t_j}.
\end{IEEEeqnarray}
Since $\len{\Chain'}\geq\len{b_j}$, this would mean that $\Nint{r_j,s_j} \geq \Lint{r_j,s_j}$.
As a block from a uniquely successful slot, $b_j'$ was downloaded by all honest nodes within slot $r_j$. Therefore, there are at most $\Nint{r_j,s_j}$ blocks on the chain $\Chain'$ that are yet to be downloaded. Therefore the number of blocks to be downloaded by each honest node on $\Chain'$ is at most 
\begin{IEEEeqnarray}{C}
    \max_{r_j<s_j \colon \Unique{r_j} \land \left(\Nint{r_j,s_j} \geq \Lint{r_j,t_j}\right)} \Nint{r_j,s_j} = W_{s_j,t_j}.
\end{IEEEeqnarray}
Next, we count the number of such chains $\Chain'$ with distinct block production opportunities at the tip. Due to the equivocation avoidance policy, the adversary can make honest nodes download at most one chain per adversarial block production opportunity in slots $s_j \leq t_j$. The total number of blocks to be downloaded in all these chains combined is $\sum_{s_j<t_j} A_{s_j} W_{s_j,t_j}$.

Finally, from the proof of Lemma~\ref{lem:loner_download}, we note that the prefix of $b_j$ has at most $W_{t_j-1,t_j-1}$ blocks that need to be downloaded by any honest node. Therefore, the total number of blocks that any honest node needs to download before downloading $b_j$ is at most
\begin{IEEEeqnarray}{C}
    W_{t_j-1,t_j-1} + \sum_{s_j \leq t_j} A_{r_j} W_{s_j,t_j}.
\end{IEEEeqnarray}
From the definition of $\BoundedEq$, this is less than $\bwslot$. Therefore, every honest node can download $b_j$ within the time slot $t_j$. This completes the induction step by showing that $\grprom{(0,t_{j+1}-1]}$.
For $j=m$, we conclude with $\grpromempty$ as required.
\end{proof}

\subsection{Proof of Lemma~\ref{lem:bounded_equivocations}}
\label{sec:appendix-proof-bounded-equivocations}

\begin{proof}
From Lemma~\ref{lem:num_blocks_bound}, we already know that for $N = \pa T(1+\epsilon_2)$ and $\Td > \max\left\{ \frac{2}{1-\exp(-\constPivotRace p)}, \frac{2\ln(\sqrt{2}\Th)}{\constNumBlocks  p} \right\}$, we have
\begin{IEEEeqnarray}{C}
    \Prob{\lnot(\forall t\leq \Th: W_{t,t} < N)} \leq 2\Th^2 e^{-\constNumBlocks  p \Td}.
\end{IEEEeqnarray}
It is easy to see that for any given sample path (i.e. realization of $\exec$) and any $s \leq t$, $W_{s,t} \leq W_{t,t}$. Next, we can show that there exists some $\Tb$ such that $W_{s,t} = 0$ for all $s < t-\Tb$ and for all $t$, so that we have the following with overwhelming probability:
\begin{IEEEeqnarray}{rCl}
    W_{t-1,t-1} + \sum_{s \leq t} A_s W_{s,t} &\leq& N + N \Tb \beta \rho (1+\epsilon).
\end{IEEEeqnarray}
This is because in any $\Tb$ slots, there are at most $\Tb \beta \rho (1+\epsilon)$ adversarial block production opportunities with probability at least $1-\Th\exp\left(-\frac{\epsilon^2\beta\rho\Tb}{\epsilon+2}\right)$ (through a Chernoff bound and union bound).

To show that $W_{s,t} = 0$ for all $s < t-\Tb$ for a fixed $t$, 
\begin{IEEEeqnarray}{rl}
    &\Prob{ \exists s < t-\Tb: W_{s,t} > 0 } \\
    \leq& \Prob{ \exists s < t-\Tb, \exists r<s: \Nint{r,s} \geq \Lint{r,t} } \\
    \leq& \Prob{ \exists r < t-\Tb: \Nint{r,t-\Tb} \geq \Lint{r,t} } \\
    \leq& \Prob{ \exists r < t-\Tb: \Nint{r,t-\Tb} \geq \Lint{r,t-\Tb} + \Lint{t-\Tb,t}} \IEEEeqnarraynumspace \\
    \label{eq:random_walk_prob_reference_eqavoid}
    \leq& \Prob{ \exists r < t-\Tb: \Nint{r,t-\Tb} \geq \Lint{r,t-\Tb} + L} \IEEEeqnarraynumspace \\
    & + \Prob{ \Lint{t-\Tb,t} < L} \IEEEeqnarraynumspace
\end{IEEEeqnarray}
where we choose $L=\pu\Tb(1-\epsilon)$.
The second term is bounded by a Chernoff bound
\begin{IEEEeqnarray}{rl}
    \Prob{ \Lint{t-\Tb,t} < L} \leq&  \exp\left(-\frac{\epsilon^2\pu\Tb}{2}\right).
\end{IEEEeqnarray}

For calculating the first term, let 
\begin{IEEEeqnarray*}{c}
    X_n = L + \Lint{t-\Tb-n, t-\Tb} - \Nint{t-\Tb-n, t-\Tb}
\end{IEEEeqnarray*}
for $n \geq 0$ be a random walk. Let $p_{l} = \Prob{\exists n : X_n \leq 0 \mid X_0 = l}$, i.e. the probability that the random walk ever hits $0$ after starting from $l$.
We can observe that $p_1 = 1-\pu + \pu p_2$. We can also note that due to the translation invariance of the random walk,
\begin{IEEEeqnarray*}{rCl}
    p_2 &=& \Prob{\exists n : X_n \leq 1 \mid X_0 = 2} \Prob{\exists n > n_1 : X_{n_1} \leq 0 \mid X_{n_1} = 1} \\ &=& \Prob{\exists n : X_n \leq 0 \mid X_0 = 1}^2 = p_1^2.
\end{IEEEeqnarray*}
Therefore, we obtain $p_1 = \frac{1-\pu}{\pu}$ by solving $p_1 = 1-\pu + \pu p_1^2$. Finally, we note using the same logic as above that $p_L = p_1^L = \left(\frac{1-\pu}{\pu}\right)^L$ which is the required probability in the first term in \eqref{eq:random_walk_prob_reference_eqavoid}.

Therefore, we have
\begin{IEEEeqnarray}{rCl}
    \Prob{ \exists s \leq t-\Tb: W_{s,t} > 0 } &\leq& \left(\frac{1-\pu}{\pu}\right)^{\pu\Tb(1-\epsilon)} + \exp\left(-\frac{\epsilon^2\pu\Tb}{2}\right)
    \IEEEeqnarraynumspace
\end{IEEEeqnarray}
Finally, by a union bound over the required probabilities, we have for $\downloadLimit = \pa T (1+\beta\rho\Tb(1+\epsilon))(1+\epsilon_2)$,
\begin{IEEEeqnarray}{l}
    \Prob{\lnot \BoundedEq} \leq \Th\left(\frac{1-\pu}{\pu}\right)^{\pu\Tb(1-\epsilon)} \\ 
    + \Th\exp\left(-\frac{\epsilon^2\pu\Tb}{2}\right) + \Th\exp\left(-\frac{\epsilon^2\beta\rho\Tb}{\epsilon+2}\right) + 2\Th^2\exp(-\constNumBlocks  pT) \IEEEeqnarraynumspace \\
    \leq 5\Th^2 \exp(-\constBoundedEq  p\Tb).
\end{IEEEeqnarray}
Here, we choose $\Tb = T$ and
\begin{IEEEeqnarray}{c}
    \constBoundedEq  = \max\left\{ \constNumBlocks ,  \frac{\pu(1-\epsilon)}{p}\ln\left(\frac{\pu}{1-\pu}\right), \frac{\epsilon^2\pu}{2p}, \frac{\epsilon^2\beta\rho}{(\epsilon+2)p} \right\}.
\end{IEEEeqnarray}

Finally, we set $\Tb = \frac{2\ln(\sqrt{5}\Th) + \Omega(\kappa)}{\constBoundedEq  p}$ so that the required probability is negligible.

\end{proof}

\subsection{Proofs for Throughput and Bandwidth Consumption}
\label{sec:app-tp-bw-util-proofs}

\subsubsection{Proof of Lemma~\ref{lem:throughput}}

\begin{proof}

Due to Lemma~\ref{prop:chain_growth}, in any interval of slots $(t_1,t_2]$, the downloaded longest chain of every honest node grows by at least $\Lint{t_1,t_2}$ (even though all blocks on the chain may not be honest).
Therefore, corresponding to the interval $(t_1,t_2]$ with $t_2 \geq t_1 + T$, at least $\pu T(1-\epsilon)$ blocks are added to every node's downloaded longest chain with probability
\begin{IEEEeqnarray}{l}
    \Prob{\Lint{t_1,t_2} \geq \pu T (1-\epsilon)} \nonumber \\
    \geq \Prob{\Lint{t_1,t_2} \geq \pu (t_2-t_1) (1-\epsilon)} \geq 1 - \exp\left(\frac{\epsilon^2}{2}\pu T\right). \IEEEeqnarraynumspace
\end{IEEEeqnarray}
Now let $N=\pu T(1-\epsilon)$. Consider any $N$ consecutive blocks in a valid blockchain. Let $t_1'$ and $t_2'$ be the time slots corresponding to the first and last blocks respectively in this set, and let $T'=t_2'-t_1'$. From the above probability bound, we have $T' \leq T = \frac{N}{\pu(1-\epsilon)}$. 
Also, with probability at least $1-\exp\left(-\frac{\epsilon'^2}{2+\epsilon'}\pa T'\right)$, there are at most $\pa T'(1+\epsilon')$ adversarial slots in $(t_1',t_2']$, hence there are at most $\pa T'(1+\epsilon')$ adversarial blocks in the $N$ consecutive blocks.

Therefore, corresponding to every interval $(t_1,t_2]$, there are at least $\pu T(1-\epsilon) - \pa T(1+\epsilon') = (\pu-\pa)T(1-\epsilon_4)$ honest blocks in any node's downloaded longest chain with probability at least $1-\exp(-\alpha_4 T)$ for some constant $\alpha_4$. Finally, we note that $\theta = \pu-\pa = 2\pu - p$.
\end{proof}

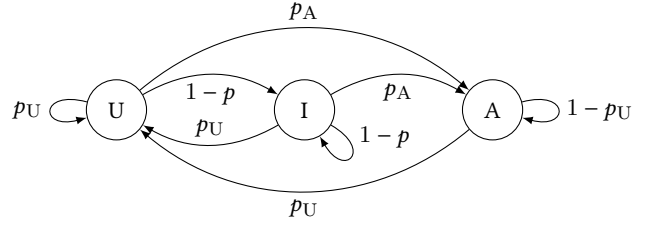
\begin{figure}
    \centering
    \begin{tikzpicture}[x=2.5cm,y=2cm]

        \node [circle,draw,minimum width=0.8cm,minimum height=0.8cm] (nU) at (-1,0) {U};
        \node [circle,draw,minimum width=0.8cm,minimum height=0.8cm] (nI) at (0,0) {I};
        \node [circle,draw,minimum width=0.8cm,minimum height=0.8cm] (nA) at (1,0) {A};
        
        \draw [-latex] (nU) to [bend left=40] node [midway,above] {$\pa$} (nA);
        \draw [-latex] (nA) to [bend left=40] node [midway,below] {$\pu$} (nU);
        
        \draw [-latex] (nU) to [bend left] node [midway,below] {$1-p$} (nI);
        \draw [-latex] (nI) to [bend left] node [midway,above] {$\pu$} (nU);
        
        \draw [-latex] (nI) to [bend left] node [midway,below] {$\pa$} (nA);
        
        \draw [-latex] (nI) to [out=330,in=300,looseness=8] node [midway,above right] {$1-p$} (nI);
        \draw [-latex] (nU) to [out=165,in=195,looseness=8] node [left] {$\pu$} (nU);
        \draw [-latex] (nA) to [out=15,in=-15,looseness=8] node [right] {$1-\pu$} (nA);

    \end{tikzpicture}%
    \vspace{-0.5em}%
    \caption{An upper bound on the bandwidth utilization of our protocol can be calculated from the stationary distribution of this Markov chain}
    \label{fig:markov_chain}
\end{figure}

\subsubsection{Proof of Lemma~\ref{lem:passive_bandwidth}}

\begin{proof}
Consider time slots $t_1$ and $t_2 \geq t_1 + T$. Due to the safety of $\protocol$, we know that $\LOG{i}{t_1} \preceq \LOG{i'}{t_2}$ for any honest nodes $i,i'$. The last block in $\LOG{i}{t_1}$ must have a time slot $t_1' \geq t_1 - 2\Tconf$ because between $t_1-2\Tconf$ and $t_1-\Tconf$, there is at least one unique pivot slot which contributes a block to $\LOG{i}{t_1}$. Therefore $\LOG{i'}{t_2} \setminus \LOG{i}{t_1}$ contains only blocks with time slots in the interval $(t_1',t_2']$ where $t_2'=t_2-\Tconf$. Note that blocks in the confirmed chain must have increasing time slots, so their number is limited by the number of slots with block proposal, \ie $\Bint{t_1',t_2'}$. The average number of slots with block proposal in the interval $(t_1',t_2']$ is $p(t_2'-t_1') \leq p(t_2-t_1+\Tconf) = p(T+\Tconf)$.
Then by a Chernoff bound,
\begin{IEEEeqnarray}{rCl}
    \Prob{\Bint{t_1',t_2'} > pT(1+\epsilon_5)} 
    &\leq& \exp\left( -\alpha_5 T \right)
\end{IEEEeqnarray}
for sufficiently large $T>\Tconf$ and some constant $\alpha_5$.
\end{proof}

\subsubsection{Proof of Lemma~\ref{lem:active_bandwidth}}

\begin{proof}
Consider the Markov chain shown in Figure~\ref{fig:markov_chain} with three states---$\mathrm{U}$ corresponding to a uniquely successful slot, $\mathrm{I}$ corresponding to a slot without a block proposal such that the most recent block proposal was a uniquely successful slot, and $\mathrm{A}$ corresponding to adversarial slots or slots without block proposals such that the most recent block proposal was an adversarial slot.

The stationary distribution of this Markov chain is
\begin{IEEEeqnarray}{C}
    \pi_{\mathrm{U}} = \pu,
    \quad
    \pi_{\mathrm{I}} = \frac{\pu(1-p)}{p},
    \quad
    \pi_{\mathrm{A}} = \frac{\pa}{p}.
\end{IEEEeqnarray}

Note that in time slots corresponding to the $\mathrm{I}$ (idle) state, there are no fresh blocks to be downloaded because the most recent block proposal was a unique honest block which was downloaded within $1$ slot. Therefore, on average, in $\idleslots$ fraction of time slots, every honest node's bandwidth remains idle, where 
\begin{IEEEeqnarray}{rCl}
    \idleslots \geq \pi_{\textrm{I}}
    &=& \frac{\pu(1-p)}{p} \nonumber \\
    &=& \frac{1}{2}(1-p)(1+\epsilon_1)\nonumber \\
    &\geq& \left(\frac{1-p}{2}\right).
\end{IEEEeqnarray}
(For $\epsilon_1$, see the proof of Theorem~\ref{thm:security}.)
Finally, by a Chernoff bound, the probability that for a given $t_1,t_2$, there are at least $\idleslots T(1-\epsilon_6)$ slots in the $\mathrm{I}$ state in the interval $(t_1,t_2]$ is at least $1-\exp\left(-\frac{\epsilon_6^2}{2}\idleslots T\right)$.
\end{proof}

\section{Security of Parallel Chains}
\label{sec:appendix-proofs-parallel}
The below security theorem holds for any download rule which satisfies the requirement in Theorem~\ref{thm:security}, and in addition leaves a fraction $\idleslots \in (0,1)$ of the total bandwidth unutilized (\cf Lemma~\ref{lem:active_bandwidth}). The latter requirement can be easily achieved for any download rule for any desired $\idleslots \in (0,1)$ by increasing the time slot duration by a factor of $\frac{1}{\idleslots}$ and only downloading blocks in the first $\idleslots$ fraction of the time slot.

Also note that the below theorem holds under a static corruption adversary (\ie, the adversary decides which nodes to corrupt before the randomness of the protocol is drawn).

\begin{theorem}
    For all $K \in \mathbb{N}$ and download rules $\dlrule$ such that
    \begin{IEEEeqnarray}{c}
        \Prob{\exec \colon \lnot \grpromempty(\exec,\dlrule)} \leq \negl(\kappa), \IEEEeqnarraynumspace
    \end{IEEEeqnarray}
    if $\SecurityConditionFull$ for some $\epsilon_1 \in (0,1)$, $\tau=\Omega(\kappa+\ln\Th)$, $\Tconf=\Omega((\kappa+\ln\Th)^2)$, 
    Lemma~\ref{lem:active_bandwidth} holds for some $\idleslots \in (0,1)$,
    and $m = 1 + \frac{\idleslots}{\bwpassive}C\tau(1-\epsilon_7)$,
    then the protocol $\protocolpc$ with the download rule $\dlrule$
    is secure
    with parameter $\Tlive = \Omega((\kappa+\ln\Th)^2)$.
\end{theorem}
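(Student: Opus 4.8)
The plan is to reduce the security of $\protocolpc$ to the per-instance security already established in Theorem~\ref{thm:security}, and then to show that a node's bandwidth budget suffices both to actively secure its primary chain and to passively track the confirmed prefixes of all $m-1$ secondary chains, so that the merged ledger inherits safety and liveness. First I would establish per-instance security. Since the adversary is static, it commits to its corrupt set before $\Fparallel$ (Algorithm~\ref{algo:pseudocode-Ftree-parallel}) draws the uniformly random equipartition, so the number of corrupt nodes assigned to any fixed instance is hypergeometric with mean $\beta N/m$. In the regime $N\to\infty$ (with $m$ fixed once $\kappa$ is fixed), a Chernoff/Hoeffding bound shows that every instance simultaneously has corrupt fraction at most $\beta$ up to an arbitrarily small additive slack, except with probability vanishing in $N$; this slack is absorbed by $\epsilon_1$ in the condition $\SecurityConditionFull$. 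Hence each $\Pi_{\index}$ meets the hypotheses of Theorem~\ref{thm:security} with the same $\rho,\tau,\Tconf$, so a union bound over the $\poly$ many instances shows that with overwhelming probability every instance is safe and live with $\Tlive=\Omega((\kappa+\ln\Th)^2)$.

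Second I would prove bandwidth feasibility, which is the crux. Fix an honest node and an interval of $T$ slots. By Lemma~\ref{lem:active_bandwidth} the primary (active) download is idle for at least $\idleslots T\tau(1-\epsilon_6)$ seconds, during which the full bandwidth $\bwtime$ is free for secondary downloads, giving a secondary capacity of at least $\idleslots\bwtime T\tau(1-\epsilon_6)$ blocks. By Lemma~\ref{lem:passive_bandwidth} each secondary chain contributes at most $\bwpassive T(1+\epsilon_5)$ confirmed blocks, so the total secondary demand is at most $(m-1)\bwpassive T(1+\epsilon_5)$ blocks. Substituting $m-1=\frac{\idleslots}{\bwpassive}\bwtime\tau(1-\epsilon_7)$ reduces the feasibility inequality to $(1-\epsilon_6)\geq(1-\epsilon_7)(1+\epsilon_5)$, which holds for suitably small $\epsilon_5,\epsilon_6,\epsilon_7$. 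Because demand sits below capacity by a constant multiplicative slack, and Algorithm~\ref{algo:pseudocode-parallel-pos-lc} downloads missing confirmed secondary blocks oldest-slot-first, the backlog of undownloaded confirmed secondary blocks stays bounded, and hence the joint confirmation frontier $\maxt$ lags real time by only $O(\Tconf)$ with overwhelming probability (via a union bound over intervals of the Chernoff-type bounds in Lemmas~\ref{lem:active_bandwidth} and~\ref{lem:passive_bandwidth}).

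Finally, safety and liveness of the merged ledger follow. For safety, each instance's confirmed chain is consistent across honest nodes and times (per-instance safety), and the ledger is the deterministic image---order by slot then by index---of the confirmed, downloaded blocks with time slot at most $\maxt$. Thus for two honest views the one with the smaller $\maxt$ produces a prefix, so $\LOG{i}{t}\preceq\LOG{j}{t'}$ or $\LOG{j}{t'}\preceq\LOG{i}{t}$. For liveness, a transaction input before slot $t$ is routed to a single instance, which confirms it within $\Tlive=\Omega((\kappa+\ln\Th)^2)$ by per-instance liveness; by the feasibility argument every honest node downloads that confirmed block and advances $\maxt$ past its slot within a further $O(\Tconf)$ slots, so the transaction enters every honest ledger within $\Tlive=\Omega((\kappa+\ln\Th)^2)$ overall.

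The hard part will be the second step: converting the amortized, per-interval capacity-versus-demand inequality into a uniform, high-probability bound on the secondary download backlog---and hence on the lag of $\maxt$---against a worst-case adversary who may concentrate primary-chain spam into bursts. This requires a queueing/drift argument showing that the constant slack $\epsilon_7$ keeps the backlog $O(\Tconf)$ despite fluctuations, together with a careful union bound over the polynomially many intervals on which Lemmas~\ref{lem:active_bandwidth} and~\ref{lem:passive_bandwidth} are invoked; the remaining steps are either direct invocations of Theorem~\ref{thm:security} or deterministic consequences of the ledger-merging rule.
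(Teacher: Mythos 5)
Your overall strategy matches the paper's: per-instance security via Theorem~\ref{thm:security} plus a union bound over the $m=\poly(\kappa)$ instances, a capacity-versus-demand computation using Lemmas~\ref{lem:active_bandwidth} and~\ref{lem:passive_bandwidth} to show the confirmed prefixes of all $m-1$ secondary chains can be tracked with the leftover bandwidth, and a deterministic merge argument for the combined ledger. The hypergeometric concentration step for the random partition is consistent with what the paper sketches informally in Section~\ref{sec:parallelchains-parallel}, and your liveness accounting ($\Tp+\Tconf$ for per-instance confirmation plus a further download delay) is the same as the paper's.

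There is, however, one genuine gap in your safety step. Theorem~\ref{thm:security} gives safety of instance $\Pi_{\index}$ only for the nodes that \emph{actively run} $\protocol$ on that instance, i.e., the $N/m$ nodes whose primary chain is $\index$. The remaining honest nodes do not run $\protocol$ on $\Pi_{\index}$ at all: they merely observe block headers, pick the longest header chain, and download its $\Tconf$-deep prefix, possibly being fed adversarially chosen (valid) header chains at adversarially chosen times. Your claim that ``each instance's confirmed chain is consistent across honest nodes and times (per-instance safety)'' silently extends the per-instance guarantee to these passive followers, which is exactly what needs to be argued. The paper closes this with a short reduction: any adversary that causes a passive follower to adopt a confirmed prefix inconsistent with the active participants' views could deliver the same headers and contents to an active participant and thereby violate the per-instance safety already established, a contradiction. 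Without some such argument the merged-ledger safety claim does not follow from per-instance safety alone. On the point you flag as the hard part---turning the amortized capacity/demand inequality into a uniform backlog bound---the paper is in fact lighter-handed than you propose: it simply applies the two lemmas on windows of length $\tilde{T}=\Omega(\kappa+\ln\Th)$ with a union bound over chains and windows, concludes that all confirmed secondary blocks are downloadable within $\tilde{T}$ slots, and folds that delay into $\Tlive$; your proposed queueing/drift argument is more careful but is not the route taken.
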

\begin{proof}
    Consider a particular protocol instance $\Pi_{\index}$. Define $\dC_{i,\index}$ to be the longest downloaded chain of node $i$ for protocol instance $\Pi_{\index}$. 
    From Theorem~\ref{thm:security}, for the given $\rho$, $\tau$ and $\Tconf=\Tp$,
    each protocol instance $\Pi_{\index}$ satisfies safety and liveness with respect to the ledger defined by $\dC_{i,\index}(t)\trunc{\Tconf}$ and for nodes $i$ for which $\Pi_{\index}$ is the primary chain, expect with probability $\negl(\kappa)$. By a union bound, safety and liveness for each protocol instance holds over $m=\poly(\kappa)$ protocol instances as well.
    Due to safety of $\Pi_{\index}$, $\dC_{i,\index}(t)\trunc{\Tconf} \preceq \dC_{j,\index}(t')\trunc{\Tconf}$ or \\ $\dC_{j,\index}(t')\trunc{\Tconf} \preceq \dC_{i,\index}(t)\trunc{\Tconf}$ for all time slots $t,t'$ and all honest nodes $i,j$
    for which $\Pi_{\index}$ is the primary chain.
    However, this holds even if $\Pi_{\index}$ is not the primary chain for node $i$ or $j$ because such nodes receive all block headers, determine the longest header chain based on them, and then download its confirmed prefix.
    More concretely, an adversary that pushes an inconsistent longest header chain to a node $j$ for which $\Pi_{\index}$ is a secondary chain, can also do so with headers and contents for a node $j'$ for which $\Pi_{\index}$ is the primary chain, thus causing a safety violation, which contradicts the earlier observation.
    Since all nodes have consistent confirmed chains (\ie $\dC_{i,\index}(t)\trunc{\Tconf} \preceq \dC_{j,\index}(t')\trunc{\Tconf}$ or $\dC_{j,\index}(t')\trunc{\Tconf} \preceq \dC_{i,\index}(t)\trunc{\Tconf}$) for each protocol instance and the combined ledger is derived 
    by ordering the blocks in all confirmed chains deterministically by their time slot,
    this implies safety of $\protocolpc$ (\ie, $\forall \text{ honest } i, j: \forall t, t': \LOG{i}{t} \preceq \LOG{j}{t'} \lor \LOG{j}{t'} \preceq \LOG{i}{t}$).
    To show liveness, we first show that confirmed secondary chain blocks are downloaded with bounded delay.
    From Lemma~\ref{lem:active_bandwidth}, in any interval of $\tilde{T}$ slots, the bandwidth of each node is not requested for downloads related to the primary chain but available to download secondary chain blocks in at least $\idleslots \tilde{T}(1-\epsilon_5)$ slots. Further, from Lemma~\ref{lem:passive_bandwidth}, in any interval of $\tilde{T}$ slots, the confirmed secondary chains grow by at most $\bwpassive \tilde{T}(1+\epsilon_6)$ blocks. These events happen with probability at least $1-\negl(\kappa)$ over a time horizon $\Th$ with $\tilde{T}=\Omega(\kappa+\ln\Th)$. By a union bound over $m=\poly(\kappa)$ number of chains, these hold with at least $1-\negl(\kappa)$ probability over all chains. Therefore, in $\tilde{T}$ slots, all confirmed blocks in $m-1$ secondary chains can be downloaded, where $m-1 = \frac{\idleslots \tilde{T}(1-\epsilon_5)}{\bwpassive \tilde{T}(1+\epsilon_6)}C\tau
    = \frac{\idleslots}{\bwpassive}C\tau(1-\epsilon_7)$ 
    for some $\epsilon_7$.
    Finally, note that liveness of each protocol instance guarantees liveness of the parallel chains construction. As per the transaction distribution rule described in Appendix~\ref{sec:appendix-parallel-chains-pseudocode}, each transaction belongs to a particular protocol instance. 
    By the liveness of each protocol instance, any transaction input to all honest nodes in time slot $t$, is included in $\dC_i(t)\trunc{\Tconf}$ for $t'\geq t + \Tp + \Tconf$ (see Proof of Lemma~\ref{lem:combinatorial} in Appendix~\ref{sec:appendix-combinatorial}) and all nodes $i$ for which the corresponding protocol instance is primary.
    Moreover, all honest nodes download confirmed secondary chains within $\tilde{T}$ delay.
    Therefore, $\protocolpc$ satisfies liveness with total latency $\Tp+\Tconf+\tilde{T} = \Omega((\kappa+\ln\Th)^2)$.
\end{proof}
\section{Conflux Inclusion Rule}
\label{sec:appendix-conflux}

In order to prevent the throughput from vanishing as the resilience $\beta$ approaches $1/2$, we incorporate a modified version of the block inclusion rule from Conflux \cite{conflux} (also used in \cite{parallel}). In addition to the hash of the parent block, the header of a block $b$ also contains references to (hashes of) at most $R$ blocks which have time slots earlier than $b$ and are neither in the prefix nor are referenced by any blocks in the prefix of $b$. Moreover, in each chain, at most one block from each time slot may be referred. An honest block producer chooses to include the $R$ newest (by time slot) fully downloaded blocks in their view that satisfy the above criteria. The parameter $R$ is to be determined below. Blocks containing references that do not follow the above criteria will be considered invalid. The consensus protocol still uses the longest chain rule.

Note that downloading and validating a block now requires (in addition to downloading the block itself) downloading the content of all blocks in its prefix and all blocks referenced by blocks in the prefix. Unlike \cite{parallel}, we do not consider the reference links to be transitive as this would blow up the number of referred blocks to be downloaded. The output ledger of a node $i$ in slot $t$ (i.e. $\LOG{i}{t}$) will be formed by considering its truncated longest chain (i.e. $\dC_i(t)\trunc{\Tconf}$) and inserting blocks referred by a block $b$ between the parent of $b$ and $b$, in increasing order of their time slot. This may result in some transactions becoming invalid due to conflicting transactions appearing before them in the ledger. Such transactions would be removed (\emph{sanitized}) while obtaining the ledger.

\subsection{Security}

For security of the inclusive protocol, it is enough to set the time slot size to be $\tau = \DeltaHeader + \frac{\bwslot R}{\bwtime}$ (previously $\DeltaHeader + \frac{\bwslot}{\bwtime}$) where $\bwslot$ is set according to Theorem~\ref{thm:security} (we do the analysis below for the freshest block download rule, but it can be done for the equivocation avoidance download rule as well). Since each block contains at most $R$ references, the number of blocks to be downloaded in the prefix of any honest freshest block increases at most by a factor of $R$. By setting the slot size as above, we ensure that the honest block proposed in every uniquely successful slot is downloaded (along with its prefix and references therein) within the same slot.

\subsection{Single Chain Throughput}

\begin{lemma}
\label{lem:throughput_inclusive}
    If $R=\gamma p(1-\epsilon_7)$, there exists a constant $T_4$ such that for any honest node $i$ and time slots $t_1,t_2\geq t_1+T,t\geq t_2+\Tconf$ with $T\geq T_4$, $\LOG{i}{t}$ contains at least $\tpinc T(1-\epsilon_8)$ blocks proposed by honest nodes in slots $(t_1,t_2]$, with probability at least $1-\exp(-\alpha_4 T)$, where $\tpinc = \pu$.
\end{lemma}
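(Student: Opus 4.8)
The plan is to mirror the proof of Lemma~\ref{lem:throughput}, but to count \emph{all} honestly produced blocks from uniquely successful slots rather than only those surviving on the longest chain; the Conflux references are what let the off-chain honest blocks enter the ledger, improving the rate from $\theta=2\pu-p$ to $\tpinc=\pu$. First I would fix the interval $(t_1,t_2]$ and observe, by a Chernoff bound as in Lemma~\ref{lem:throughput}, that $\Lint{t_1,t_2}\geq\pu T(1-\epsilon)$ with probability $1-\exp(-\Omega(T))$; each uniquely successful slot carries exactly one honest block, so it suffices to show that almost all of these honest blocks appear in $\LOG{i}{t}$.

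The core is an inclusion claim: for $t\geq t_2+\Tconf$, every honest block $b$ from a uniquely successful slot $s\in(t_1,t_2-\Tconf]$ is either on the confirmed longest chain $\dC_i(t)\trunc{\Tconf}$ or referenced by a block on it. I would combine three ingredients already available. (i) Under $\grpromempty$, $b$ is downloaded by every honest node within slot $s$, so $b$ is a fully-downloaded, referenceable candidate for every honest producer after $s$. (ii) By $\FrequentPivots$ (Lemma~\ref{lem:frequent_pivots_prob_bound}) there is a unique pivot slot $s^*\in(s,s+\Tp]$, and by Lemma~\ref{lem:fresh_block_base_case} the pivot block $B^*$ lies on every honest node's longest downloaded chain; since $s^*\leq s+\Tp\leq t_2\leq t-\Tconf$ for $s\leq t_2-\Tconf$, the block $B^*$ is confirmed and on $\dC_i(t)\trunc{\Tconf}$. (iii) When its honest producer applies the inclusion rule, $B^*$ references the $R$ newest (one-per-slot, not already in or referenced by its prefix) fully-downloaded blocks; if $b$ is not already captured by the prefix it is a candidate, and it is referenced provided the number of distinct block-bearing slots lying strictly between the previous pivot and $s^*$ does not exceed $R$. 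Bounding that count by $\Bint{\cdot,\cdot}$ over an interval of length at most $\Tp=\gamma$ and invoking concentration, $\Bint{\cdot,\cdot}\leq p\gamma(1+o(1))$ with overwhelming probability, is exactly where the choice $R=\gamma p(1-\epsilon_7)$ enters, together with the $(1-\epsilon_8)$ slack.

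Having established the inclusion claim for all but the last $O(\Tconf)$ slots of $(t_1,t_2]$ (blocks whose pivot falls beyond the confirmation horizon, an edge loss of at most $\approx p\Tconf$ blocks, absorbed into the slack by taking $T_4$ large relative to $\Tconf$), I would conclude that $\LOG{i}{t}$ contains at least $\pu T(1-\epsilon_8)$ distinct honest blocks from $(t_1,t_2]$, since each such block, once on-chain or referenced, is inserted into the ledger by the Conflux ordering. A final union bound over the three failure events---the uniquely-successful-slot count, the frequent-pivots event, and the per-gap block-count concentration---yields a total failure probability $\exp(-\alpha_4 T)$ for suitable $\alpha_4$ and threshold $T_4$.

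The main obstacle is step (iii): guaranteeing that each honest off-chain block is actually referenced \emph{before it is crowded out} under the ``$R$ newest'' rule. This requires carefully matching $R$ against the number of referenceable slots accumulated between consecutive pivots, checking that adversarial equivocations cannot exhaust the reference budget (the one-reference-per-slot restriction caps their cost by the number of adversarial slots $\Nint{\cdot,\cdot}$, not by the unbounded number of equivocations), and verifying that the few blocks missed by one pivot are not permanently lost. The interplay of the pivot spacing $\gamma$, the reference budget $R=\gamma p(1-\epsilon_7)$, and the concentration of $\Bint{\cdot,\cdot}$ is the delicate quantitative heart of the argument; everything else repackages the chain-growth and pivot machinery of Lemmas~\ref{prop:chain_growth}, \ref{lem:fresh_block_base_case}, and~\ref{lem:frequent_pivots_prob_bound}.
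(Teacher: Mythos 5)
Your proposal is correct and follows essentially the same route as the paper's (very terse) proof: unique pivot slots recur in every window of $\Tp$ slots, their blocks persist on every honest node's longest downloaded chain, and the reference budget $R\approx\gamma p$ is tuned so that each pivot block's references sweep up all honest blocks from uniquely successful slots since the previous pivot. If anything you are more careful than the paper, which only bounds the number of \emph{uniquely successful} slots between consecutive pivots, whereas your bound via $\Bint{\cdot,\cdot}$ correctly accounts for downloaded adversarial blocks competing for the `$R$ newest' reference slots (and note that the paper's own proof sets $R=\gamma p(1+\epsilon_7)$, the sign actually needed for a high-probability upper bound, despite the $1-\epsilon_7$ in the lemma statement).
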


Lemma~\ref{lem:throughput_inclusive} indicates that the average throughput of a single instance of the inclusive protocol is at least $\frac{\tpinc}{\tau}$ blocks per second.

To prove Lemma~\ref{lem:throughput_inclusive}, we only need to show that every honest block from a uniquely successful slot is included in the longest chain of every node either directly on the chain or through a reference. This will be achieved by setting $R$ to be large enough so that in any interval of slots with $R$ block production opportunities, at least one honest block is included in the longest chain. Then such an honest block would include references to the $R$ most recent blocks which would collectively include (at least) all honest blocks from uniquely successful slots.

\begin{proof}
From the security analysis (Lemma~\ref{lem:frequent_pivots_prob_bound}), we have that $\Prob{\FrequentPivots} \geq 1-\negl(\kappa)$ where $\FrequentPivots$ is the event
\begin{equation}
\forall t \colon \exists t' \in (t, t+\Tp]\colon\  \TPivot{t'} \land \Unique{t'}.
\end{equation}
Moreover, we have shown in Lemma~\ref{lem:combinatorial} that the honest block proposed in a unique pivot slot remains in the longest downloaded chain of every honest node. This satisfies our requirement. Thus, we need to set $R=\gamma p(1+\epsilon_7)$ so that there are at most $R$ uniquely successful slots between two pivot slots, i.e.
\begin{equation}
    \forall t \colon \Prob{\Lint{t,t+\Tp} > R} \geq 1 - \exp\left(\frac{\epsilon_7^2}{\epsilon_7+2}\Tp p\right).
\end{equation}
Therefore, the lemma holds under $T_4 = \Tp$ and the conditions from Lemma~\ref{lem:frequent_pivots_prob_bound}.
\end{proof}

\subsection{Parallel Chains Throughput}

We still have honest nodes idle (not downloading any blocks) in at least $\idleslots \geq \frac{1-p}{2}$ fraction of slots. The average bandwidth required to download a confirmed chain still remains at $p$ blocks per slot. Therefore, we can increase the total throughput by constructing $m = 1 + \frac{\idleslots C}{\bwpassive/\tau}$ parallel chains resulting in aggregate throughput
\begin{IEEEeqnarray}{rClr}
    \TP_m &=& \left(1 + \frac{\idleslots}{\bwpassive} C\tau\right) \frac{\tpinc}{\tau}   \nonumber\\
    &\geq& \frac{(1-p)\pu}{2p} \, C    \qquad\text{ (for the `freshest block' rule)}   \nonumber\\
    &\geq& \frac{(1-p)}{4} \, C \text{ blocks per second}.
\end{IEEEeqnarray}
This is a constant fraction of the capacity $C$ which does not vanish as $\beta\to1/2$.

\end{document}